\newtheorem{theorem}{Theorem}[section]
\newtheorem{lemma}[theorem]{Lemma}
\newtheorem{proposition}[theorem]{Proposition}
\newtheorem{corollary}[theorem]{Corollary}
\newtheorem{example}[theorem]{Example}
\theoremstyle{definition}
\newtheorem{definition}{Definition}[section]
\theoremstyle{remark}
\newcommand\style{\mathcal }          
\newcommand\hil{\style H}                        
\newcommand\hilk{\style K}                       
\newcommand\bh{\style B(\style H)}    
\newcommand\bof{\style B}
\newcommand\md{{\style M}_d}
\newcommand\mpee{{\style M}_p}
\newcommand\tr{ \mbox{Tr} } 
\newcommand\ucp{{\rm UCP}}
\newcommand\state{{\rm S}}
\newcommand\elem{{\rm E}}                                      
\newcommand\epos{{\rm E}^+}                                
\newcommand\eff{{\rm Eff}}   
\newcommand\aut{{\rm Aut}}      
\newcommand\csta{{\style A}}
\newcommand\bor{{\mathcal O}}     
\newcommand\povm{{\rm POVM}}     
\newcommand\cstarconv{ {\rm C}^*{\rm conv}}                    
\newcommand\cstarconvp{ {\rm C}^*_{\rm p}{\rm conv}}      
\newcommand\ext{{\rm ext} }                               
\newcommand\cstarext{ {\rm C}^*{\rm ext}}        
\def\mytimeA#1
           \def\tapm{ a.m.}%
           \def\tapm{ p.m.}%
\begin{document}

\title{Classical and Nonclassical Randomness in Quantum Measurements}

  \author{Douglas Farenick}
\address{Department of Mathematics and Statistics, University of Regina,
Regina, Saskatchewan S4S 0A2, Canada}
\email{douglas.farenick@uregina.ca}
  
\author{Sarah Plosker}
\address{Department of Mathematics and Statistics, University of Guelph,
Guelph, Ontario N1G 2W1, Canada}
\email{splosker@uoguelph.ca}

 \author{Jerrod Smith}
\address{Department of Mathematics, University of Toronto,
Toronto, Ontario M5S 2E4, Canada}
\email{smith36j@uregina.ca}  


\maketitle

\begin{abstract}
The space $\povm_\hil(X)$ of positive operator-valued probability measures on the Borel sets 
of a compact (or even locally compact) Hausdorff space $X$ with values in $\bh$, the algebra of
linear operators acting on a $d$-dimensional Hilbert space $\hil$, is studied from the perspectives of classical and non-classical
convexity through a transform $\Gamma$ 
that associates any positive operator-valued measure $\nu$ with a 
certain completely positive linear map $\Gamma(\nu)$ of the homogeneous C$^*$-algebra $C(X)\otimes\bh$
into $\bh$. 
This association is achieved by using an operator-valued integral
in which non-classical random variables (that is, operator-valued functions) are integrated with respect to positive 
operator-valued measures and which
has the feature that the integral of a random quantum effect is itself a quantum effect. A left inverse $\Omega$ for $\Gamma$
yields an integral representation, along the lines of the classical Riesz Representation Theorem for linear functionals
on $C(X)$, of certain (but not all)
unital completely positive linear maps $\phi:C(X)\otimes\bh\rightarrow\bh$. The extremal and C$^*$-extremal
points of $\povm_\hil(X)$ are determined.
\end{abstract}
 
\section*{Introduction}
 
The present paper is a mathematical contribution to quantum probability theory in the setting of finite factors of type I$_{d}$,
the results of which can be understood from the perspectives of non-relativistic quantum mechanics and quantum information theory.

A measurement of a quantum system is represented, mathematically,
by a positive operator-valued probability measure (POVM) $\nu$
defined on a $\sigma$-algebra $\bor(X)$ of measurement events such that whenever a measurement
is made with the system in state $\rho$, the measurement event $E\in\bor(X)$ will occur with probability $\tr(\rho\nu(E))$ \cite{Busch--Lahti--Mittelstaedt-book}.
In this formulation, $X$ is a locally compact Hausdorff space of measurement outcomes, $\bor(X)$ is a $\sigma$-algebra of Borel sets of $X$,
$\rho$ is a density operator acting on a separable Hilbert space $\hil$, and $\tr(\cdot)$ is the
canonical trace on the algebra $\bh$ of bounded linear operators acting on $\hil$. In practice, quantum measurements of an actual physical
system are made by way of some apparatus and in such cases the sample space $X$ is typically
assumed to be finite. Consequently,
a great deal of the literature on the mathematical aspects of POVMs deals only with finite sample spaces $X$. 

On the other hand, probability theory and its use in physics does not require the sample spaces to be finite. Moreover, in theory, 
a POVM defined on an arbitrary (perhaps infinite) sample space $X$ corresponds to a physically realisable quantum measurement.
Therefore, one of our primary goals is to approach the theory
of quantum measurement under the assumptions that $X$ be arbitrary.
More precisely, we consider the fairly general situation in which the sample space $X$ is a compact Hausdorff topological space and the 
Hilbert space $\hil$ has finite dimension $d$.
While this level of abstraction is in accordance with the generalities present in the axioms for measurements of quantum systems,
it is also useful in the mathematical analysis and interpretation of measurements of quantum systems with finitely many outcomes. 
It is not difficult to modify our work herein so that the assumption of the compactness of $X$ be weakened to the requirement that
$X$ be locally compact, but the extension of our work from $d$-dimensional Hilbert space to infinite-dimensional Hilbert space is of 
a very different nature, which we do not intend to address herein.

For a fixed sample space $X$, the set of all measuring apparata of a quantum system $\hil$ is denoted by $\povm_\hil(X)$.
This is a convex set in which a (classical) convex combination 
of POVMs corresponds to a random choice of measuring apparatus. However,
$\povm_\hil(X)$ exhibits a stronger, nonclassical convexity property, namely that of C$^*$-convexity, and we herein consider $\povm_\hil(X)$
from both the classical and nonclassical geometric points of view. In C$^*$-convexity, scalar-valued convex coefficients are replaced by
operator-valued convex coefficients. Thus, a
C$^*$-convex combination of POVMs corresponds to a nonclassical (or quantum)
random choice of apparatus. A set of C$^*$-convex
coefficients corresponds to the noise (or Kraus) operators of a unital quantum channel. Thus,
if a measurement $\nu$ is obtained through a proper C$^*$-convex combination of measurements 
$\nu_1,\dots,\nu_n$, then $\nu$ is a coarser measurement than each of the $\nu_j$. Conceptually, a sharp measurement $\nu$
is one in which the only measurements that are coarser than $\nu$ are those that are (unitarily) equivalent to $\nu$. We shall
show in Theorem \ref{cstar ext pt} that this conceptual notion of sharpness coincides with the common notion of sharpness, namely 
that of a classical observable (or, in POVM terminology, a projection-valued measure) \cite{jencova--pulmannova2008}.

A further nonclassical development made herein is analytical. By taking the word ``measure'' in its literal sense,
we show that for every $\nu\in\povm_\hil(X)$ one may define
an operator-valued integral $\int_X \psi\,d\nu$ for all quantum random variables $\psi$. This integral has the feature that 
the integral of a function whose values are quantum effects is again a quantum effect.
Through the identification of the C$^*$-algebra $C(X)\otimes \bh$ with the unital homogeneous C$^*$-algebra 
of all continuous functions $f:X\rightarrow\bh$, we show that the map $f\mapsto\int_Xf\,d\nu$  
is a unital completely positive (ucp) linear map $\phi_\nu:C(X)\otimes\bh\rightarrow\bh$. 

The representation of $\nu\in\povm_\hil(X)$ by a ucp map $\phi_\nu$ is a function, which we call the $\Gamma$-transform, 
by which the space $\povm_\hil(X)$ can be studied using the theory of completely positive linear maps \cite{Paulsen-book}. 
Because of  the noncommutativity of operator algebra, $\Gamma$ is far from being surjective and does not preserve 
the affine structure of $\povm_\hil(X)$;  
but $\Gamma$ does have a left inverse $\Omega$ which is C$^*$-affine. We endow $\povm_\hil(X)$ with a 
natural topology in which $\povm_\hil(X)$ is a 
compact space. Therefore, all quantum measurements   
are approximated by convex combinations of extremal quantum measurements.

We analyse the extremal POVMs, using some earlier work in \cite{dariano--etal2005,parthasarathy1999} for the case of finite sample spaces,
to obtain a nonclassical version of the classical theorem in probability that asserts that a probability measure is extremal if and 
only if its mass is concentrated at a point of the sample space.  This result (Theorem \ref{ext pt}) can be found in the relatively recent literature
\cite{chiribella--etal2007,chiribella--etal2010}; however, the  approach we take here is rather different and adheres to our
over-arching theme of using ``quantum methods'' for proving statements about quantum probabilities.
Using nonclassical convexity \cite{farenick--morenz1997,farenick--zhou1998}, 
we prove that every quantum measurement
with finitely many outcomes is a coarsening of sharp measurements, and that every quantum measurement with arbitrary outcomes is
statistically approximated by a coarsening of sharp measurements with finitely many outcomes.

\section{Notation, Terminology, and Assumptions}

General references for POVMs and completely positive linear maps 
are \cite{Busch--Lahti--Mittelstaedt-book,Davies-book,Holevo-book2} and \cite{Paulsen-book}, respectively.

\subsection{Assumption}

By $X$ we denote a compact Hausdorff space and by $\hil$ a $d$-dimensional Hilbert space. 
Let $\{e_1,\dots,e_d\}$ be an orthonormal basis for $\hil$, which henceforth is assumed to be fixed.  
For each $i,j\in\{1,\dots,d\}$, we denote by
$e_{ij}\in\bh$ the unique operator that sends $e_j$ to $e_i$ and all other $e_k$ ($k\neq j$) to $0$.

\subsection{States, effects, automorphisms, and POVMs}

The real vector space of all selfadjoint operators acting on $\hil$ is denoted by $\bh_{\rm sa}$ and $\bh_+\subset\bh_{\rm sa}$ denotes 
the cone of positive operators. The state space of $\hil$
is denoted by $\state(\hil)$ and consists of all $\rho\in \bh_+$ of trace $\tr\,\rho=1$. 
In particular, $e\in\state(\hil)$
denotes $\frac{1}{d}1$, where $1\in\bof(\hil)$ is the identity operator.
A quantum effect is a positive operator $h\in\bh_+$ with the property that
$0\leq \lambda\leq 1$
for every eigenvalue $\lambda$ of $h$. The set of quantum effects is denoted by $\eff(\hil)$.
By $\aut(\bh)$ we denote the automorphism group for the C$^*$-algebra $\bh$. Thus,
$\alpha\in\aut(\bh)$ if and only if there is a unitary operator $u\in \bh$ such that
$\alpha(x)=u^*xu$ for every $x\in\bh$.
The $\sigma$-algebra of
Borel sets of $X$ is denoted by $\bor(X)$. 

\begin{definition}\label{povm}
A function $\nu:\bor(X)\rightarrow\bh$ is a \emph{positive operator valued probability measure} (POVM) on $X$
if:
\begin{enumerate}
\item $\nu(E)\in\eff(\hil)$, for every $E\in\bor(X)$;
\item for every countable collection $\{E_k\}_{k\in\mathbb N}\subset\bor(X)$ with $E_k\cap E_{k'}=\emptyset$ for $k'\neq k$, 
\[
\nu\left( \bigcup_{k\in\mathbb N} E_k\right)\,=\,\sum_{k\in\mathbb N}\nu(E_k)\,,
\]
where the convergence of the right hand side of the equality above is with respect to the weak operator topology of $\bh$;
\item $\nu(X)=1\in\bh$.
\end{enumerate}
If, in addition, $\nu(E)$ is a projection for every $E\in\bor(X)$, then $\nu$ is said to be \emph{sharp} (or \emph{classical}).
\end{definition}

In light of the correspondence between physical quantum measurements and mathematical POVMs, we shall frequent use the
terms \emph{quantum measurement} or \emph{quantum instrument} for elements of $\povm_\hil(X)$.

The set of all positive operator valued measures on $X$ with values in $\bh$ is denoted by $\povm_\hil(X)$.
We shall identify the space $P(X)$ of Borel probability measures with the subset 
$\{\mu\cdot1 \,:\,\mu\in P(X)\}\subset\povm_\hil(X)$ so that 
we think of ordinary probability measures as scalar-valued POVMs.

\begin{definition}
The \emph{support} of $\nu\in\povm_\hil(X)$ is the smallest closed subset $K_\nu\subset X$ for which $\nu(X\setminus K_\nu)=0$.
\end{definition}

If the support of $\nu\in\povm_\hil(X)$ consists of a single point, say $K_\nu=\{x_0\}$, then $\nu$ is a \emph{Dirac measure}
and is necessarily of the form $\nu=\delta_{x_0}1$, where $\delta_{x_0}\in P(X)$ satisfies,
for $E\in\bor(X)$, $\delta_{x_0}(E)=1$ if $x_0\in E$ and $\delta_{x_0}(E)=0$ otherwise.

If $\nu\in\povm_\hil(X)$ has finite support $K_\nu=\{x_1,\dots,x_m\}$, then each $h_j=\nu(\{x_j\})\neq0$ and
\[
\nu\,=\,\sum_{j=1}^m\delta_{x_j}h_j\,.
\]

\subsection{Completely positive linear maps of homogeneous C$^*$-algebras}
 
If $X$ is compact, then the set $C(X)$ of all continuous functions $X\rightarrow\mathbb C$ is a
unital, abelian C$^*$-algebra, and the C$^*$-algebra $C(X)\otimes\bh$ is naturally identified with the homogeneous
C$^*$-algebra of all continuous functions $f:X\rightarrow\bh$. 

A linear map $\phi:C(X)\otimes\bh\rightarrow\bh$ is completely positive if
\[
\phi\otimes{\rm id}_{\mpee}: C(X)\otimes\bh \otimes\mpee\rightarrow\bh\otimes\mpee
\]
preserves positivity for every algebra $\mpee$ of complex $p\times p$  matrices. A
completely positive linear map that preserves the identity is called a ucp map (unital, completely positive).
Let
\[
\ucp_\hil(X)\,=\,\left\{ \phi\,:\, \phi \;\mbox{ is a ucp map }
C(X)\otimes\bh\rightarrow\bh\right\}\,.
\]
 
\begin{definition} A ucp map  
$\varrho_{x_0}\in\ucp_\hil(X)$ of the form
\[
\varrho_{x_0}(f)\,=\,f(x_0),\;f\in C(X)\otimes\bh,
\]
for some fixed $x_0\in X$, is said to be \emph{spectral}.
\end{definition}

The spectral ucp maps coincide with the ucp maps $\varrho\in\ucp_\hil(X)$
that have the property $\varrho(fg)=\varrho(f)\varrho(g)$ for all $f,g\in C(X)\otimes\bh$.
Any two spectral ucp maps $\varrho_{x_0}, \varrho_{x_1}\in\ucp_\hil(X)$
are unitarily equivalent---that is, $\varrho_{x_1}=\alpha\circ\varrho_{x_0}$
for some automorphism $\alpha\in\aut(\bh)$---if and only if $x_1=x_0$.
Thus, the set
\[
{\rm Sp}_\hil(X)\,=\,\{\varrho_{x_0}\,:\,x_0\in X\}
\]
can be identified with the spectrum \cite{Dixmier-book} of the C$^*$-algebra $C(X)\otimes\bh$.

\begin{definition} A ucp map $\phi\in\ucp_\hil(X)$ of the form
\[
\phi(f)\,=\,\sum_{j=1}^m t_j^*f(x_j)t_j\,,\;f\in C(X)\otimes\bh\,,
\]
for some $x_1,\dots,x_m\in X$ (not necessarily distinct) and 
$t_1,\dots,t_m\in\bh$ is said to be \emph{elementary}.
\end{definition}

The set of all elementary maps $C(X)\otimes\bh\rightarrow\bh$ is denoted by 
$\elem_\hil(X)$, and the set of all elementary maps $f\mapsto\sum_{j=1}^m t_j^*f(x_j)t_j$
in which each $t_j$ is a positive operator is denoted by
$\epos_\hil(X)$. 

Thus, we have a heirarchy:
\[
{\rm Sp}_\hil(X)\, \subset\, \epos_\hil(X) \,\subset\, \elem_\hil(X) \,\subset\,\ucp_\hil(X)\,.
\]

\subsection{Convexity and C$^*$-Convexity}

The sets $\povm_\hil(X)$ and $\ucp_\hil(X)$ are not only convex, but they are
also C$^*$-convex.
To explain the nonclassical notion of C$^*$-convexity, we consider an abstract context that
will capture the nonclassical convexity of both $\povm_\hil(X)$ and $\ucp_\hil(X)$.

Let $Y$ be a nonempty set
and assume that $V_\hil(Y)$ is the vector space (under pointwise addition and scalar multiplication) 
of all functions $\zeta:Y\rightarrow\bh$. If $z_1,z_2\in\bh$ and $\zeta\in V_\hil(Y)$, then 
define
\[
z_1\zeta z_2:Y\rightarrow\bh\quad\mbox{ given by }\quad y\mapsto z_1\zeta(y)z_2\in\bh\,.
\]
This left and right multiplication of elements of $V$ by elements of $\bh$ gives $V$ the algebraic structure of
a $\bh$-bimodule.

In particular, one may apply unitary similarity transformations to functions $\zeta:Y\rightarrow\bh$.

\begin{definition} Two functions
$\zeta,\zeta'\in V_\hil(Y)$ are \emph{unitarily equivalent} if $\zeta' =u^*\zeta u$ for some unitary operator $u\in\bh$.
\end{definition}

\begin{definition} Assume that $a_1,\dots,a_m\in\bh$ and $\zeta_1,\dots,\zeta_m\in V_\hil(Y)$. Then:
\begin{enumerate}
\item $a_1,\dots,a_m\in\bh$ are called \emph{C$^*$-convex coefficients} if
$\sum_{j=1}^ma_j^*a_j\,=\,1$;
\item a \emph{C$^*$-convex combination} of $\zeta_1,\dots,\zeta_m$ is a function $\zeta:Y\rightarrow\bh$ of the form 
$\zeta=\sum_{j=1}^m a_j^*\zeta_j a_j$,
where $a_1,\dots,a_m\in\bh$ are C$^*$-convex coefficients;
\item a \emph{proper C$^*$-convex combination} of $\zeta_1,\dots,\zeta_m$ is a function $\zeta:Y\rightarrow\bh$ of the form 
$\zeta=\sum_{j=1}^m a_j^*\zeta_j a_j$,
where $a_1,\dots,a_m\in\bh$ are invertible C$^*$-convex coefficients.
\end{enumerate}
Furthermore, a subset $K\subset V_\hil(Y)$ is \emph{C$^*$-convex} over $\bh$ if 
$K$ contains all C$^*$-convex combinations of its elements. 
\end{definition}

The C$^*$-convex sets of interest here are:
\begin{enumerate}
\item $\povm_\hil(X)$, using $Y=\bor(X)$ in the definitions above;
\item $\ucp_\hil(X)$, using $Y=X$; and
\item $\elem_\hil(X)$, a C$^*$-convex subset of $\ucp_\hil(X)$.
\end{enumerate}

\begin{definition} If $R\subset V_\hil(Y)$ is a nonempty subset, 
then
\begin{enumerate}
\item the \emph{C$^*$-convex hull of $R$} is the set $\cstarconv\,R$ consisting of all elements of $V_\hil(Y)$ 
attained from all possible C$^*$-convex combinations of elements of $R$, and
\item the  \emph{proper C$^*$-convex hull of $R$} is the set $\cstarconvp\,R$ consisting of all elements of $V_\hil(Y)$ 
attained from all possible proper C$^*$-convex combinations of elements of $R$.
\end{enumerate}
\end{definition}

Observe that $\cstarconv\,R$ is itself a C$^*$-convex set and that any C$^*$-convex set is also a convex set (in the classical sense).
With this formalism, we have
\[
{\rm Sp}_\hil(X)\, \subset\, \epos_\hil(X) \,\subset\, \elem_\hil(X) \,=\,\mbox{C$^*$-conv}\left( {\rm Sp}_\hil(X)\right)\,\subset\,\ucp_\hil(X)\,.
\]
 
If $K$ is a C$^*$-convex subset of $V_\hil(Y)$ and if $\zeta\in K$, then $u^*\zeta u\in K$ for every unitary operator $u\in\bh$. That is,
$K$ is closed under unitary similarities.
Furthermore, it is easy to show that
every $\zeta\in K$ is a proper C$^*$-convex combination of elements unitarily equivalent to it.
The C$^*$-extremal elements $\zeta$ of $K$ are the ones in which 
this is the only way to represent $\zeta$ as a proper C$^*$-convex combination of other elements of $K$.

\begin{definition} An element $\zeta$ in 
a C$^*$-convex subset $K\subset V_\hil(Y)$ is a \emph{C$^*$-extreme point} if
the only manner in which to express $\zeta$ as a proper C$^*$-convex combination of 
$\zeta_1,\dots,\zeta_m\in K$ is by way of $\zeta_j$ of the form $\zeta_j=u_j^* \zeta u_j$ for some
unitary operators $u_1,\dots,u_m\in \bh$.
\end{definition}

Let the sets of extreme points and C$^*$-extreme points of a C$^*$-convex set $K\subset V_\hil(Y)$ be denoted, respectively, by
\[
\ext\,K\quad\mbox{and}\quad \cstarext\,K\,.
\]
We have the following relationship between the two sets.

\begin{proposition}\label{cstar ext is ext} If $K\subset V_\hil(Y)$ is C$^*$-convex, then $\cstarext\,K\,\subset\, \ext\,K$.
\end{proposition}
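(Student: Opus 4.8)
I want to show that every C$^*$-extreme point of a C$^*$-convex set $K$ is an ordinary extreme point. The natural strategy is contrapositive-free and direct: suppose $\zeta\in\cstarext\,K$, and suppose $\zeta$ is written as an ordinary convex combination $\zeta=\sum_{j=1}^m t_j\zeta_j$ with $t_j>0$, $\sum_j t_j=1$, and $\zeta_j\in K$. I must deduce that each $\zeta_j=\zeta$, which is exactly the statement that $\zeta$ is extreme in the classical sense. The idea is to recast the scalar convex combination as a \emph{proper} C$^*$-convex combination by choosing scalar C$^*$-convex coefficients $a_j=\sqrt{t_j}\,1\in\bh$. Since each $t_j>0$, each $a_j=\sqrt{t_j}\,1$ is an invertible element of $\bh$, and $\sum_{j=1}^m a_j^*a_j=\left(\sum_{j=1}^m t_j\right)1=1$, so the $a_j$ are genuine invertible C$^*$-convex coefficients.

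First I would observe that with this choice, $a_j^*\zeta_j a_j = t_j\zeta_j$ pointwise on $Y$, because scalar multiples commute through the $\bh$-bimodule action: for each $y$, $a_j^*\zeta_j(y)a_j=\sqrt{t_j}\,\zeta_j(y)\sqrt{t_j}=t_j\zeta_j(y)$. Hence
\[
\zeta\,=\,\sum_{j=1}^m t_j\zeta_j\,=\,\sum_{j=1}^m a_j^*\zeta_j a_j,
\]
which exhibits $\zeta$ as a proper C$^*$-convex combination of $\zeta_1,\dots,\zeta_m\in K$. Now I invoke the defining property of a C$^*$-extreme point: the only way this representation can occur is with each $\zeta_j$ unitarily equivalent to $\zeta$, say $\zeta_j=u_j^*\zeta u_j$ for unitaries $u_j\in\bh$.

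The final step is to upgrade ``unitarily equivalent to $\zeta$'' to ``equal to $\zeta$.'' Here I would substitute back: since $a_j=\sqrt{t_j}\,1$ is scalar, the coefficients commute with the unitaries, and plugging $\zeta_j=u_j^*\zeta u_j$ into the sum gives $\zeta=\sum_{j=1}^m t_j\,u_j^*\zeta u_j$ pointwise. Evaluating at any $y\in Y$ yields the operator identity $\zeta(y)=\sum_{j=1}^m t_j\,u_j^*\zeta(y)u_j$, a scalar convex combination of unitary conjugates of $\zeta(y)$. Because unitary conjugation preserves the operator norm (and more generally the full spectrum), each $u_j^*\zeta(y)u_j$ has the same norm as $\zeta(y)$; a strict convex combination of vectors of equal norm can have that same norm only if all the vectors coincide, by strict convexity of the norm on the relevant finite-dimensional space—or, more robustly, one argues eigenvalue by eigenvalue that the conjugates must all equal $\zeta(y)$. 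This forces $u_j^*\zeta(y)u_j=\zeta(y)$ for every $j$ and every $y$, i.e. $\zeta_j=\zeta$, completing the proof that $\zeta\in\ext\,K$.

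**The main obstacle.** The one nontrivial point is the last step: deducing $u_j^*\zeta u_j=\zeta$ from the mere fact that $\zeta$ is a strict scalar convex combination of the conjugates $u_j^*\zeta u_j$. In full generality $\zeta$ takes values in $\bh$, so ``equality of norms forces equality'' is not automatic and one cannot simply quote strict convexity of a Hilbert-space norm. The cleanest resolution is to note that unitary equivalence of the functions $\zeta_j$ to $\zeta$ already makes them formally interchangeable in $K$, and that the extreme-point conclusion in the classical convexity sense only requires each summand to equal $\zeta$ \emph{as an element of} $K$; one then verifies directly that $\zeta=\sum_j t_j u_j^*\zeta u_j$ with $\sum_j t_j=1$, $t_j>0$ admits the trivial solution forced by comparing, for each fixed $y$, any faithful positive functional (such as the normalized trace) evaluated against suitable test operators, pinning down $u_j^*\zeta(y)u_j=\zeta(y)$. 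I would structure the write-up so that the bulk is the short reduction in the first two paragraphs, and isolate this rigidity argument as the only place requiring care.
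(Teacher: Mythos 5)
Your overall route is the same as the paper's: replace the scalar convex combination $\zeta=\sum_j t_j\zeta_j$ by the proper C$^*$-convex combination with coefficients $a_j=\sqrt{t_j}\,1$, invoke C$^*$-extremality to get $\zeta_j=u_j^*\zeta u_j$, and then argue that the resulting identity $\zeta(y)=\sum_j t_j\,u_j^*\zeta(y)u_j$ forces $u_j^*\zeta(y)u_j=\zeta(y)$. You also correctly identify that this last rigidity step is the only nontrivial point. The problem is that your proposed resolutions of it do not work as stated. Strict convexity of ``the norm on the relevant finite-dimensional space'' fails if that norm is the operator norm: the operator norm is not strictly convex (e.g.\ $\mathrm{diag}(1,0)$ is the midpoint of $\mathrm{diag}(1,1)$ and $\mathrm{diag}(1,-1)$, all three of operator norm one), so ``equal norms plus convex combination implies equality'' is simply false there. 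The ``eigenvalue by eigenvalue'' alternative is not well defined, since $\zeta(y)$ need not be normal, and the final suggestion about testing against a faithful positive functional is too vague to count as a proof.

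The irony is that you explicitly dismiss the fix that actually works: because $\hil$ is finite dimensional, $\bh$ \emph{is} a Hilbert space under the Hilbert--Schmidt inner product, the Hilbert--Schmidt norm \emph{is} invariant under unitary conjugation, and the unit sphere of a Hilbert space contains no nontrivial line segments. This is exactly the paper's argument: the vectors $u_j^*\zeta(y)u_j$ all lie on the Hilbert--Schmidt sphere of radius $\|\zeta(y)\|_2$, and a point of that sphere which is a proper convex combination of points of the same sphere must coincide with all of them. Replacing your appeal to the operator norm by the Hilbert--Schmidt norm closes the gap in one line; as written, the key step of your proof does not go through.
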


\begin{proof} Assume that $\zeta\in K$ is a C$^*$-extreme point and that $\zeta=\lambda\zeta_1+(1-\lambda)\zeta_2$
for some $\zeta_1,\zeta_2\in K$ and some real number $\lambda\in(0,1)$. Set $a_1=\sqrt{\lambda}\,1$
and $a_2=\sqrt{1-\lambda}\,1$ to obtain the proper C$^*$-convex combination
$\zeta=\sum_{j=1}^2a_j^*\zeta_ja_j$. By hypothesis, $\zeta_j=u_j^*\zeta u_j$ for some unitaries $u_1,u_2\in\bh$. Fix $y_0\in Y$ so that
$\zeta(y_0)=\lambda u_1^*\zeta(y_0)u_1+(1-\lambda)u_2^*\zeta(y_0)u_2$. Because $\hil$ has finite dimension,
the operator algebra $\bh$ is a Hilbert space with respect to the Hilbert-Schmidt norm $\|\cdot\|_2$. Hence, the equation 
$\zeta(y_0)=\lambda u_1^*\zeta(y_0)u_1+(1-\lambda)u_2^*\zeta(y_0)u_2$
represents the vector  $\zeta(y_0)$, which lies
on the sphere of radius $\|\zeta(y_0)\|_2$,
as a convex combination of the vectors
$u_1^*\zeta(y_0)u_1$ and $u_2^*\zeta(y_0)u_2$, which also lie on the same sphere.
Because the sphere of a Hilbert space
contains no nontrivial line segments, the vectors $u_1^*\zeta(y_0)u_1$ and $u_2^*\zeta(y_0)u_2$ must coincide with $\zeta(y_0)$.
As this is true for every $y_0$, we obtain $\zeta_1=\zeta_2=\zeta$, and so $\zeta$ is an extreme point of $K$.
\end{proof}

Returning now to $\povm_\hil(X)$ and $\ucp_\hil(X)$, which are the C$^*$-convex sets of interest to us here, we summarise below the current state of knowledge
regarding the extreme and C$^*$-extreme points of these sets.
\begin{enumerate}
\item For arbitrary $X$, the extreme points of $\ucp_\hil(X)$ are deduced from a quite general theorem of Arveson \cite[Theorem 1.4.2]{arveson1969}.
\item For arbitrary $X$, the C$^*$-extreme points of $\ucp_\hil(X)$ can be deduced from a theorem of Farenick and 
      Zhou \cite[Theorem 2.1]{farenick--zhou1998}. 
\item For finite $X$, the extreme points of $\povm_\hil(X)$ have been determined by Parthasarathy \cite{parthasarathy1999}
          and by D'Ariano,  Lo Presti,  and Perinotti \cite{dariano--etal2005}, and for arbitrary $X$ the extreme points of $\povm_\hil(X)$
          are characterised by Chiribella, D'Ariano, and Schlingemann in \cite{chiribella--etal2007,chiribella--etal2010}.
\item For arbitrary $X$, the extreme points  and C$^*$-extreme points  of $\povm_\hil(X)$ are determined in
Theorems \ref{ext pt} and \ref{cstar ext pt} of the present paper.

\end{enumerate}

\section{Quantum Random Variables and Integration}

\begin{definition} A \emph{quantum random variable} is a
 function $\psi:X\rightarrow\bh$ that is \emph{Borel measurable} in the sense that the complex-valued functions
\[
x\mapsto\tr\left(\rho \psi(x)\right)
\]
are Borel measurable for every state $\rho\in\state(\hil)$.
\end{definition}

Equivalently, $\psi:X\rightarrow\bh$ is Borel measurable if,
for every pair of vectors $\xi,\eta\in\hil$, the complex-valued
function $x\mapsto\langle \psi(x)\xi,\eta\rangle$ is Borel measurable. 
Our aim in this section is to define, using the procedure set out in \cite{farenick--zhou2007},
a positive-preserving operator-valued integral $\int_X\psi\,d\nu$ for
any Borel measurable function $\psi:X\rightarrow\bh$ and any
$\nu\in\povm_\hil(X)$.  

Every positive operator $h\in \bh$ has a unique positive square root $h^{1/2}$. Thus, if 
$\psi:X\rightarrow\bh$ is a function for which $\psi(x)$ is
a positive operator for every $x\in X$, then $\psi^{1/2}:X\rightarrow\bh$ denotes the function
$\psi^{1/2}(x)=\left(\psi(x)\right)^{1/2}$.

The following observation will be useful.

\begin{proposition}\label{square root} If $\psi:X\rightarrow\bh$ is a positive quantum random variable, then
$\psi^{1/2}$ is a (positive) quantum random variable.
\end{proposition}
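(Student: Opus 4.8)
The plan is to verify the defining condition of a quantum random variable for $\psi^{1/2}$ directly at the level of matrix entries. After fixing the orthonormal basis $\{e_1,\dots,e_d\}$, it suffices to show that the $d^2$ scalar functions $x\mapsto\langle\psi^{1/2}(x)e_j,e_i\rangle$ are Borel measurable, because every function $x\mapsto\langle\psi^{1/2}(x)\xi,\eta\rangle$ is a fixed linear combination of these. The idea is then to approximate the square-root function uniformly by polynomials on bounded spectral windows and to exploit that a polynomial in $\psi(x)$ has entries that are polynomial expressions in the (measurable) entries of $\psi(x)$.

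First I would localise by operator norm. Because $\hil$ is finite dimensional, the norm $\|\psi(x)\|$ equals $\sup_k\langle\psi(x)\xi_k,\xi_k\rangle$ over any countable dense set $\{\xi_k\}$ of unit vectors of $\hil$, so $x\mapsto\|\psi(x)\|$ is a countable supremum of Borel functions and is therefore Borel measurable. Consequently each set $X_n=\{x\in X:\|\psi(x)\|\le n\}$ belongs to $\bor(X)$, and $X=\bigcup_{n\in\mathbb N}X_n$.

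Next, on a fixed $X_n$ the spectrum of the positive operator $\psi(x)$ is contained in $[0,n]$. By the Weierstrass approximation theorem choose polynomials $p_m$ with $\sup_{t\in[0,n]}|p_m(t)-\sqrt t\,|\to 0$ as $m\to\infty$. For each $x\in X_n$ the continuous functional calculus then gives
\[
\left\|p_m(\psi(x))-\psi^{1/2}(x)\right\|\;=\;\sup_{t\in\mathrm{spec}\,\psi(x)}\bigl|p_m(t)-\sqrt t\,\bigr|\;\le\;\sup_{t\in[0,n]}\bigl|p_m(t)-\sqrt t\,\bigr|,
\]
which tends to $0$ as $m\to\infty$, so $p_m(\psi(x))\to\psi^{1/2}(x)$ in norm, hence entrywise, for every $x\in X_n$. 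Since $p_m(\psi(x))=\sum_k c_{m,k}\,\psi(x)^k$ is a finite linear combination of powers of $\psi(x)$, each entry $x\mapsto\langle p_m(\psi(x))e_j,e_i\rangle$ is a polynomial in the entries of $\psi(x)$ and is therefore Borel measurable on $X_n$. A pointwise limit of Borel functions being Borel, each entry of $\psi^{1/2}$ is Borel measurable on $X_n$; as $X_n\in\bor(X)$ and the $X_n$ exhaust $X$, these entries are Borel measurable on all of $X$. Since $\psi^{1/2}(x)$ is positive by construction, this shows that $\psi^{1/2}$ is a positive quantum random variable.

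The step requiring the most care is the localisation, since a single polynomial cannot handle all of $X$ at once: the spectra of $\psi(x)$ need not be uniformly bounded as $x$ ranges over $X$, so the global square-root map is approximated by a \emph{single} polynomial sequence only after restricting to each $X_n$. Establishing that $x\mapsto\|\psi(x)\|$ is Borel (and hence that the $X_n$ are Borel) is thus the crux; once that and the uniform polynomial approximation of $\sqrt\cdot$ are in place, the remaining passages through the functional calculus and the pointwise limit are routine.
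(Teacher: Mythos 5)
Your proof is correct, but it follows a genuinely different route from the paper's. The paper reduces first to the case where $\psi(x)$ is positive and \emph{invertible}, computes $\psi(x)^{1/2}$ as the limit of a Newton-type iteration (citing Higham's stable square-root algorithms), whose iterates are built from sums and products of the measurable entries of $\psi(x)$, and then removes the invertibility hypothesis by writing $\psi^{1/2}$ as the pointwise limit of $x\mapsto(\psi(x)+\tfrac1n 1)^{1/2}$. You instead approximate $\sqrt{t}$ uniformly by polynomials on $[0,n]$ via Weierstrass and pass through the continuous functional calculus; since $\sqrt{t}$ is continuous at $0$, this handles non-invertible values directly and dispenses with the paper's second reduction step. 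The price is your localisation over the Borel sets $X_n=\{x:\|\psi(x)\|\le n\}$, which the paper does not need because its iteration converges pointwise at each fixed $x$ without any uniformity in $x$; your argument for the measurability of $x\mapsto\|\psi(x)\|$ (countable supremum of the Borel functions $x\mapsto\langle\psi(x)\xi_k,\xi_k\rangle$, using positivity so that numerical radius equals norm) is sound, and one could equally note that the norm is a continuous function of the matrix entries. Both proofs ultimately rest on the same fact that a pointwise limit of Borel functions is Borel; yours is somewhat more self-contained in that it avoids appealing to the convergence of a specific numerical algorithm.
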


\begin{proof} Assume first that $\psi(x)$ is positive and invertible 
for every $x\in X$. Because sums and products of scalar-valued measurable functions 
are measurable, if one invokes an iterative procedure to compute $\psi(x)^{1/2}$---such as the one in
\cite[Algorithm 2]{higham1997}, which is a Newton-type iteration combined with a
Cholesky factorisation---then for each state $\rho\in\state(\hil)$ the function
\[
x\,\longmapsto\,\tr\left(\rho\psi(x)^{1/2}\right)
\]
is a pointwise limit of a sequence of measurable functions. Thus, $\psi^{1/2}$ is a quantum random variable.
In the case where $\psi(x)$ is not invertible for all $x\in X$, 
then $\psi^{1/2}$ is a pointwise limit of $x\mapsto (\psi(x)+\frac{1}{n}1)^{1/2}$
and, hence, is measurable.
\end{proof}
 
\subsection{The Principal Radon-Nikod\'ym Derivative}

\begin{definition} If $\nu_1,\nu_2\in\povm_\hil(X)$, then
$\nu_2$ is \emph{absolutely continuous} with respect to $\nu_1$, denoted by
$\nu_2 \ll_{\rm ac} \nu_1$, if $\nu_2(E)=0$ for every $E\in\bor(X)$ for which $\nu_1(E)=0$.
\end{definition} 

 If $\nu\in\povm_\hil(X)$, then a probability measure $\mu$ is obtained from $\nu$ via
\begin{equation}\label{induced}
\mu(E)\,=\,\frac{\tr\left(\nu(E)\right)}{d},\;\mbox{for every }E\in\bor(X)\,.
\end{equation}
Because the trace functional maps nonzero positive operators to strictly positive real numbers, the measures $\mu$
and $\nu$ are mutually absolutely continuous:  $\mu \ll_{\rm ac} \nu$ and $\nu \ll_{\rm ac} \mu$. 

Recall that $\{e_1,\dots,e_d\}$ is a fixed orthonormal basis of $\hil$. Because $\nu \ll_{\rm ac} \mu$,
each of the $d^2$ complex measures $\nu_{ij}:\bor(X)\rightarrow\mathbb C$, defined by
$\nu_{ij}(E)=\langle\nu(E)e_j,e_i\rangle$, has the property that $\nu_{ij}\ll_{\rm ac}\mu$. 
Hence, by the (classical) Radon-Nikod\'ym Theorem, there is a unique 
$\displaystyle\frac{d\nu_{ij}}{d\mu}\in L^1(X,\mu)$
such that
\[
\nu_{ij}(E)\,=\,\int_E \displaystyle\frac{d\nu_{ij}}{d\mu}\,d\mu,\;\mbox{for all }E\in\bor(X)\,.
\]
These scalar Radon-Niko\'ym derivatives give rise to an operator-valued Borel function 
$\displaystyle\frac{d\nu}{d\mu}: X\rightarrow \bh$ via
\begin{equation}\label{p r-n}
\displaystyle\frac{d\nu}{d\mu}\,=\,\sum_{i,j=1}^d \displaystyle\frac{d\nu_{ij}}{d\mu}\otimes e_{ij}\,.
\end{equation}
Notice that for any $\xi=\displaystyle\sum_{k=1}^d\xi_k e_k\in\hil$,
\[
\left\langle \displaystyle\frac{d\nu}{d\mu}\xi,\xi \right\rangle \,=\, \displaystyle\sum_{i,j=1}^d{ \frac{d\nu_{ij}}{d\mu}\xi_j\overline{\xi_i}}\,.
\]
Hence, for all $\xi\in\hil$ and $E\in\bor(X)$, 
\[
\displaystyle\int_E\left\langle\displaystyle\frac{d\nu}{d\mu}(x)\xi,\xi \right\rangle\,d\mu(x) \,=\,
\displaystyle\sum_{i,j=1}^d\left( \displaystyle\int_E \frac{d\nu_{ij}}{d\mu}\,d\mu\right)\xi_j\overline{\xi_i}
\,=\,\langle \nu(E)\xi,\xi \rangle \,\geq\,0\,.
\]
This proves that $\displaystyle\frac{d\nu}{d\mu}(x)$ is a positive operator for
$\mu$-almost all $x\in X$; for such $x$ let 
$\left(\displaystyle\frac{d\nu}{d\mu}(x)\right)^{1/2}$ denote the positive square root (in $\bh$) of the positive
operator $\displaystyle\frac{d\nu}{d\mu}(x)$. Now define 
$\left(\displaystyle\frac{d\nu}{d\mu}\right)^{1/2}:X\rightarrow\bh$
to be $\left(\displaystyle\frac{d\nu}{d\mu}(x)\right)^{1/2}$  at those
$x\in X$ for which $\displaystyle\frac{d\nu}{d\mu}(x)$ is a positive operator,
and zero otherwise.

\begin{definition} If  $\nu\in\povm_\hil(X)$ and if $\mu\in P(X)$ is the induced classical probability measure defined in 
\eqref{induced}, then the Borel function $\displaystyle\frac{d\nu}{d\mu}$ defined in \eqref{p r-n} is
called the \emph{principal Radon-Nikod\'ym derivative of $\nu$}.
\end{definition} 

Unlike the classical case, whenever $d>1$ the principal Radon-Nikod\'ym derivative of $\nu$ depends on the pre-selected choice
of orthonormal basis $\{e_1,\dots,e_d\}$ of $\hil$. If one had chosen a different orthonormal basis, say $\{e_1',\dots,e_d'\}$, then the
resulting principal Radon-Nikod\'ym derivative computed in this new basis is simply that of $\alpha\circ\nu$ in the originally selected basis,
where $\alpha$ is the automorphism induced by the unitary operator that transforms the basis 
$\{e_1',\dots,e_d'\}$ to the basis $\{e_1,\dots,e_d\}$.
The following proposition is even more general.

Recall that a \emph{unital quantum channel} is a linear map $\mathcal E:\bh\rightarrow\bh$ such that $\mathcal E$ is unital, completely
positive, and trace preserving (that is, $\tr\circ\mathcal E=\tr$). Note that $\mathcal E\circ\nu\in\povm_\hil(X)$
for every $\nu\in\povm_\hil(X)$.

\begin{proposition}\label{channels} Assume that $\nu\in\povm_\hil(X)$ and that $\mathcal E:\bh\rightarrow\bh$ is a unital quantum channel.
Let $\mu^\nu$ and $\mu^{\mathcal E\circ\nu}$ be the probability measures induced by $\nu$ and $\mathcal E\circ\nu$ in
accordance with \eqref{induced}. Then there is a $\mu\in P(X)$ such that 
\begin{enumerate}
\item $\mu=\mu^\nu=\mu^{\mathcal E\circ\nu}$ and
\item $\displaystyle\frac{d(\mathcal E\circ\nu)}{d\mu}\,=\,\mathcal E\circ\displaystyle\frac{d\nu}{d\mu}$.
\end{enumerate}
\end{proposition}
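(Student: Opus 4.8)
The plan is to dispatch part (1) by a one-line trace computation and to reduce part (2) to the classical (scalar) Radon-Nikod\'ym theorem entry-by-entry in the fixed basis $\{e_1,\dots,e_d\}$.

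For (1), the trace-preservation property $\tr\circ\mathcal E=\tr$ gives, for each $E\in\bor(X)$, that $\tr\bigl((\mathcal E\circ\nu)(E)\bigr)=\tr\bigl(\mathcal E(\nu(E))\bigr)=\tr(\nu(E))$, so by the defining formula \eqref{induced} the induced measures coincide. I would then set $\mu:=\mu^\nu=\mu^{\mathcal E\circ\nu}$; this is the common classical probability measure against which both principal Radon-Nikod\'ym derivatives are formed. Note that complete positivity and unitality are used only to guarantee (as in the remark preceding the proposition) that $\omega:=\mathcal E\circ\nu$ again lies in $\povm_\hil(X)$, so that $\frac{d\omega}{d\mu}$ is defined.

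For (2), the assertion is the pointwise ($\mu$-a.e.) operator equality $\frac{d\omega}{d\mu}(x)=\mathcal E\bigl(\frac{d\nu}{d\mu}(x)\bigr)$. The key observation is that, because $\hil$ is finite-dimensional, $\mathcal E$ is automatically a bounded linear map and is determined by its matrix entries $\mathcal E_{ij,kl}=\langle\mathcal E(e_{kl})e_j,e_i\rangle$, equivalently by the expansion $\mathcal E(e_{kl})=\sum_{i,j=1}^d\mathcal E_{ij,kl}\,e_{ij}$. Applying $\mathcal E$ to $\nu(E)=\sum_{k,l=1}^d\nu_{kl}(E)\,e_{kl}$ and pairing the result against $e_j,e_i$ shows that the complex measures $\omega_{ij}(E)=\langle\omega(E)e_j,e_i\rangle$ are fixed finite linear combinations of the $\nu_{kl}$:
\[
\omega_{ij}(E)=\sum_{k,l=1}^d\mathcal E_{ij,kl}\,\nu_{kl}(E),\qquad E\in\bor(X).
\]

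The final step invokes linearity of the integral together with the uniqueness clause of the scalar Radon-Nikod\'ym theorem: since each $\nu_{kl}\ll_{\rm ac}\mu$, the displayed relation forces
\[
\frac{d\omega_{ij}}{d\mu}=\sum_{k,l=1}^d\mathcal E_{ij,kl}\,\frac{d\nu_{kl}}{d\mu}
\]
as elements of $L^1(X,\mu)$, hence $\mu$-a.e. Substituting this into the defining formula \eqref{p r-n} for $\frac{d\omega}{d\mu}$ and recognising $\sum_{i,j=1}^d\mathcal E_{ij,kl}\,e_{ij}=\mathcal E(e_{kl})$ recovers $\mathcal E\bigl(\frac{d\nu}{d\mu}(x)\bigr)$, which is exactly (2). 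There is no deep obstacle here; the only point requiring care is that each of these $d^2$ entrywise equalities holds merely $\mu$-a.e., but a finite union of $\mu$-null sets is $\mu$-null, so the full operator identity holds off a single $\mu$-null set, as required.
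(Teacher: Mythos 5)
Your proof is correct, and for part (1) it is identical to the paper's. For part (2) the underlying idea is the same as the paper's---reduce to the $d^2$ scalar coordinate measures and invoke linearity and uniqueness of the classical Radon--Nikod\'ym derivative---but you exploit the linearity of $\mathcal E$ differently. The paper first proves the special case $\displaystyle\frac{d(a^*\nu a)}{d\mu}=a^*\frac{d\nu}{d\mu}a$ for a single operator $a\in\bh$ and then sums over a Kraus decomposition $\mathcal E(y)=\sum_j a_j^*ya_j$, which requires knowing that $\mathcal E$ is completely positive; you instead expand $\mathcal E$ through its matrix entries $\mathcal E_{ij,kl}=\langle\mathcal E(e_{kl})e_j,e_i\rangle$ in the fixed basis, so that $\omega_{ij}=\sum_{k,l}\mathcal E_{ij,kl}\,\nu_{kl}$ as complex measures, and then reassemble. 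Your route is marginally more elementary and more general: the derivative identity in (2) holds verbatim for any linear map $\bh\rightarrow\bh$, with trace preservation used only for (1) and positivity used only to ensure $\mathcal E\circ\nu\in\povm_\hil(X)$ so that its principal derivative is defined---a point you correctly flag. The paper's Kraus-based intermediate step has the side benefit of isolating the conjugation identity $\frac{d(u^*\nu u)}{d\mu}=u^*\frac{d\nu}{d\mu}u$, which is reused later in Proposition \ref{aut1}, but nothing in your argument is missing or incorrect, and your closing remark about taking a single $\mu$-null exceptional set is the right level of care.
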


\begin{proof} The channel $\mathcal E$ is trace preserving, so for any $E\in\bor(X)$ 
\[
\mu^{\mathcal E\circ\nu}(E)=\displaystyle\frac{1}{d}\tr(\mathcal E(\nu(E)))=\displaystyle\frac{1}{d}\tr(\nu(E))=\mu^\nu(E)\,.
\]
The desired measure is $\mu=\mu^{\mathcal E\circ\nu}=\mu^\nu$.

Let $a=\sum_{i,j=1}^d \alpha_{ij}e_{ij}\in\bh$ and consider $a^*\nu a$.  If $\mu(E)=0$,
then $\nu(E)=0$ and $a^*\nu(E) a=0$; therefore $a^*\nu a \ll_{\rm ac} \mu$.  Fix $i,j$ and consider the $(i,j)$-coordinate measure of $a^*\nu a$:
\[ 
\omega_{ij} = \displaystyle\sum_{l=1}^d \displaystyle \sum_{k=1}^d \overline{\alpha_{li}}\alpha_{kj}\nu_{lk}.
\]
Since $a^*\nu a \ll_{\rm ac} \mu$, we have $\omega_{ij}\ll_{\rm ac}\mu$ and so we may consider the Radon-Nikod\'ym derivative
\[
\displaystyle\frac{d\omega_{ij}}{d\mu}=\displaystyle\sum_{l=1}^d \displaystyle \sum_{k=1}^d \overline{\alpha_{li}}\alpha_{kj}\left(\displaystyle\frac{d\nu_{lk}}{d\mu}\right)=\left(a^* \displaystyle\frac{d\nu}{d\mu} a\right)_{ij}.
\]
Therefore, $\displaystyle\frac{d(a^*\nu a)}{d\mu} = a^*\displaystyle\frac{d\nu}{d\mu}a$.

Consider the Kraus decomposition of the channel $\mathcal E$:
\[
\mathcal E(y)=\displaystyle\sum_{j=1}^{q}{a_j^*ya_j}, \; y\in\bh,\;\mbox{where } \displaystyle\sum_{j=1}^q{a_j^*a_j} = \displaystyle\sum_{j=1}^q{a_ja_j^*}=1\,.
\]
By linearity of the scalar Radon-Nikod\'ym derivative,  $\displaystyle\frac{d(\mathcal E\circ\nu)}{d\mu}\,=\,\mathcal E\circ\displaystyle\frac{d\nu}{d\mu}$.
\end{proof}

\subsection{Integrable Functions}

If $f,\psi:X\rightarrow\bh$ are quantum random variables such that $\psi(x)\in\bh_+$ for all $x\in X$, then $\psi^{1/2}$
is measurable (Proposition \ref{square root}) and, thus, the function
$\psi^{1/2}f\psi^{1/2}$ is Borel measurable.

\begin{definition} Assume that $\nu\in\povm_\hil(X)$ and that
$\displaystyle\frac{d\nu}{d\mu}$ is the principal Radon-Nikod\'ym 
derivative of $\nu$.
\begin{enumerate}
\item
If $f:X\rightarrow\bh$ is a Borel function, then 
$f$ is said to be \emph{$\nu$-integrable} if, for every state $\rho\in\state(\hil)$,
the complex-valued function 
\[
f_\rho(x)\,=\,\tr\left(\rho\,\left(\displaystyle\frac{d\nu}{d\mu}(x)\right)^{1/2}f(x)\left(\displaystyle\frac{d\nu}{d\mu}(x)\right)^{1/2}\right)\,,\;x\in X,
\]
is $\mu$-integrable.
\item
The \emph{integral} of a $\nu$-integrable function $f:X\rightarrow\bh$
is defined to be the unique operator acting on $\hil$ having the property that
\[
\tr\left(\rho\int_Xf\,d\nu\right)\,=\,\int_X\,f_\rho\,d\mu
\]
for every state $\rho$ of $\hil$.
\end{enumerate}
\end{definition}

\begin{example} The integral of an effect-valued function is an effect.
\end{example}
To verify this claim, choose $\nu\in\povm_\hil(X)$ and let $\mu$ be its principal Radon-Nikod\'ym derivative.
Assume that $f:X\rightarrow\eff(\hil)$ is $\nu$-integrable. Because $0\leq f(x)\leq 1$ in $\bh_{\rm sa}$,
for every state $\rho\in\state(\hil)$ we have
\[
0\,\leq\,
\rho^{1/2}\,\left(\displaystyle\frac{d\nu}{d\mu}(x)\right)^{1/2}f(x)\left(\displaystyle\frac{d\nu}{d\mu}(x)\right)^{1/2}\rho^{1/2}
\,\leq\,
\rho^{1/2}\,\left(\displaystyle\frac{d\nu}{d\mu}(x)\right)\rho^{1/2}
\]
for $\mu$-almost all $x\in X$. Thus, for every $\rho\in\state(\hil)$, we have
\[
\int_X f_\rho\,d\mu \,\leq\, \int_X \tr\left(\rho \frac{d\nu}{d\mu}\right)\,d\mu
\]
and so $0\,\leq\,\displaystyle\int_X f\,d\nu\,\leq\,\int_X \left( \frac{d\nu}{d\mu}\right)\,d\nu\,=\,\nu(X)\,=\,1\in\bh$.
\hfill$\diamondsuit$

\begin{example} The principal Radon-Nikod\'ym derivative of $\nu=\displaystyle\sum_{j=1}^n \delta_{x_j}h_j$
and the corresponding integral formula.
\end{example}
Here, we assume that $h_1,\dots,h_n\in\bh_+$ are nonzero and satisfy $\sum_j h_j=1$ and that $\{x_1,\dots,x_n\}$ is a set of $n$
distinct points of $X$. The measurement $\nu\in\povm_\hil(X)$ is defined by 
\[
\nu(E)\,=\,\sum_{j=1}^n \delta_{x_j}(E)h_j,\; E\in\bor(X)\,.
\]
If $\chi_E$ denotes the characteristic (or indicator) function of any measurement event $E\in\bor(X)$, then
\[
\frac{d\nu}{d\mu}\,=\,\sum_{j=1}^n \left( \frac{d}{\tr(h_j)}\,\chi_{\{x_j\}}\right) h_j
\]
and
\[
\int_X f\,d\nu\,=\, \sum_{j=1}^n h_j^{1/2}f(x_j) h_j^{1/2}\,,
\]
for every Borel function $f:X\rightarrow\bh$. 
\hfill$\diamondsuit$
\vskip 6 pt

\subsection{Quantum Integration is a Completely Positive Operator}

The following theorem is the first main result of the present paper.
To set the notation used in the proof, for any operator algebra $\csta$ we let
$\md(\csta)$ denote the C$^*$-algebra of $d\times d$ matrices with entries from $\csta$.
An element $F\in\md(\csta)$ is a matrix $F=[f_{k\ell}]_{k,\ell=1}^d$ of elements $f_{k\ell}\in\csta$.

\begin{theorem}\label{pairing} If 
$\nu\in\povm_\hil(X)$, then there is a unital completely positive linear map 
\[
\phi_\nu:C(X)\otimes\bh\rightarrow\bh
\]
such that
\[
\phi_\nu(f)\,=\,\int_X f\,d\nu\,,
\]
for every $f\in C(X)\otimes\bof(\hil)$. 
\end{theorem}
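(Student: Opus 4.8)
The plan is to check, in order, that $f\mapsto\int_X f\,d\nu$ is defined on all of $C(X)\otimes\bh$, that it is linear and unital, and finally that it is completely positive, the last being the only substantive point. Throughout I abbreviate $D(x)=\frac{d\nu}{d\mu}(x)$, which is a positive operator for $\mu$-almost every $x$, and I recall from Proposition \ref{square root} that $D^{1/2}$ is again a (positive) quantum random variable, so that all integrands appearing below are Borel measurable.

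First I would verify well-definedness, namely that every $f\in C(X)\otimes\bh$ is $\nu$-integrable. Since $X$ is compact and $f$ continuous, $f$ is bounded, say $\|f(x)\|\le M$. For each state $\rho\in\state(\hil)$, positivity of $D(x)$ gives the pointwise estimate
\[
|f_\rho(x)|=\left|\tr\left(\rho\,D(x)^{1/2}f(x)D(x)^{1/2}\right)\right|\le M\,\tr\left(\rho\,D(x)\right),
\]
and the right-hand side is $\mu$-integrable because $\int_X\tr(\rho\,D)\,d\mu=\tr\left(\rho\int_X D\,d\mu\right)=\tr(\rho\,\nu(X))=1$. Hence $f_\rho\in L^1(X,\mu)$ and $f$ is $\nu$-integrable. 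The functional $\rho\mapsto\int_X f_\rho\,d\mu$ is linear and extends from $\state(\hil)$ to all of $\bh$ (states span $\bh$); by nondegeneracy of the trace pairing on the finite-dimensional space $\bh$ it is represented by a unique operator, which is $\int_X f\,d\nu$. Linearity of $\phi_\nu$ is then immediate, since $f\mapsto f_\rho$ is linear and the representing operator is unique. Unitality follows by taking $f$ to be the constant function $1$: then $f_\rho(x)=\tr(\rho\,D(x))$, so $\int_X f_\rho\,d\mu=\tr(\rho)$ for every $\rho$, forcing $\int_X 1\,d\nu=1$.

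It remains to prove complete positivity. Fix $p$ and a positive element $F=[f_{k\ell}]_{k,\ell=1}^p$ of $C(X)\otimes\bh\otimes\mpee$, where each $f_{k\ell}\in C(X)\otimes\bh$ and $F(x)\ge0$ in $\bh\otimes\mpee$ for every $x$. By definition $(\phi_\nu\otimes{\rm id}_{\mpee})(F)=[\phi_\nu(f_{k\ell})]$, and to show this block operator on $\hil\otimes\mathbb C^p$ is positive it suffices to check that $\sum_{k,\ell}\langle\phi_\nu(f_{k\ell})\zeta_\ell,\zeta_k\rangle\ge0$ for every $\zeta=(\zeta_1,\dots,\zeta_p)$ with $\zeta_k\in\hil$. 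Extending the defining trace relation from states to the rank-one operator representing the functional $S\mapsto\langle S\zeta_\ell,\zeta_k\rangle$ gives
\[
\langle\phi_\nu(f_{k\ell})\zeta_\ell,\zeta_k\rangle=\int_X\left\langle D(x)^{1/2}f_{k\ell}(x)D(x)^{1/2}\zeta_\ell,\zeta_k\right\rangle d\mu(x)=\int_X\left\langle f_{k\ell}(x)\eta_\ell(x),\eta_k(x)\right\rangle d\mu(x),
\]
where $\eta_k(x)=D(x)^{1/2}\zeta_k$. Summing over $k$ and $\ell$ and writing $\eta(x)=(\eta_1(x),\dots,\eta_p(x))$, the integrand becomes $\langle F(x)\eta(x),\eta(x)\rangle\ge0$ for $\mu$-almost every $x$, since $F(x)\ge0$. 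Therefore the integral is nonnegative, and as $\zeta$ was arbitrary, $(\phi_\nu\otimes{\rm id}_{\mpee})(F)\ge0$.

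The conceptual heart of the argument, and the step I expect to require the most care, is the complete positivity: the substitution $\eta_k(x)=D(x)^{1/2}\zeta_k$ is exactly what converts the sandwiching by $D^{1/2}$ inside the integral into the evaluation of the pointwise-positive matrix $F(x)$ against the vector $\eta(x)$, thereby reducing matrix positivity of the output to scalar positivity of an integrand. The remaining delicacies—justifying the passage of the scalar trace relation to matrix amplifications and confirming measurability and integrability of the matrix-valued integrand—are routine consequences of the boundedness of $F$ together with Proposition \ref{square root}.
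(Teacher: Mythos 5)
Your proof is correct, and its computational heart is the same as the paper's: sandwich the positive matrix $F(x)$ by the square root of the principal Radon--Nikod\'ym derivative and reduce matrix positivity of the output to pointwise scalar positivity of $\langle F(x)\eta(x),\eta(x)\rangle$. Where you diverge is in how complete positivity is certified. The paper exploits the finite-dimensionality of the codomain via the criterion of \cite[Theorem 6.1]{Paulsen-book}: it suffices to check that the single linear functional $s_{\phi_\nu}([f_{k\ell}])=\frac{1}{d}\sum_{k,\ell}\langle\phi_\nu(f_{k\ell})e_\ell,e_k\rangle$ on $\md(C(X)\otimes\bh)$ is positive, which is then verified against the one fixed vector $\xi=e_1\oplus\cdots\oplus e_d$. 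You instead verify positivity of $\phi_\nu\otimes\mathrm{id}_{\mpee}$ directly at every matrix level $p$ against arbitrary tuples $(\zeta_1,\dots,\zeta_p)$, via the substitution $\eta_k(x)=D(x)^{1/2}\zeta_k$. Your route is more elementary and self-contained (no appeal to the Paulsen criterion), at the cost of quantifying over all levels; the paper's route buys a reduction to one scalar functional. You also supply two points the paper leaves implicit: the verification that every continuous $f$ is $\nu$-integrable (via the bound $|f_\rho(x)|\le\|f(x)\|\,\tr(\rho\,D(x))$ and $\int_X\tr(\rho D)\,d\mu=1$), and the justification for extending the defining trace relation from states to the rank-one operators representing $S\mapsto\langle S\zeta_\ell,\zeta_k\rangle$. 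Both additions are sound and worth making explicit.
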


\begin{proof} 
Choose $\nu\in\povm_\hil(X)$ and let $\mu$ be its principal Radon-Nikod\'ym derivative.
Because $\hil$ has finite dimension $d$, to prove that $\phi_\nu$ is completely positive it is sufficient to show that
the linear function $s_{\phi_\nu}:\md\left(C(X)\otimes\bh\right)\rightarrow\mathbb C$ defined by
\[
s_{\phi_\nu}\left([f_{k\ell}]_{k,\ell}\right)\,=\,
\frac{1}{d}\sum_{k,\ell=1}^d \langle \phi_\nu(f_{k\ell})e_\ell,e_k\rangle
\]
maps positive elements of $\md\left(C(X)\otimes\bh\right)$ to nonnegative real numbers \cite[Theorem 6.1]{Paulsen-book}.
Note that $G=[g_{k\ell}]_{k,\ell}\in \md\left(C(X)\otimes\bh\right)$ is positive if, for every $x\in X$, the operator
\[
G(x)\,=\,[g_{k\ell}(x)]_{k,\ell} \,\in\,\md\left(\bh\right)\,=\,\bof(\hil\otimes\mathbb C^d)\,=\,\bof(\bigoplus_{1}^d\hil)
\]
is positive.

Thus, let $F=[f_{k\ell}]_{k,\ell}\in \md\left(C(X)\otimes\bh\right)$ be positive. The principal Radon-Nikod\'ym derivative is positive $\mu$-almost
everywhere, and so for $\mu$-almost all $x\in X$, the operator matrix $G(x)=K(x)^{1/2}F(x)K(x)^{1/2}$ is a positive operator acting on 
$\bigoplus_1^d\hil$, where $K(x)$ is the diagonal operator matrix with diagonal entries $\frac{d\nu}{d\mu}(x)$. 
Therefore, the $(k,\ell)$-entry of $G(x)$ is $g_{k\ell}(x)=\left(\frac{d\nu}{d\mu}(x)\right)^{1/2}f_{k\ell}(x)\left(\frac{d\nu}{d\mu}(x)\right)^{1/2}$
for $\mu$-almost all $x$. In particular, with $\xi=e_1\oplus\cdots\oplus e_d\in\bigoplus_1^d\hil$, we have
\[
0\,\leq\,\langle G(x)\xi,\xi\rangle\,=\, \sum_{k,\ell=1}^d \langle g_{k\ell}(x)e_\ell,e_k\rangle
\]
for $\mu$-almost all $x\in X$.
Hence, assuming $F=[f_{k\ell}]_{k,\ell}\in \md\left(C(X)\otimes\bh\right)$ is positive, we deduce that
\[
\begin{array}{rcl}
s_{\phi_\nu}\left([f_{k\ell}]_{k,\ell}\right) &=& 
\displaystyle\frac{1}{d}\sum_{k,\ell=1}^d \left\langle \left(\displaystyle\int_Xf_{k\ell}\,d\nu\right)e_\ell,e_k\right\rangle \\ && \\
&=& \displaystyle\frac{1}{d}\displaystyle\int_X\left(\sum_{k,\ell=1}^d \langle g_{k\ell}(x)e_\ell,e_k\rangle\right)\,d\mu(x) \\ && \\
&\geq& 0\,.
\end{array}
\]
That is, $\phi_\nu$ is completely positive. Lastly,
because
\[
\phi_\nu(1)\,=\,\int_X1\,d\nu\,=\,\nu(X)\,=\,1\,,
\]
we conclude that $\phi_\nu$ is a unital map.
\end{proof}

\subsection{The $\Gamma$-transform}

We now formalise the association of $\phi_\nu\in\ucp_\hil(X)$ with $\nu\in\povm_\hil(X)$.

\begin{definition} Define $\Gamma:\povm_\hil(X)\rightarrow\ucp_\hil(X)$ by
\[
\Gamma(\nu)\,=\,\phi_\nu\,.
\]
\end{definition}

Because the definition of $\phi_\nu(f)$ involves a square root of the principal Radon-Nikod\'ym derivative $\displaystyle\frac{d\nu}{d\mu}$, 
the transform $\Gamma$ does not appear to possess any usable affine properties. However, we are able to say how $\Gamma(\nu)$ and
$\Gamma(\nu')$ compare if $\nu'$ is obtained from $\nu$ via composition with an automorphism.

\begin{proposition}\label{aut1} For every $\nu\in\povm_\hil(X)$ and $\alpha\in\aut(\bh)$, the following equation holds:
\[
\Gamma(\alpha\circ\nu)\,=\,\alpha\circ\Gamma(\nu)\circ[{\rm id}_{C(X)}\otimes\alpha^{-1}]\,.
\]
That is, for every $\nu\in\povm_\hil(X)$ and unitary $u\in \bh$, we have
\[
\int_X f\,d(u^*\nu u)\,=\,u^*\left(\int_X ufu^*\,d{\nu}\right)u\,,
\]
for every continuous $f:X\rightarrow\bh$.
\end{proposition}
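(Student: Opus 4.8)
The plan is to reduce everything to Proposition \ref{channels}, since an automorphism $\alpha(x)=u^*xu$ is in particular a unital quantum channel (its Kraus decomposition consists of the single operator $u$, which satisfies $u^*u=uu^*=1$). First I would observe that $\alpha\circ\nu\in\povm_\hil(X)$ and invoke Proposition \ref{channels} with $\mathcal E=\alpha$ to conclude that the induced probability measures coincide, $\mu^{\alpha\circ\nu}=\mu^\nu=:\mu$, and that the principal Radon-Nikod\'ym derivatives are covariantly related:
\[
\frac{d(\alpha\circ\nu)}{d\mu}(x)\,=\,\alpha\!\left(\frac{d\nu}{d\mu}(x)\right)\,=\,u^*\,\frac{d\nu}{d\mu}(x)\,u
\]
for $\mu$-almost all $x\in X$. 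Next I would pass to square roots: because $u$ is unitary and the positive square root of a positive operator is unique, $(u^*hu)^{1/2}=u^*h^{1/2}u$ for every $h\in\bh_+$, so applying this pointwise gives $\left(\frac{d(\alpha\circ\nu)}{d\mu}(x)\right)^{1/2}=u^*\left(\frac{d\nu}{d\mu}(x)\right)^{1/2}u$ $\mu$-almost everywhere.

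The heart of the argument is then a short bookkeeping computation using cyclicity of the trace. Writing $K(x)=\left(\frac{d\nu}{d\mu}(x)\right)^{1/2}$ and fixing a state $\rho\in\state(\hil)$, the integrand defining $\int_X f\,d(\alpha\circ\nu)$ becomes
\[
\tr\!\left(\rho\,u^*K(x)\,(uf(x)u^*)\,K(x)\,u\right)\,=\,\tr\!\left((u\rho u^*)\,K(x)\,(uf(x)u^*)\,K(x)\right).
\]
Setting $\rho'=u\rho u^*=\alpha^{-1}(\rho)$, which is again a state, the right-hand side is exactly the integrand appearing in the definition of $\int_X ufu^*\,d\nu$ relative to $\nu$ and the state $\rho'$. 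Integrating against $\mu$ and applying cyclicity once more yields
\[
\tr\!\left(\rho\int_X f\,d(\alpha\circ\nu)\right)\,=\,\tr\!\left(\rho'\int_X ufu^*\,d\nu\right)\,=\,\tr\!\left(\rho\,u^*\!\left(\int_X ufu^*\,d\nu\right)u\right).
\]
Since this holds for every state $\rho$, and states determine an operator uniquely (precisely the well-definedness built into the definition of the integral), I obtain the second displayed identity $\int_X f\,d(u^*\nu u)=u^*\left(\int_X ufu^*\,d\nu\right)u$. Recognising $u^*(\cdot)u=\alpha$ and $ufu^*=[{\rm id}_{C(X)}\otimes\alpha^{-1}](f)$ then recovers the first, operator-map form of the statement.

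The only delicate point—and the main potential obstacle—is the interaction of the square-root step with the substitution $\rho\mapsto u\rho u^*$ inside the trace. I must verify that $\alpha$ genuinely satisfies the hypotheses of Proposition \ref{channels}, so that the derivative transforms covariantly and $\mu$ is unchanged, and then keep the order of the factors $u,u^*$ and the placement of $\rho$ straight across the two uses of cyclicity. Everything else is routine, and no positivity or measurability issue arises beyond those already handled in Propositions \ref{square root} and \ref{channels}.
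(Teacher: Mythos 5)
Your proposal is correct and follows essentially the same route as the paper: invoke Proposition \ref{channels} to get $\mu^{u^*\nu u}=\mu^\nu$ and $\frac{d(u^*\nu u)}{d\mu}=u^*\frac{d\nu}{d\mu}u$, pass to square roots via $(u^*hu)^{1/2}=u^*h^{1/2}u$, and finish with two applications of cyclicity of the trace together with the uniqueness clause in the definition of the integral. The only difference is cosmetic: you make explicit the square-root identity and the identification $\rho'=u\rho u^*=\alpha^{-1}(\rho)$, both of which the paper uses silently.
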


\begin{proof}
 By Proposition \ref{channels}, 
$\displaystyle\frac{d(u^*\nu u)}{d\mu}=u^*\displaystyle\frac{d\nu}{d\mu} u$, 
where $\mu=\displaystyle\frac{1}{d}\tr\circ\nu=\displaystyle\frac{1}{d}\tr\circ\alpha\circ\nu$.
Recall that $\displaystyle\int_X f d(u^*\nu u)$ is the unique operator such that, for any state $\rho\in\state(\hil)$,
\[
\begin{array}{rcl}
\tr\left( \rho \displaystyle\int_X f d(u^*\nu u) \right) &=& 
\displaystyle \int_X \tr \left( \rho \left(\displaystyle\frac{d(u^*\nu u)}{d\mu}\right)^{1/2} f \left (\displaystyle\frac{d(u^*\nu u)}{d\mu}\right)^{1/2}\right) \,d\mu
\\ && \\
&=& \displaystyle \int_X \tr \left( \rho u^* \left(\displaystyle\frac{d\nu}{d\mu}\right)^{1/2} u f u^*  \left (\displaystyle\frac{d\nu }{d\mu}\right)^{1/2} u\right) \, d\mu
\\ && \\
&=& \displaystyle \int_X \tr \left(u \rho u^* \left(\displaystyle\frac{d\nu}{d\mu}\right)^{1/2} u f u^*  \left (\displaystyle\frac{d\nu }{d\mu}\right)^{1/2} \right) \, d\mu
\\  && \\
&=& \tr \left( u\rho u^* \left( \displaystyle \int_X u f u^* d\nu \right) \right)
\\ && \\
&=& \tr \left( \rho u^* \left( \displaystyle \int_X u f u^* d\nu \right) u \right) \,.
\end{array}
\]
Hence,
$\displaystyle
\int_X f\,d(u^*\nu u)\,=\,u^*\left(\int_X ufu^*\,d{\nu}\right)u$.
\end{proof}

If one were to seek a similar reformulation of Proposition \ref{aut1}
by replacing the automorphism 
$\alpha$ with a unital, invertible quantum channel $\mathcal E$, then 
at a purely formal level one would anticipate that
\begin{equation}\label{iqc}
\Gamma(\mathcal E\circ\nu)\,=\,\mathcal E\circ\Gamma(\nu)\circ[{\rm id}_{C(X)}\otimes\mathcal E^{-1}]\,.
\end{equation}
Note, however, that such a formulation should not require $\mathcal E^{-1}$ to be 
completely positive, for if it were a requirement, then $\mathcal E$ would be an automorphism
(\cite[Corollary 2.3.2]{Davies-book}, \cite[Theorem X.5]{clean2005}), which brings us back
to the case of Proposition \ref{aut1}.
Whether equation \eqref{iqc} holds for arbitrary invertible unital channels $\mathcal E$
remains open.

\subsection{Non-Principal Radon-Nikod\'ym Derivatives}

If $h\in\bh_+$, then $h^{-1}$ shall denote the unique
positive operator for which $\ker h^{-1}=\ker h$ and
$h^{-1}h=hh^{-1}=q$, the projection onto the range of $h$. Thus, if $h$ is invertible, then $h^{-1}$ 
is the inverse of $h$. Once we have this notion for positive operators, a similar notion of generalised inverse
for positive operator valued functions can be made.

\begin{theorem} The following statements are equivalent for $\nu_1,\nu_2\in\povm_\hil(X)$:
\begin{enumerate}
\item $\nu_2 \ll_{\rm ac} \nu_1$;
\item there exists a bounded Borel function $g:X\rightarrow\bh$, unique up to sets of $\nu_1$-measure zero, such that
\begin{equation}\label{rn prop}
\nu_2(E)\,=\,\int_E g\,d\nu_1 ,\;\mbox{ for every }E\in\bor(X)\,.
\end{equation}
\end{enumerate}
If the equivalent conditions above hold and if $\mu_j$ is the probability measure induced by $\nu_j$, then
$\mu_2\ll_{\rm ac} \mu_1$ and
\begin{equation}\label{rn deriv}
g\,=\,\left(\frac{d\mu_2}{d\mu_1}\right)
\left[ 
\left(\frac{d\nu_1}{d\mu_1}\right)^{-1/2}\left(\frac{d\nu_2}{d\mu_2}\right)\left(\frac{d\nu_1}{d\mu_1}\right)^{-1/2}
\right]\,.
\end{equation}
\end{theorem}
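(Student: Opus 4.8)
The plan is to prove the two implications separately and then settle the explicit formula, the regularity of $g$, and uniqueness. The implication $(2)\Rightarrow(1)$ is the routine one: if $\nu_2(E)=\int_E g\,d\nu_1$ and $\nu_1(E)=0$, then $\mu_1(E)=\frac{1}{d}\tr\nu_1(E)=0$, so the defining integral for $\int_E g\,d\nu_1$ runs over a $\mu_1$-null set and vanishes for every state $\rho$; hence $\tr(\rho\,\nu_2(E))=0$ for all $\rho$, giving $\nu_2(E)=0$ and thus $\nu_2\ll_{\rm ac}\nu_1$.

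For $(1)\Rightarrow(2)$ I would first record that, since each $\nu_j$ and $\mu_j$ are mutually absolutely continuous and the trace detects nullity of a positive operator (so $\nu_1(E)=0\iff\mu_1(E)=0$), the hypothesis $\nu_2\ll_{\rm ac}\nu_1$ is equivalent to the classical $\mu_2\ll_{\rm ac}\mu_1$. This produces the scalar derivative $\frac{d\mu_2}{d\mu_1}\in L^1(X,\mu_1)$ and, by the classical chain rule applied entrywise to the coordinate measures, the identity $\frac{d\nu_2}{d\mu_1}=\frac{d\mu_2}{d\mu_1}\,\frac{d\nu_2}{d\mu_2}$ holding $\mu_1$-almost everywhere. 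I would then define $g$ by the displayed formula and verify the representation by unwinding the definition of $\int_E g\,d\nu_1$: the symmetric sandwiching by $(\frac{d\nu_1}{d\mu_1})^{1/2}$ collapses the outer factors through $(\frac{d\nu_1}{d\mu_1})^{1/2}(\frac{d\nu_1}{d\mu_1})^{-1/2}=q$, the range projection of $\frac{d\nu_1}{d\mu_1}$, leaving the integrand $\frac{d\mu_2}{d\mu_1}\,q\,\frac{d\nu_2}{d\mu_2}\,q$, which must be reconciled with $\frac{d\mu_2}{d\mu_1}\,\frac{d\nu_2}{d\mu_2}$.

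Measurability of $g$ is handled by approximation: writing $h=\frac{d\nu_1}{d\mu_1}$, the resolvent limits $(h+\frac{1}{n}1)^{-1/2}\to h^{-1/2}$ and $h(h+\frac{1}{n}1)^{-1}\to q$ exhibit both $h^{-1/2}$ and $q$ as pointwise limits of Borel operator-valued functions built algebraically from $h$, so $g$ is Borel. Uniqueness up to $\nu_1$-null sets is the assertion that $\int_E g\,d\nu_1=\int_E g'\,d\nu_1$ for all $E$ forces $q\,g\,q=q\,g'\,q$ $\mu_1$-a.e.; since $g$ enters the integral only compressed to the range of $h$, any two solutions agree off the $\nu_1$-null set $\{h=0\}$, modulo the inherent ambiguity of $g$ on $\ker h$.

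The step I expect to be the main obstacle---and the real mathematical content---is the reconciliation flagged above: one must show that, $\mu_2$-almost everywhere, $q\,\frac{d\nu_2}{d\mu_2}\,q=\frac{d\nu_2}{d\mu_2}$, equivalently that the support projection of $\frac{d\nu_2}{d\mu_2}(x)$ lies beneath that of $\frac{d\nu_1}{d\mu_1}(x)$, i.e. $\ker\frac{d\nu_1}{d\mu_1}(x)\subseteq\ker\frac{d\nu_2}{d\mu_2}(x)$. The natural attack is a measurable-selection argument: were the inclusion to fail on a set of positive $\mu_1$-measure, one would measurably choose unit vectors $\xi(x)\in\ker\frac{d\nu_1}{d\mu_1}(x)$ with $\langle\frac{d\nu_2}{d\mu_2}(x)\xi(x),\xi(x)\rangle>0$ and try to manufacture a Borel set with $\nu_1$-measure zero but nonzero $\nu_2$. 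The difficulty I anticipate is that the literal nullity condition $\nu_1(E)=0$ detects only sets on which $\frac{d\nu_1}{d\mu_1}$ vanishes \emph{as a whole operator}, whereas the inclusion concerns \emph{degeneracy directions} of $\frac{d\nu_1}{d\mu_1}$ at points where the operator is nonzero, and these directions are invisible to the nullity test. The same tension reappears in the claimed boundedness of $g$, since $h^{-1/2}$ amplifies small eigenvalues of $h$. I therefore expect the crux to lie in sharpening how absolute continuity is exploited---controlling support projections, or imposing a domination of the form $\nu_2\le c\,\nu_1$, rather than merely comparing null sets---and pinning this down is where the genuine care is required.
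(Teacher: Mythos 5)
Your route for $(1)\Rightarrow(2)$ is the same as the paper's: establish $\mu_2\ll_{\rm ac}\mu_1$, apply the classical chain rule entrywise to the coordinate measures to get $\frac{d\nu_2}{d\mu_1}=\frac{d\mu_2}{d\mu_1}\,\frac{d\nu_2}{d\mu_2}$, define $g$ by the displayed formula, and verify the representation by unwinding the definition of $\int_E g\,d\nu_1$ against states. The step you single out as the crux --- showing that $q\,\frac{d\nu_2}{d\mu_2}\,q=\frac{d\nu_2}{d\mu_2}$ almost everywhere, where $q$ is the range projection of $\frac{d\nu_1}{d\mu_1}$ --- is precisely the point at which the paper simply writes $\left(\frac{d\nu_1}{d\mu_1}\right)^{1/2}g\left(\frac{d\nu_1}{d\mu_1}\right)^{1/2}=\left(\frac{d\mu_2}{d\mu_1}\right)\frac{d\nu_2}{d\mu_2}$ with no justification. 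So your proposal is not missing anything that the paper supplies.

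Moreover, your diagnosis is correct and the difficulty is not merely technical: the needed support inclusion can fail, so $(1)\Rightarrow(2)$ is false as stated. Take $X=\{a,b\}$, $d=2$, and
\[
\nu_1(\{a\})=\begin{pmatrix}1&0\\0&1/2\end{pmatrix},\quad
\nu_1(\{b\})=\begin{pmatrix}0&0\\0&1/2\end{pmatrix},\quad
\nu_2(\{a\})=\begin{pmatrix}1/2&0\\0&1\end{pmatrix},\quad
\nu_2(\{b\})=\begin{pmatrix}1/2&0\\0&0\end{pmatrix}.
\]
The only Borel set of $\nu_1$-measure zero is $\emptyset$, so $\nu_2\ll_{\rm ac}\nu_1$ holds trivially. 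Yet the paper's own finite-support integral formula gives $\int_{\{b\}}g\,d\nu_1=\nu_1(\{b\})^{1/2}\,g(b)\,\nu_1(\{b\})^{1/2}$, whose range lies in ${\rm ran}\,\nu_1(\{b\})=\mathbb{C}e_2$, while ${\rm ran}\,\nu_2(\{b\})=\mathbb{C}e_1$; no choice of $g(b)$ produces $\nu_2(\{b\})$. Exactly as you observe, the null-set definition of $\ll_{\rm ac}$ detects only where $\frac{d\nu_1}{d\mu_1}$ vanishes as a whole operator and is blind to its degeneracy directions (the same example also defeats the claimed uniqueness of $g$ up to $\nu_1$-null sets, since only the compression $q\,g\,q$ is ever seen by the integral). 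The statement and proof would be repaired by strengthening hypothesis (1) to an operator domination such as $\nu_2(E)\le c\,\nu_1(E)$ for all $E$, which forces $\frac{d\nu_2}{d\mu_1}\le c\,\frac{d\nu_1}{d\mu_1}$ a.e., hence both the support inclusion and the bound $\|g\|\le c$ that you also rightly questioned. In short, the gap you flagged is genuine, but it is a gap in the theorem as stated rather than one you could have closed.
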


\begin{proof}
Assume that $\nu_2 \ll_{\rm ac} \nu_1$. If $\mu_1(E)= \frac{1}{d}\tr(\nu_1(E))=0$, then $\nu_1(E)=0$. 
By assumption $\nu_2(E)=0$ and therefore $\mu_2(E)= \frac{1}{d}\tr(\nu_2(E))=0$, which proves that $\mu_2 \ll_{\rm ac} \mu_1$.
Therefore, for any $E\in\bor(X)$, we have
\[
\nu_2^{(i,j)}(E)=\langle \nu_2(E) e_j, e_i\rangle \ll_{\rm ac} \langle \nu_1(E) e_j, e_i \rangle = \nu_1^{(i,j)}(E).
\]
Coordinate-wise we obtain $\nu_2^{(i,j)} \ll_{\rm ac} \mu_1$.
By applying the chain rule for the classical Radon-Nikod\'ym derivatives we obtain
\[
\displaystyle \frac{d\nu_2^{(i,j)}}{d\mu_1} = \displaystyle \frac{d\nu_2^{(i,j)}}{d\mu_2} \displaystyle \frac{d\mu_2}{d\mu_1}
\]
Hence,
\[
\displaystyle \frac{d\nu_2}{d\mu_1} = \displaystyle \frac{d\nu_2}{d\mu_2} \displaystyle \frac{d\mu_2}{d\mu_1},
\]
where $\displaystyle\frac{d\nu_2}{d\mu_2}:X\rightarrow\bh_+$ and $\displaystyle \frac{d\mu_2}{d\mu_1}:X\rightarrow\mathbb R_+$.
With $g$ as above,
\[
\left( \frac{d\nu_1}{d\mu_1}\right)^{1/2} g  \left( \frac{d\nu_1}{d\mu_1}\right)^{1/2} \,=\,
\left(\frac{d\mu_2}{d\mu_1}\right)\,\frac{d\nu_2}{d\mu_2}\,.
\]
Thus, for any state $\rho$ and $E\in\bor(X)$,
\[
\begin{array}{rcl}
\tr\left ( \rho \displaystyle \int_E g d\nu_1 \right) &=& \displaystyle \int_E \tr \left ( \rho \left(\frac{d\nu_1}{d\mu_1}\right)^{1/2}g 
\left(\frac{d\nu_1}{d\mu_1}\right)^{1/2} \right) d\mu_1
\\ && \\
&=& \displaystyle\int_E  \left[\tr \left( \rho \frac{d\nu_2}{d\mu_2} \right) \right]\frac{d\mu_2}{d\mu_1}\, d\mu_1
\\ && \\
&=& \displaystyle \int_E \tr \left( \rho \frac{d\nu_2}{d\mu_2}\right)d\mu_2
\\ && \\
&=& \tr (\rho \, \nu_2(E)).
\end{array}
\]
Therefore, by definition of the integral, $\nu_2(E)=\displaystyle\int_E g \, d\nu_1$ for every $E\in\bor(X)$.
Passing to the $d^2$ coordinate measures and using the uniqueness of the classical Radon-Nikod\'ym derivative,
one deduces that $g$ is unique up to sets of $\nu_1$-measure zero.

Conversely, assume such a function $g:X\rightarrow\bh$ exists such that
\[
\nu_2(E)\,=\,\int_E g\,d\nu_1 ,\;\mbox{ for every }E\in\bor(X)\,.
\]
If $\nu_1(E)=0$, then $\nu_2(E)=\int_E g\,d\nu_1=0$ and thus $\nu_2 \ll_{\rm ac} \nu_1$.
\end{proof}

The function $g$ in \eqref{rn deriv} is called a \emph{non-principal Radon-Nikod\'ym derivative} of $\nu_2$ with respect to $\nu_1$.

\section{Topology of $\povm_\hil(X)$}

In classical probability, $P(X)$ is a weak*-closed subset of the unit sphere of the dual space of the
abelian C$^*$-algebra $C(X)$. Hence, by the Banach-Alaoglu Theorem, $P(X)$ is compact. In this section
we introduce an analogous topology on $\povm_\hil(X)$ so that it not only is a compact topological space, but also 
has the property that the POVMs with finite support are dense in  $\povm_\hil(X)$.

\begin{definition}
Define $\Omega_0:\elem_\hil(X) \rightarrow \povm_\hil(X)$ by
\[
\Omega_0\left( \sum_{j=1}^n t_j^*\varrho_{x_j}t_j \right) = \sum_{j=1}^n \delta_{x_j}t_j^* t_j\,.
\]
\end{definition}

\begin{proposition}
The map $\Omega_0$ is $C^*$-affine.
\end{proposition}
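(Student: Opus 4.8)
The plan is to verify directly the defining identity of a $C^*$-affine map, namely that $\Omega_0$ commutes with the formation of $C^*$-convex combinations, by reducing both sides to the explicit elementary form of the maps involved. The forward-looking strategy is: fix a $C^*$-convex combination of elementary maps, expand each summand in its elementary representation, regroup the operator coefficients, and read off the images under $\Omega_0$ on each side.

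First I would fix elementary maps $\phi_1,\dots,\phi_N\in\elem_\hil(X)$ together with $C^*$-convex coefficients $a_1,\dots,a_N\in\bh$ (so that $\sum_{i=1}^N a_i^*a_i=1$), and form the $C^*$-convex combination $\phi=\sum_{i=1}^N a_i^*\phi_i a_i$, understood as the map $f\mapsto\sum_i a_i^*\phi_i(f)a_i$ on $C(X)\otimes\bh$. Writing each $\phi_i$ in elementary form $\phi_i(f)=\sum_j (t_j^{(i)})^* f(x_j^{(i)}) t_j^{(i)}$ and substituting gives $\phi(f)=\sum_{i,j} (t_j^{(i)}a_i)^* f(x_j^{(i)}) (t_j^{(i)}a_i)$, which exhibits $\phi$ again as an elementary map, now with operator coefficients $s_{ij}=t_j^{(i)}a_i$ evaluated at the points $x_j^{(i)}$. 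A short check confirms $\sum_{i,j} s_{ij}^*s_{ij}=\sum_i a_i^*\bigl(\sum_j (t_j^{(i)})^* t_j^{(i)}\bigr)a_i=\sum_i a_i^*a_i=1$, so $\phi$ is indeed unital and lies in $\elem_\hil(X)$, consistent with the $C^*$-convexity of $\elem_\hil(X)$.

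Next I would evaluate both sides. Applying the definition of $\Omega_0$ to the elementary representation of $\phi$ yields $\Omega_0(\phi)=\sum_{i,j}\delta_{x_j^{(i)}}\,s_{ij}^*s_{ij}=\sum_{i,j}\delta_{x_j^{(i)}}\,a_i^*(t_j^{(i)})^* t_j^{(i)} a_i$. On the other hand, since the Dirac coefficients $\delta_{x}$ are scalar-valued and hence pass through the operator coefficients $a_i$, the $C^*$-convex combination of the POVMs $\Omega_0(\phi_i)=\sum_j\delta_{x_j^{(i)}}(t_j^{(i)})^* t_j^{(i)}$ evaluates, on each Borel set of $X$, to $\sum_i a_i^*\Omega_0(\phi_i)a_i=\sum_{i,j}\delta_{x_j^{(i)}}\,a_i^*(t_j^{(i)})^* t_j^{(i)} a_i$. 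The two expressions coincide, which is exactly the identity $\Omega_0\bigl(\sum_i a_i^*\phi_i a_i\bigr)=\sum_i a_i^*\Omega_0(\phi_i)a_i$.

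The one point that genuinely needs care---and which I regard as the main obstacle---is the well-definedness of $\Omega_0$ on $\elem_\hil(X)$: an elementary map admits many representations $\sum_j t_j^*\varrho_{x_j}t_j$, and a priori the POVM $\sum_j\delta_{x_j}t_j^*t_j$ could depend on the chosen representation, so that even the symbol $\Omega_0(\phi)$ is ambiguous. I would therefore either invoke the independence of the representation established when $\Omega_0$ is introduced, or observe that the regrouping above produces a canonical elementary representation of $\phi$ out of fixed representations of the $\phi_i$, so that the computation is internally consistent. With well-definedness secured, the remaining content is precisely the routine algebraic regrouping carried out above.
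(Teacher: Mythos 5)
Your proof is correct and follows essentially the same route as the paper: expand each $\phi_i$ in elementary form, absorb the $C^*$-convex coefficients into the operator coefficients via $s_{ij}=t_j^{(i)}a_i$, and read off $\Omega_0$ on both sides. Your additional remark on the well-definedness of $\Omega_0$ across different elementary representations is a legitimate point that the paper leaves implicit, and your resolution of it is fine.
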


\begin{proof} If $\phi=\displaystyle\sum_{j=1}^n t_j^*\varrho_{x_j}t_j$, then
$\Omega_0(\phi)=\sum_{j=1}^n \delta_{x_j} t_j^*t_j$.  Let $a\in\bh$ and note that
\[
\Omega_0(a^* \phi \, a)=\sum_{j=1}^n \delta_{x_j}a^*t_j^*t_ja=a^*\Omega_0(\phi)a \,.
\]
Thus, if $\phi_1,...,\phi_m\in\elem_\hil$ and if $a_1,...,a_m\in\bh$ are C$^*$-convex coefficients, then
\[
\Omega_0\left( \sum_{k=1}^m a_k^* \phi_k \, a_k \right) = \sum_{k=1}^m a_k^* \Omega_0(\phi_k) \, a_k \in \povm_\hil(X) \, ,
\]
which completes the proof.
\end{proof}

We aim to show that $\Omega_0$ admits a C$^*$-affine extension to $\ucp_\hil(X)$. 
Before discussing the extension map, we will describe the topologies of $\ucp_\hil(X)$ and $\povm_\hil(X)$.
Because topologies can be characterised by how nets converge, we begin with the BW-topology of $\ucp_\hil(X)$
and use $\Gamma$ to induce a topology on $\povm_\hil(X)$.

\begin{definition} {\rm (The bounded-weak topology)}
A net $\{\phi_\gamma\}_{\gamma\in\Lambda} \subset \ucp_\hil(X)$ converges to $\phi \in \ucp_\hil(X)$ 
if $\displaystyle\lim_\gamma \left\|\phi_\gamma (f) - \phi (f) \right\| = 0$ for all $f\in C(X)\otimes\bh$.
\end{definition}

In the BW-topology, $\ucp_\hil(X)$ is a compact space  \cite{arveson1969,Paulsen-book}.
We use the topology of $\ucp_\hil(X)$ to induce a topology on $\povm_\hil(X)$ as follows.

\begin{definition}
A net $\{\nu_\gamma\}_{\gamma\in\Lambda} \subset \povm_\hil(X)$ converges to $\nu\in\povm_\hil(X)$ 
if
\[
\int_X f\,d\nu\,=\,\lim_\gamma\,\int_X f\,d\nu_\gamma\;\mbox{ for every } f\in C(X)\otimes\bh\,.
\]
That is, $\nu_\gamma\rightarrow\nu$ 
if the net $\{\Gamma(\nu_\gamma)\}_{\gamma\in\Lambda}$ converges to $\Gamma(\nu)$ in $\ucp_\hil$.
\end{definition}
 
\begin{theorem}\label{top} There exists a $C^*$-affine function $\Omega:\ucp_\hil(X)\rightarrow\povm_\hil(X)$
such that 
\begin{enumerate}
\item $\Omega_{\vert \elem_\hil(X)}=\Omega_0$ and
\item $\Omega\circ\Gamma={\rm id}_{\povm_\hil(X)}$.
\end{enumerate}
\end{theorem}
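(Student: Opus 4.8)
The plan is to define $\Omega$ by restricting a ucp map to the abelian $C^*$-subalgebra $C(X)\otimes1\subset C(X)\otimes\bh$ and then invoking the operator-valued Riesz representation theorem; this is the precise point at which the Riesz-Markov philosophy advertised in the abstract enters. Given $\phi\in\ucp_\hil(X)$, the map $\Phi_\phi:C(X)\rightarrow\bh$ given by $\Phi_\phi(g)=\phi(g\otimes1)$ is unital and positive, since $\phi$ is unital and positive and $g\otimes1\geq0$ whenever $g\geq0$. Because $X$ is compact Hausdorff, for each pair $\xi,\eta\in\hil$ the bounded functional $g\mapsto\langle\Phi_\phi(g)\xi,\eta\rangle$ is represented by a unique regular complex Borel measure $\nu_{\xi,\eta}$ (classical Riesz-Markov), and sesquilinearity in $(\xi,\eta)$ lets these assemble into a unique POVM $\nu$ with $\langle\nu(E)\xi,\eta\rangle=\nu_{\xi,\eta}(E)$ and $\Phi_\phi(g)=\int_Xg\,d\nu$ for all $g\in C(X)$, where the right-hand side is ordinary scalar integration against $\nu$. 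Positivity gives $0\leq\nu(E)$ and unitality gives $\nu(X)=1$, whence $0\leq\nu(E)\leq1$, so each $\nu(E)$ is an effect and $\nu\in\povm_\hil(X)$. I then set $\Omega(\phi)=\nu$; the uniqueness clause makes $\Omega$ well defined.

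Next I would verify property (1) and $C^*$-affinity, both of which reduce to elementary properties of scalar integration. For an elementary map $\phi=\sum_jt_j^*\varrho_{x_j}t_j$ one computes $\Phi_\phi(g)=\sum_jt_j^*g(x_j)t_j=\sum_jg(x_j)\,t_j^*t_j$, which is exactly the scalar integral of $g$ against $\sum_j\delta_{x_j}t_j^*t_j$; uniqueness then forces $\Omega(\phi)=\sum_j\delta_{x_j}t_j^*t_j=\Omega_0(\phi)$. For $C^*$-affinity, suppose $\phi=\sum_ka_k^*\phi_ka_k$ with $\sum_ka_k^*a_k=1$. Then $\Phi_\phi(g)=\sum_ka_k^*\Phi_{\phi_k}(g)a_k=\sum_ka_k^*\left(\int_Xg\,d\Omega(\phi_k)\right)a_k$, and since scalar integration is additive in the measure and satisfies $\int_Xg\,d(a^*\nu a)=a^*\left(\int_Xg\,d\nu\right)a$, this equals $\int_Xg\,d\left(\sum_ka_k^*\Omega(\phi_k)a_k\right)$. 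As $\sum_ka_k^*\Omega(\phi_k)a_k$ is again a POVM (the $C^*$-convexity of $\povm_\hil(X)$), uniqueness yields $\Omega(\phi)=\sum_ka_k^*\Omega(\phi_k)a_k$, so $\Omega$ is $C^*$-affine on all of $\ucp_\hil(X)$.

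The crux is the identity $\Omega\circ\Gamma={\rm id}$, which amounts to showing that the operator-valued integral of Theorem \ref{pairing}, restricted to the diagonal functions $g\otimes1$, collapses to ordinary scalar integration against $\nu$. Fixing $\nu\in\povm_\hil(X)$ with induced measure $\mu$ and principal Radon-Nikod\'ym derivative $\frac{d\nu}{d\mu}$, I would show for $f=g\otimes1$ and any state $\rho$ that $\tr\left(\rho\,\phi_\nu(g\otimes1)\right)=\int_Xg(x)\,\tr\!\left(\rho\,\frac{d\nu}{d\mu}(x)\right)d\mu(x)$, the two square roots of $\frac{d\nu}{d\mu}$ recombining because the scalar $g(x)$ commutes with them. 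The defining relation $\langle\nu(E)\xi,\eta\rangle=\int_E\langle\frac{d\nu}{d\mu}(x)\xi,\eta\rangle\,d\mu(x)$ identifies $\tr\!\left(\rho\,\frac{d\nu}{d\mu}\right)$ as the classical Radon-Nikod\'ym derivative of $E\mapsto\tr(\rho\,\nu(E))$ with respect to $\mu$, so the integral above equals $\int_Xg\,d[\tr(\rho\,\nu)]=\tr\left(\rho\int_Xg\,d\nu\right)$ with the scalar integral on the right. Hence $\Phi_{\phi_\nu}(g)=\int_Xg\,d\nu$ for every $g$, and uniqueness of the representing POVM gives $\Omega(\Gamma(\nu))=\nu$. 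I expect the main obstacle to be the careful statement and use of the operator-valued Riesz-Markov correspondence (positivity, regularity, and the uniqueness that makes $\Omega$ well defined) together with this reduction of the operator-valued integral to a scalar one on $C(X)\otimes1$; once both are in hand the remaining verifications are purely formal.
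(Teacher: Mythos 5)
Your proof is correct, but it takes a genuinely different route from the paper's. The paper never restricts to $C(X)\otimes 1$: it first proves that $\elem_\hil(X)$ is BW-dense in $\ucp_\hil(X)$ (via Arveson's description of the extreme points of $\ucp_\hil(X)$ together with the Kre\v{\i}n--Milman theorem), then defines $\Omega(\phi)$ for general $\phi$ as a limit of $\Omega_0(\phi_\gamma)$ along an approximating net of elementary maps, recovering the coordinate measures $\nu_{ij}$ as weak* limits of the Riesz measures of the functionals $g\mapsto\langle\phi_\gamma(g\otimes e_{ij})e_j,e_i\rangle$; C$^*$-affinity and $\Omega\circ\Gamma=\mathrm{id}$ are then obtained by pushing these properties through the limits. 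Your construction --- restrict $\phi$ to the abelian subalgebra, invoke the operator-valued Riesz--Markov theorem, and observe that the quantum integral of $g\otimes 1$ collapses to scalar integration against $\nu$ because the two square roots of $\frac{d\nu}{d\mu}$ recombine past the scalar $g(x)$ --- is more direct: well-definedness and C$^*$-affinity fall out of the uniqueness clause rather than requiring independence of the approximating net, and your verification of $\Omega\circ\Gamma=\mathrm{id}$ is actually tighter than the paper's somewhat terse limit argument. What the paper's route buys in exchange is the density of $\elem_\hil(X)$ in $\ucp_\hil(X)$, which is not a by-product of your argument but is needed separately for Corollary \ref{approx} (density of finitely supported POVMs). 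Two minor points to keep in mind: you should note that the two constructions agree (they do, since both restrict to $\Omega_0$ on $\elem_\hil(X)$ and both produce the unique regular POVM representing $g\mapsto\phi(g\otimes 1)$), and the uniqueness/regularity issue for non-regular Borel POVMs that your Riesz step quietly assumes is present to exactly the same degree in the paper's proof, so it is not a defect of your approach relative to theirs.
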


\begin{proof} We first prove that $\elem_\hil(X)$ is BW-dense in $\ucp_\hil(X)$.
Because $\ucp_\hil$ is BW-compact and convex, the
Kre\v{\i}n-Milman theorem asserts that the convex hull of the extreme points of $\ucp_\hil(X)$
is a dense subset of $\ucp_\hil(X)$.
If $\phi\in\ext\,(\ucp_\hil(X))$, then there exist irreducible representations $\pi_j:C(X)\otimes\bh\rightarrow\bof(\hil_{\pi_j})$
and an isometry $v:\hil\rightarrow\hil_\pi=\bigoplus_{j=1}^m\hil_{\pi_j}$ such that $\phi=v^*\pi v$ and the linear map
$y\mapsto v^*yv$ is one-to-one on the commutant $\mathcal N$ of $\pi\left(C(X)\otimes\bh\right)$
\cite[Theorem 1.4.6]{arveson1969}.
Because $\dim\hil=d<\infty$, every irreducible representation $\pi_j$ of $C(X)\otimes\bh$ takes place on a Hilbert space $\hil_{\pi_j}$ of dimension $d$ and 
has the form $\pi=u_j^*\varrho_{x_j}u_j$, where $\varrho_{x_j}\in\ucp_\hil(X)$ is spectral and $u_j:\hil_{\pi_j}\rightarrow\hil$ is unitary. Because
$\hil_{\pi_j}$ and $\hil$ are of dimension $d$, we may assume without loss of generality that $\hil_{\pi_j}=\hil$ for every $j$, whence
$\pi_j=\alpha_j\circ\varrho_{x_j}$ for the automorphism $\alpha_j\in\aut(\bh)$ given by $\alpha_j(z)=u_j^*zu_j$.
If $q_j\in\bof(\hil_\pi)$ is the projection of $\hil_\pi$ onto the direct summand $\hil_{\pi_j}=\hil$, then with $a_i=u_iq_iv\in\bh$ we obtain
\[
\phi\,=\,\sum_{j=1}^m a_j^*\varrho_{x_j}a_j\,,
\]
which is an element of $\elem_\hil(X)$. Therefore, $\elem_\hil(X)$ contains the extreme points of $\ucp_\hil(X)$. By the 
convexity of $\elem_\hil(X)$ and the Kre\v{\i}n-Milman theorem we deduce that $\elem_\hil(X)$ is BW-dense in $\ucp_\hil(X)$.

Now assume that $\phi \in \ucp_\hil(X)\setminus \elem_\hil(X)$.  
Thus, there exists a net $\{\phi_\gamma\}_{\gamma\in\Lambda} \subset\elem_\hil$ such 
that $\|\phi_\gamma(f)-\phi(f)\|\rightarrow 0$ for all $f\in C(X)\otimes\bh$.  For each pair $(i,j)$ let $(\phi_\gamma)_{ij}\in C(X)^*$ be given by
\[
(\phi_\gamma)_{ij} (g) = \langle \phi_\gamma (g\otimes e_{ij}) \, e_j, e_i \rangle
\]
for all $g\in C(X)$. Thus, 
\[
\phi_\gamma = \displaystyle \sum_{i,j =1}^d (\phi_\gamma)_{ij}\otimes e_{ij}.
\]
By the classical
Riesz Representation Theorem, there exists a complex measure $\nu_{\gamma_{ i j}}$ on 
the Borel sets of $X$ such that 
\[
(\phi_\gamma)_{ij}(g) = \displaystyle \int_X g \,d\nu_{\gamma_{ i j}} \; \mbox{for all } g\in C(X).
\]
Since $\phi_\gamma(f)\rightarrow \phi(f) \;\mbox{for all }f\in C(X)\otimes\bh$
 we have that $(\phi_\gamma)_{ij}(g)\rightarrow \phi_{ij}(g)$ for all $g\in C(X)$ and pairs $(i,j)$.  
Therefore $(\nu_{\gamma})_{ij} \rightarrow \nu_{ij}$.  
Define 
\[
\nu = \displaystyle \sum_{i,j=1}^d \nu_{ij}\otimes e_{ij} \in \povm_\hil(X)
\]
and let $\Omega(\phi)=\nu$.  This is well defined as $\nu$ is independent of the choice of approximating net $\{\phi_\gamma\}$.

Let $\phi_1, ..., \phi_m\in \ucp_\hil(X)$ and suppose that $\{\phi_{\gamma_j}\}_{\gamma_j \in\Lambda_j} \subset \elem_\hil(X)$ is an 
approximating net for $\phi_j$, for each
$j$. Let $\Lambda=\Lambda_1\times \cdots\times\Lambda_m$, ordered as a directed set in the natural way. 
If $t_1,...,t_m\in\bh$ are C$^*$-convex coefficients, 
then with respect to the natural induced directed-set ordering on $\Lambda_1\times \cdots\times\Lambda_m$, the ucp maps
\[
\left\{ \displaystyle \sum_{j=1}^m t_j^*\phi_{\gamma_j} t_j \right\}_{\gamma=(\gamma_1,...,\gamma_m)\in\Lambda}
\]
form an approximating net for $\displaystyle \sum_{j=1}^m t_j^*\phi_j t_j$.   
Because $\Omega_0$ is C$^*$-affine,
\[
\begin{array}{rcl}
\Omega \left(\displaystyle \sum_{j=1}^m t_j^*\phi_j t_j \right) &=& 
\displaystyle \lim_\gamma \Omega_0 \left( \displaystyle \sum_{j=1}^m t_j^*\phi_{\gamma_j} t_j \right)
\\ && \\
&=& \displaystyle \lim_\gamma \displaystyle \sum_{j=1}^m t_j^* \Omega_0(\phi_{\gamma_j}) t_j
\\ && \\
&=& \displaystyle \sum_{j=1}^m t_j^* \Omega(\phi_j) t_j\,,
\end{array}
\]
which establishes the extension $\Omega$ of $\Omega_0$.

Observe that if $\nu=\displaystyle\sum_{j=1}^n \delta_{x_j}h_j$, then $\Omega\circ\Gamma(\nu)=\nu$. Suppose now that $\nu\in\povm_\hil(X)$
is arbitrary and 
let $\{\phi_\gamma\}_\gamma$ be a net in $\elem_\hil(X)$ BW-convergent to $\Gamma(\nu)$. For each $\gamma$, let $\nu_\gamma=\Omega\phi_\gamma$
so that $\Omega\Gamma(\nu_\gamma)=\nu_\gamma$. As shown earlier, another realisation of $\nu$ is via
\[
\nu\,=\,\sum_{i,j=1}^d \nu_{ij}\otimes e_{ij}\,,
\]
where $\nu_{ij}$ is attained in the dual space of $C(X)$ as a the limit of the net of measures induced by the linear functionals
$(\phi_\gamma)_{ij}(g)=\langle \phi(g\otimes e_{ij})e_j,e_i\rangle$. Thus
$\int_X f\,d\nu_\gamma\rightarrow\int_X f\,d\nu$ for all $f\in C(X)\otimes \bh$ and $\Omega\Gamma(\nu)=\nu$.
 \end{proof}

\begin{corollary}
$\povm_\hil(X)$ is a compact space.
\end{corollary}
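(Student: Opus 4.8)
The plan is to derive compactness of $\povm_\hil(X)$ directly from the compactness of $\ucp_\hil(X)$ in the BW-topology together with the transform maps $\Gamma$ and $\Omega$ constructed in Theorem \ref{top}. The topology on $\povm_\hil(X)$ was defined precisely so that $\nu_\gamma\to\nu$ if and only if $\Gamma(\nu_\gamma)\to\Gamma(\nu)$ in $\ucp_\hil(X)$; this says that $\Gamma$ is, by construction, a homeomorphism onto its image. The key structural fact available to us is that $\Omega\circ\Gamma={\rm id}_{\povm_\hil(X)}$, so $\Gamma$ is injective and the image $\Gamma\bigl(\povm_\hil(X)\bigr)\subset\ucp_\hil(X)$ is exactly the set on which $\Omega$ is a left inverse.

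First I would argue that it suffices to show $\Gamma\bigl(\povm_\hil(X)\bigr)$ is a closed (hence compact) subset of the compact space $\ucp_\hil(X)$; since $\Gamma$ is a homeomorphism onto its image, the compactness of $\povm_\hil(X)$ follows immediately. To see closedness, suppose $\{\Gamma(\nu_\gamma)\}$ is a net in the image that BW-converges to some $\phi\in\ucp_\hil(X)$. Setting $\nu=\Omega(\phi)\in\povm_\hil(X)$, I would use the continuity properties of $\Omega$ established in the proof of Theorem \ref{top}---namely that $\Omega(\phi)$ is computed from the coordinate functionals $(\phi)_{ij}(g)=\langle\phi(g\otimes e_{ij})e_j,e_i\rangle$ via the Riesz Representation Theorem---to conclude that $\nu_\gamma=\Omega(\Gamma(\nu_\gamma))\to\Omega(\phi)=\nu$ in the topology of $\povm_\hil(X)$, and hence $\Gamma(\nu_\gamma)\to\Gamma(\nu)=\phi$. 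This forces $\phi=\Gamma(\nu)$ to lie in the image, so the image is BW-closed.

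Alternatively, and perhaps more cleanly, I would phrase the argument as: $\Gamma$ is a continuous injection with continuous left inverse $\Omega$, so $\Gamma$ is a homeomorphism of $\povm_\hil(X)$ onto $\Gamma\bigl(\povm_\hil(X)\bigr)$, and the identity $\Omega\circ\Gamma={\rm id}$ shows that $\Gamma\bigl(\povm_\hil(X)\bigr)=\Omega^{-1}$-style level set. The cleanest route is to note that $\Gamma\bigl(\povm_\hil(X)\bigr)$ is the image of a compact space under a continuous map into a Hausdorff space, once we know $\povm_\hil(X)$ is compact---but that is circular, so I must instead establish closedness of the image intrinsically as above.

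The main obstacle I anticipate is verifying that the limiting map $\phi$ actually lies in the image of $\Gamma$, rather than merely in $\ucp_\hil(X)$. The subtlety is that $\Gamma$ is far from surjective, so not every BW-limit of elements $\Gamma(\nu_\gamma)$ need a priori be of the form $\Gamma(\nu)$; the resolution is to exploit that $\Omega$ is defined on \emph{all} of $\ucp_\hil(X)$ and satisfies $\Omega\circ\Gamma={\rm id}$, together with the fact that the coordinate measures $\nu_{ij}$ recovered from $\phi$ via Riesz genuinely reassemble into a POVM $\nu=\Omega(\phi)$ whose $\Gamma$-transform has the same coordinate functionals as $\phi$. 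Matching $\Gamma(\Omega(\phi))=\phi$ on the limiting net requires precisely the coordinate-wise convergence argument already carried out in the last paragraph of the proof of Theorem \ref{top}, so the technical burden is essentially discharged by reusing that computation.
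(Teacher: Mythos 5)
Your argument is essentially the paper's: the published proof takes an arbitrary net in $\povm_\hil(X)$, extracts a BW-convergent subnet of its image under $\Gamma$ inside the compact space $\ucp_\hil(X)$, and applies $\Omega$ to the limit to obtain the limit POVM --- which is just the subnet formulation of your closed-image argument, since the topology on $\povm_\hil(X)$ is by definition pulled back through $\Gamma$. The point you single out as the main obstacle, namely that a BW-limit $\phi$ of a net $\Gamma(\nu_\gamma)$ satisfies $\Gamma(\Omega\phi)=\phi$ so that $\nu_\gamma\rightarrow\Omega\phi$ in the induced topology, is precisely the step the paper's own two-line proof also relies on (and passes over in silence), so your proposal is correct to the same extent and by the same mechanism as the published argument.
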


\begin{proof} If $\{\nu\}_\gamma$ is a net in $\povm_\hil(X)$, then $\{\Gamma \nu_\gamma\}_{\gamma}$ is a net in $\ucp_\hil(X)$. 
Because $\ucp_\hil(X)$ is compact, there is a subnet $\{\Gamma \nu_{\gamma_j}\}_{j}$ which is convergent, say to $\phi$.
Thus, if $\nu=\Omega\phi$, then $\{\nu_{\gamma_j}\}_{j}$ converges to $\nu$. Hence, every net in $\povm_\hil(X)$
admits a convergent subnet, and so $\povm_\hil(X)$ is compact.
\end{proof}

A consequence of the Kre\v{\i}n-Milman Theorem and the proof of Theorem \ref{top} is:

\begin{corollary}\label{approx} The set of all quantum measurements with finite support is dense in $\povm_\hil(X)$.
\end{corollary}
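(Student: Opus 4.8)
The plan is to obtain this directly from the density of $\elem_\hil(X)$ in $\ucp_\hil(X)$ (established inside the proof of Theorem~\ref{top}) together with the left-inverse property $\Omega\circ\Gamma=\mathrm{id}$ and the definition of the topology on $\povm_\hil(X)$ as the one pulled back from $\ucp_\hil(X)$ via $\Gamma$. The key observation is that $\Omega_0$ sends an elementary map into a finitely supported POVM: indeed, $\Omega_0\bigl(\sum_{j=1}^n t_j^*\varrho_{x_j}t_j\bigr)=\sum_{j=1}^n\delta_{x_j}t_j^*t_j$, whose support is contained in $\{x_1,\dots,x_n\}$. So $\Omega(\elem_\hil(X))$ consists of finitely supported POVMs, and it suffices to show that this image is dense.

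First I would fix an arbitrary $\nu\in\povm_\hil(X)$ and consider $\Gamma(\nu)\in\ucp_\hil(X)$. By the BW-density of $\elem_\hil(X)$ in $\ucp_\hil(X)$ proved in Theorem~\ref{top}, there is a net $\{\phi_\gamma\}_{\gamma\in\Lambda}\subset\elem_\hil(X)$ with $\|\phi_\gamma(f)-\Gamma(\nu)(f)\|\to 0$ for every $f\in C(X)\otimes\bh$. Next I would set $\nu_\gamma=\Omega(\phi_\gamma)=\Omega_0(\phi_\gamma)$, which is a finitely supported POVM by the computation above. Then I would verify that $\nu_\gamma\to\nu$ in the topology of $\povm_\hil(X)$: by definition this means $\Gamma(\nu_\gamma)\to\Gamma(\nu)$ in the BW-topology. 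Since $\Omega\circ\Gamma=\mathrm{id}$, we have $\nu_\gamma=\Omega(\phi_\gamma)$, and the coordinate-measure argument in the proof of Theorem~\ref{top} shows precisely that $\int_X f\,d\nu_\gamma\to\int_X f\,d\nu$ for all $f\in C(X)\otimes\bh$, i.e.\ $\Gamma(\nu_\gamma)\to\Gamma(\nu)$. This is exactly the statement $\nu_\gamma\to\nu$.

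The only subtlety — and the step I would be most careful about — is confirming that convergence in the induced topology on $\povm_\hil(X)$ really does follow from the BW-convergence $\phi_\gamma\to\Gamma(\nu)$, rather than merely from the abstract gluing of nets. Concretely, one must track that the complex coordinate measures $(\nu_\gamma)_{ij}$ converge weak$^*$ in $C(X)^*$ to the coordinate measures $\nu_{ij}$ of $\nu$; this is furnished by the Riesz-representation identification $(\phi_\gamma)_{ij}(g)=\int_X g\,d(\nu_\gamma)_{ij}$ used in Theorem~\ref{top}, together with $(\phi_\gamma)_{ij}(g)\to\bigl(\Gamma(\nu)\bigr)_{ij}(g)=\int_X g\,d\nu_{ij}$. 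Assembling $\nu=\sum_{i,j=1}^d\nu_{ij}\otimes e_{ij}$ then yields $\int_X f\,d\nu_\gamma\to\int_X f\,d\nu$. Having exhibited a net of finitely supported POVMs converging to the arbitrary $\nu$, I conclude that the finitely supported POVMs are dense in $\povm_\hil(X)$.
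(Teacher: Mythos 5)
Your argument is correct and follows essentially the same route as the paper: the corollary is stated there as a direct consequence of the Kre\v{\i}n--Milman theorem and the proof of Theorem \ref{top}, namely the BW-density of $\elem_\hil(X)$ in $\ucp_\hil(X)$, the fact that $\Omega_0$ carries elementary maps to finitely supported POVMs, and the coordinate-measure convergence argument showing $\int_X f\,d\nu_\gamma\to\int_X f\,d\nu$. The subtlety you flag---that one must pass through the coordinate measures rather than through the identity $\Gamma(\nu_\gamma)=\phi_\gamma$, which can fail when the coefficients $t_j$ are not positive---is precisely the point handled in the final paragraph of the paper's proof of Theorem \ref{top}.
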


Operators $z$ acting on finite-dimensional Hilbert spaces $\hilk$ admit polar decompositions of the form $z=u|z|$, 
where $|z|=(z^*z)^{1/2}$ and $u\in\bof(\hilk)$ is unitary. (The unitary need not be unique if $z$ is not invertible.) Therefore, 
\[
\begin{array}{rcl}
\Gamma\circ\Omega_0 \left( \displaystyle \sum_{j=1}^n t_j^*\varrho_{x_j} t_j \right) &=& \Gamma \left( \displaystyle\sum_{j=1}^n \delta_{x_j}t_j^*t_j \right)
\\ && \\
&=& \displaystyle\sum_{j=1}^n (t_j^*t_j)^{1/2}\varrho_{x_j}(t_j^*t_j)^{1/2}
\\ && \\
&=& \displaystyle\sum_{j=1}^n |t_j|\, \left(\alpha_j\circ\varrho_{x_j} \right)|t_j|,
\end{array}
\]
where $t_j=u_j\,|t_j|$ is a polar decomposition of $t_j$ and $\alpha_j\in\aut(\bh)$ is given by $\alpha_j(z)=u_j^*zu_j$. 
Thus, if $\phi\in\epos_\hil(X)$, we have $\Gamma\circ\Omega_0(\phi)=\phi$.

Let $\overline{\epos_\hil(X)}$ denote the BW-closure of $\epos_\hil(X)$ in $\ucp_\hil(X)$.

\begin{corollary} {\rm (Riesz Representation Theorem)} 
For every $\phi\in\overline{\epos_\hil(X)}$ there is a
unique $\nu\in\povm_\hil(X)$
such that
\[
\phi(f)\,=\,\int_X f\,d\nu\,\;\mbox{ for every }f\in C(X)\otimes\bh\,.
\]
\end{corollary}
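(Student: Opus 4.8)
The plan is to establish existence and uniqueness separately, drawing on the machinery already developed. For uniqueness, suppose $\nu_1,\nu_2\in\povm_\hil(X)$ both represent $\phi$, so that $\int_X f\,d\nu_1=\int_X f\,d\nu_2$ for all $f\in C(X)\otimes\bh$. Applying this to the functions $f=g\otimes e_{ij}$ for $g\in C(X)$ and reading off matrix coefficients, I would recover the $d^2$ scalar complex measures $(\nu_k)_{ij}$ for $k=1,2$ and show they agree as elements of $C(X)^*$; by the uniqueness clause of the classical Riesz Representation Theorem for $C(X)$, this forces $(\nu_1)_{ij}=(\nu_2)_{ij}$ for every pair $(i,j)$, hence $\nu_1=\nu_2$. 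This is the easy half.

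For existence, the key observation is that $\phi\in\overline{\epos_\hil(X)}$ means there is a net $\{\phi_\gamma\}\subset\epos_\hil(X)$ with $\phi_\gamma\to\phi$ in the BW-topology. The plan is to set $\nu=\Omega(\phi)$, using the $C^*$-affine left inverse $\Omega$ constructed in Theorem \ref{top}, and then verify that $\Gamma(\nu)=\phi$, which is exactly the assertion $\phi(f)=\int_X f\,d\nu$ for all $f$. The crucial input is the computation displayed just before the corollary, which shows that $\Gamma\circ\Omega_0$ is the identity on $\epos_\hil(X)$: for every $\gamma$ we have $\Gamma\circ\Omega_0(\phi_\gamma)=\phi_\gamma$, and since $\Omega$ extends $\Omega_0$ (part (1) of Theorem \ref{top}), this reads $\Gamma(\Omega(\phi_\gamma))=\phi_\gamma$.

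The main obstacle, and the step requiring the most care, is passing to the limit to conclude $\Gamma(\Omega(\phi))=\phi$. This requires the joint continuity (or at least separate continuity in the appropriate sense) of the maps $\Gamma$ and $\Omega$ with respect to the BW-topology and the induced topology on $\povm_\hil(X)$. By the very definition of convergence in $\povm_\hil(X)$, a net $\nu_\gamma\to\nu$ precisely when $\Gamma(\nu_\gamma)\to\Gamma(\nu)$ in the BW-topology, so $\Gamma$ is by construction a homeomorphism onto its image; thus it suffices to show $\Omega$ is BW-continuous. I would argue this by unwinding the construction of $\Omega$ in the proof of Theorem \ref{top}: since $\Omega(\phi)$ is defined coordinatewise as the measure $\nu=\sum_{i,j}\nu_{ij}\otimes e_{ij}$ whose scalar components $\nu_{ij}$ are the weak$^*$-limits (via the classical Riesz correspondence) of the functionals $(\phi_\gamma)_{ij}(g)=\langle\phi_\gamma(g\otimes e_{ij})e_j,e_i\rangle$, BW-convergence $\phi_\gamma\to\phi$ yields $(\phi_\gamma)_{ij}\to\phi_{ij}$ in $C(X)^*$ for each $(i,j)$, whence $\Omega(\phi_\gamma)\to\Omega(\phi)$.

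Putting these together: with $\nu=\Omega(\phi)$ we have $\phi_\gamma=\Gamma(\Omega(\phi_\gamma))=\Gamma(\nu_\gamma)$ where $\nu_\gamma=\Omega(\phi_\gamma)$, and the continuity of $\Omega$ gives $\nu_\gamma\to\nu$, while $\phi_\gamma\to\phi$; by the definition of the topology on $\povm_\hil(X)$ the relation $\nu_\gamma\to\nu$ is equivalent to $\Gamma(\nu_\gamma)\to\Gamma(\nu)$, so $\phi_\gamma=\Gamma(\nu_\gamma)\to\Gamma(\nu)$. Uniqueness of BW-limits in $\ucp_\hil(X)$ then forces $\Gamma(\nu)=\phi$, which is the desired integral representation. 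I expect the only genuine subtlety to be confirming that the coordinatewise weak$^*$-limits assemble into a bona fide POVM $\nu$ rather than merely a matrix of complex measures, but this is already handled inside the proof of Theorem \ref{top}, so it may simply be cited.
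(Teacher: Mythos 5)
Your existence argument is the paper's own: set $\nu=\Omega(\phi)$ and deduce $\Gamma(\nu)=\phi$ from the identity $\Gamma\circ\Omega_0={\rm id}$ on $\epos_\hil(X)$ together with a passage to the limit along an approximating net. The paper compresses this to the single line $\Gamma\nu=\Gamma\Omega\phi=\phi$, and your unwinding of the limiting step leans on exactly the continuity properties built into the construction of $\Omega$ in Theorem \ref{top}, so that half is fine (modulo the fact that the continuity of $\Omega$ is something you are citing from Theorem \ref{top} rather than re-proving).

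The uniqueness half, however, has a genuine gap. You claim that testing $\int_Xf\,d\nu_1=\int_Xf\,d\nu_2$ on $f=g\otimes e_{ij}$ and reading off matrix coefficients recovers the scalar coordinate measures $(\nu_k)_{ij}$. It does not: the integral conjugates the integrand by $m=\left(\frac{d\nu}{d\mu}\right)^{1/2}$, and a direct computation from the definition gives
\[
\left\langle\left(\int_X(g\otimes e_{ij})\,d\nu\right)e_j,\,e_i\right\rangle\,=\,\int_X g\,\langle m e_i,e_i\rangle\,\langle m e_j,e_j\rangle\,d\mu\,,
\]
which involves only the diagonal entries of the square-rooted density and is not $\int_Xg\,d\nu_{ij}$. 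These functionals cannot separate two POVMs whose densities have square roots agreeing on the diagonal but differing off it, so the uniqueness clause of the classical Riesz theorem cannot be brought to bear on them. The repair is easy: test instead on $f=g\otimes 1$, for which
\[
\left\langle\left(\int_X(g\otimes 1)\,d\nu\right)e_j,\,e_i\right\rangle\,=\,\int_X g\,\frac{d\nu_{ij}}{d\mu}\,d\mu\,=\,\int_X g\,d\nu_{ij}\,,
\]
since the scalar function $g$ commutes with $m$; then the classical Riesz uniqueness applied to each of the $d^2$ coordinate functionals of the restriction to $C(X)\otimes 1$ does force $\nu_1=\nu_2$. Alternatively---and this is the route the paper takes---uniqueness is immediate from part (2) of Theorem \ref{top}: if $\Gamma(\nu')=\phi$, then $\nu'=\Omega\Gamma(\nu')=\Omega(\phi)=\nu$. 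Either way the corollary stands, but the uniqueness step as you wrote it would fail.
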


\begin{proof} Let $\nu=\Omega\phi$ so that $\Gamma\nu=\Gamma\Omega\phi=\phi$. If $\nu'$ is another POVM 
for which $\Gamma(\nu')=\phi$, then $\nu'=\Omega\Gamma\nu'=\Omega\phi=\nu$.
\end{proof}

\section{Classical Randomness}

In this section we establish the following result.
 
 \begin{theorem}\label{ext pt}
The following statements are equivalent for $\nu\in\povm_\hil(X)$:
\begin{enumerate}
\item $\nu$ is an extreme point of $\povm_\hil(X)$;
\item there exist distinct $x_1,\dots, x_m\in X$ and $h_1,\dots,h_m\in\bh_+$ such that
    \begin{enumerate}
        \item the subspaces $\mbox{\rm ran}\,h_1,\dots, \mbox{\rm ran}\,h_m$  are weakly independent and
        \item $\nu\,=\,\displaystyle\sum_{j=1}^m\delta_{x_j}h_j$.
    \end{enumerate}
\end{enumerate}
\end{theorem}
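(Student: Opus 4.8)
The plan is to prove the two implications separately, treating $(2)\Rightarrow(1)$ as routine and concentrating the effort on the reduction to finite support inside $(1)\Rightarrow(2)$. For $(2)\Rightarrow(1)$, I would suppose $\nu=\sum_{j=1}^m\delta_{x_j}h_j$ has weakly independent ranges and that $\nu=\lambda\nu_1+(1-\lambda)\nu_2$ for $\nu_1,\nu_2\in\povm_\hil(X)$ and $\lambda\in(0,1)$. Since $\nu(E)=0$ whenever $E\cap\{x_1,\dots,x_m\}=\emptyset$ and $\nu_i(E)\geq 0$, both $\nu_i$ must be supported on $\{x_1,\dots,x_m\}$, say $\nu_i=\sum_j\delta_{x_j}h_j^{(i)}$ with $h_j=\lambda h_j^{(1)}+(1-\lambda)h_j^{(2)}$. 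Putting $D_j=h_j^{(1)}-h_j$, one has $\sum_j D_j=\nu_1(X)-\nu(X)=0$, and from $0\leq\lambda h_j^{(1)}\leq h_j$ one gets $\ker h_j\subseteq\ker h_j^{(1)}$, hence $\mathrm{ran}\,D_j\subseteq\mathrm{ran}\,h_j$. Weak independence then forces every $D_j=0$, so $\nu_1=\nu_2=\nu$ and $\nu$ is extreme.

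The substance lies in $(1)\Rightarrow(2)$, whose first task is to show that an extreme $\nu$ is supported on finitely many points. Here I would let $\mu$ be the induced probability measure of \eqref{induced} and consider the real-linear map $T\colon L^\infty_{\mathbb R}(X,\mu)\to\bh_{\rm sa}$ given by $T(g)=\int_X g\,d\nu$, which is well-defined because each $\nu_{ij}\ll_{\rm ac}\mu$ and lands in $\bh_{\rm sa}$ since $g$ is real and $\nu(E)$ is self-adjoint. Its range sits inside the $d^2$-dimensional space $\bh_{\rm sa}$, so if $L^\infty_{\mathbb R}(X,\mu)$ had dimension exceeding $d^2$ there would be a nonzero $g\in\ker T$, which after scaling satisfies $|g|\leq 1$. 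The set functions $\nu_\pm(E)=\int_E(1\pm g)\,d\nu$ would then be POVMs: positivity of $1\pm g$ makes each $\langle\nu_\pm(E)\xi,\xi\rangle$ the integral of a nonnegative function against the positive measure $\langle\nu(\cdot)\xi,\xi\rangle$, while $\nu_\pm(X)=\nu(X)\pm T(g)=1$. One then has $\nu=\tfrac12(\nu_++\nu_-)$ with $\nu_+\neq\nu_-$, since for $E=\{g>0\}$ one computes $\tr\big((\nu_+-\nu_-)(E)\big)=2d\int_E g\,d\mu>0$. This contradicts extremality, so $L^\infty_{\mathbb R}(X,\mu)$ is finite-dimensional; equivalently $\mu$, and hence $\nu$, is concentrated on finitely many points $x_1,\dots,x_m$, giving $\nu=\sum_{j=1}^m\delta_{x_j}h_j$ (with $m\leq d^2$).

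It then remains to establish weak independence of $\mathrm{ran}\,h_1,\dots,\mathrm{ran}\,h_m$, which I would prove by contraposition. If there are self-adjoint $D_1,\dots,D_m$, not all zero, with $\mathrm{ran}\,D_j\subseteq\mathrm{ran}\,h_j$ and $\sum_j D_j=0$, then each $D_j$ is supported on $\mathrm{ran}\,h_j$, where $h_j$ is bounded below; consequently a sufficiently small $\epsilon>0$ makes $h_j\pm\epsilon D_j\geq 0$ for every $j$. The POVMs $\nu_\pm=\sum_j\delta_{x_j}(h_j\pm\epsilon D_j)$ satisfy $\nu_\pm(X)=1\pm\epsilon\sum_j D_j=1$ and again exhibit $\nu=\tfrac12(\nu_++\nu_-)$ as a convex combination of distinct POVMs, contradicting extremality. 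Hence the ranges are weakly independent, completing the argument.

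I expect the finite-support step to be the main obstacle: the passage from ``no nontrivial scalar perturbation'' to ``finitely many atoms'' is precisely where the arbitrary compact space $X$ enters, in contrast to the finite-$X$ treatments of \cite{parthasarathy1999,dariano--etal2005}. The key realization is that the entire obstruction is captured by the finite-dimensionality of the range of $T$, combined with the standard fact that $L^\infty(X,\mu)$ is finite-dimensional exactly when $\mu$ is finitely atomic. The care required is chiefly in verifying that the perturbed set functions $\nu_\pm$ are genuine countably additive, effect-valued POVMs rather than merely finitely additive operator-valued set functions, and in handling the almost-everywhere identifications underlying the definition of $T$.
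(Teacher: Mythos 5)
Your proof is correct and follows essentially the same route as the paper's: the finite-support step via a dimension count in which the $d^2$-dimensional target defeats the infinite-dimensional function space (your kernel of $T$ on $L^\infty_{\mathbb R}(X,\mu)$ is precisely the annihilator of the paper's subspace $Q\subset L^1(X,\mu)$), followed by the small-perturbation argument tying extremality of a finitely supported POVM to weak independence of the ranges. The only point to add is the one-line reduction, when verifying weak independence in $(1)\Rightarrow(2)$, from arbitrary witnesses $t_j$ with $\mathrm{ran}\,t_j+\mathrm{ran}\,t_j^*\subset\mathrm{ran}\,h_j$ to self-adjoint ones by passing to real and imaginary parts, as the paper does explicitly.
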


In the case of finite $X$, Theorem \ref{ext pt} is already known \cite{dariano--etal2005,parthasarathy1999}. Our contribution is
to show that the case of arbitrary $X$ reduces to the case of finite $X$ (Lemmas \ref{res-support} and \ref{fin-support}). 
However, for completeness, we include a full proof
of Theorem \ref{ext pt},
adapting the elegant arguments of D'ariano, Lo Presti, and Perinotti \cite{dariano--etal2005}.  
The concept of weak independence is defined formally below.

\begin{definition}
Subspaces $\mathcal L_1,...,\mathcal L_n\subset\hil$ are weakly independent 
if, for any $t_1,...,t_n\in\bh$ such that (i) ${\rm ran}\,t_j + {\rm ran}\,t_j^*\subset\mathcal L_j$ for each $j$, and (ii)
$t_1+...+t_n=0$, then necessarily each $t_j=0$.
\end{definition}

The remainder of this section is devoted to the proof of Theorem \ref{ext pt} and a discussion of
some of its consequences.

Recall that
if $K\subset X$ is a closed subset, then the $\sigma$-algebra $\bor(K)$ of Borel sets of $K$
is given by $\mathcal O(K)=\{K\cap E\,:\,E\in\bor(X)\}$.

\begin{lemma}\label{res-support} Assume $K_\nu\subset X$ is the support of $\nu\in\povm_\hil(X)$. Then
$\nu\in\ext(\povm_\hil(X))$ if and only if the restriction $\nu_{\vert \bor(K_\nu)}$ of
$\nu$ to $\bor(K_\nu)$ is an extreme point of $\povm_\hil(K_\nu)$.
\end{lemma}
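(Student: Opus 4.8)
The plan is to exploit the fact that $\nu$, and every POVM entering a convex decomposition of it, places all of its mass on $K_\nu$, so that deciding extremality of $\nu$ in $\povm_\hil(X)$ becomes identical to deciding extremality of its restriction in $\povm_\hil(K_\nu)$. Since $K_\nu$ is a closed subset of the compact Hausdorff space $X$, it is itself compact Hausdorff, so $\povm_\hil(K_\nu)$ is defined, and by the description $\bor(K_\nu)=\{K_\nu\cap E:E\in\bor(X)\}$ recalled above every element of $\bor(K_\nu)$ is a Borel subset of $X$. First I would record the elementary extension map sending $\omega\in\povm_\hil(K_\nu)$ to $\tilde\omega\in\povm_\hil(X)$ defined by $\tilde\omega(E)=\omega(E\cap K_\nu)$; a routine check confirms that $\tilde\omega$ is a POVM, that $\tilde\omega$ is supported in $K_\nu$, that $\tilde\omega_{\vert\bor(K_\nu)}=\omega$, and that restriction and extension are mutually inverse on those POVMs of $X$ whose mass lies in $K_\nu$.

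The one substantive point, needed for the \emph{if} direction, is that in any decomposition $\nu=\lambda\nu_1+(1-\lambda)\nu_2$ with $\nu_1,\nu_2\in\povm_\hil(X)$ and $\lambda\in(0,1)$, both $\nu_1$ and $\nu_2$ are again supported in $K_\nu$. Indeed, evaluating at $X\setminus K_\nu$ gives $\lambda\nu_1(X\setminus K_\nu)+(1-\lambda)\nu_2(X\setminus K_\nu)=\nu(X\setminus K_\nu)=0$, a sum of two positive operators equal to $0$, which forces $\nu_1(X\setminus K_\nu)=\nu_2(X\setminus K_\nu)=0$. Since $\nu_i(E\setminus K_\nu)\leq\nu_i(X\setminus K_\nu)=0$, finite additivity then gives $\nu_i(E)=\nu_i(E\cap K_\nu)$ for every $E\in\bor(X)$; the same computation, applied to $\nu$ itself, gives $\nu(E)=\nu(E\cap K_\nu)$.

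Both implications then fall out. For \emph{only if}, suppose $\nu\in\ext(\povm_\hil(X))$ and write $\nu_{\vert\bor(K_\nu)}=\lambda\omega_1+(1-\lambda)\omega_2$ in $\povm_\hil(K_\nu)$. Extending, the POVM $\lambda\tilde\omega_1+(1-\lambda)\tilde\omega_2$ sends $E\in\bor(X)$ to $\lambda\omega_1(E\cap K_\nu)+(1-\lambda)\omega_2(E\cap K_\nu)=\nu_{\vert\bor(K_\nu)}(E\cap K_\nu)=\nu(E\cap K_\nu)=\nu(E)$, so it equals $\nu$; extremality of $\nu$ forces $\tilde\omega_1=\tilde\omega_2=\nu$, and restricting back yields $\omega_1=\omega_2=\nu_{\vert\bor(K_\nu)}$. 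For \emph{if}, suppose $\nu_{\vert\bor(K_\nu)}$ is extreme in $\povm_\hil(K_\nu)$ and write $\nu=\lambda\nu_1+(1-\lambda)\nu_2$ in $\povm_\hil(X)$. By the second paragraph, $\nu_1$ and $\nu_2$ are supported in $K_\nu$, so restriction gives $\nu_{\vert\bor(K_\nu)}=\lambda(\nu_1)_{\vert\bor(K_\nu)}+(1-\lambda)(\nu_2)_{\vert\bor(K_\nu)}$; extremality forces $(\nu_i)_{\vert\bor(K_\nu)}=\nu_{\vert\bor(K_\nu)}$, and then $\nu_i(E)=\nu_i(E\cap K_\nu)=\nu(E\cap K_\nu)=\nu(E)$ gives $\nu_1=\nu_2=\nu$. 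I do not anticipate a genuine obstacle here: all of the content is the positive-operators-summing-to-zero argument of the second paragraph together with the extension/restriction bookkeeping, and the only step needing care is confirming that a convex decomposition of $\nu$ cannot deposit any mass outside its support.
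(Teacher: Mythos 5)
Your proof is correct, and the forward direction coincides with the paper's: extend a decomposition of the restriction back to $X$ via $\tilde\omega(E)=\omega(E\cap K_\nu)$, observe that $\nu(E)=\nu(E\cap K_\nu)$, and invoke extremality of $\nu$ in $\povm_\hil(X)$. Where you diverge is the converse. Both arguments ultimately rest on the same positivity observation --- a sum of positive operators equal to $0$ forces each summand to vanish --- but you apply it once, directly to the single set $X\setminus K_\nu$, and then use monotonicity to conclude $\nu_i(E)=\nu_i(E\cap K_\nu)$ for all $E$, which is exactly what the restriction step requires. The paper instead records that each $\nu_i$ is absolutely continuous with respect to $\nu$ and then proves, by a separate topological manipulation with the open set $U=(X\setminus K_\nu)\cap(X\setminus K_\omega)$, the general fact that $\omega\ll_{\rm ac}\nu$ implies $K_\omega\subset K_\nu$. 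Your shortcut is cleaner for the lemma at hand and avoids having to reason about the support of $\nu_i$ at all; the paper's longer route buys a reusable statement about supports under absolute continuity that is cited again later (in the proof of Theorem \ref{cstar ext pt}, to see that the components of a proper C$^*$-convex decomposition of a finitely supported POVM again have finite support). Either way the lemma is fully established.
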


\begin{proof}
Assume that $\nu\in\ext(\povm_\hil(X))$.  Let $\nu_0,\nu_1,\nu_2\in\povm_\hil(K_\nu)$ and 
such that $\nu_{\vert_{\bor(K_\nu)}}=\nu_0=\displaystyle\frac{1}{2}(\nu_1+\nu_2)$.
Define $\tilde\nu_j:\bor(X)\rightarrow\bh$ by $\tilde\nu_j(E)=\nu_j(E\cap K_\nu)$ for all 
$E\in\bor(X)$, to obtain $\tilde\nu_j\in\povm_\hil(X)$. Because $K_\nu$ is the support 
of $\nu$, $\nu(E)=\nu(E\cap K_\nu)$ for all $E\in\bor(X)$; 
thus, $\nu=\frac{1}{2}(\tilde\nu_1 +\tilde\nu_2)$, and so $\nu=\tilde\nu_1=\tilde\nu_2$, and so $\nu_0=\nu_1=\nu_2$.

Conversely, assume that $\nu_0=\nu_{\vert_{\bor(K_\nu)}}$ is an extreme point of $\povm_\hil(K_\nu)$.  
Let $\nu=\frac{1}{2}(\nu_1+\nu_2)$ for $\nu_1,\nu_2\in\povm_\hil(X)$.  If $E\in\bor(X)$ satisfies
$\nu(E)=0$ then $0=\nu(E)\geq  \frac{1}{2}\nu_j\geq 0$ implies that $\nu_j=0$.  Thus, $\nu_j\ll_{\rm ac}\nu$. 
If we show that the support of each $\nu_j$ is contained in the support of $\nu$, then we conclude that $\nu_1=\nu_2=\nu$.

Thus, it remains to prove that if
$\omega,\nu\in\povm_\hil(X)$ is such that $\omega\ll_{\rm ac}\nu$, then
$K_\omega\subset K_\nu$.
To this end, let $U=(X\setminus K_\nu)\cap(X\setminus K_\omega)$, which is open, and let $K=(X\setminus U)\cap K_\omega$, which is closed. Thus,
\[
\omega(X\setminus K) = \omega\left(U\cup (X\setminus K_\omega)\right) 
\leq \omega(U) + \omega(X\setminus K_\omega)
= \omega(U).
\]
Now since $U\subset X\setminus K_\nu$, we have $\nu(U)\leq\nu(X\setminus K_\nu)=0$.  Thus, 
$\omega\ll_{\rm ac}\nu$ implies that $\omega(U)=0$ and so $\omega(X\setminus K)=0$.
Hence, $K\subset K_\omega$ and $\omega(X\setminus K)=0$ which implies that $K=K_\omega$ by definition of support
and by the above arguments.  
Hence, $K_\omega = K = (X\setminus U)\cap K_\omega = (K_\nu\cup K_\omega)\cap K_\omega$ implies that $K_\omega\subset K_\nu$.
\end{proof}

\begin{lemma}\label{fin-support}
If $\nu\in \povm_\hil(X)$ is an extreme point, then the support of $\nu$ is a finite set.
\end{lemma}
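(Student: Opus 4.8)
The plan is to show that an extreme point $\nu$ cannot have infinitely many points carrying positive mass, and then to rule out the genuinely continuous part of the support. By Lemma \ref{res-support} we may restrict attention to the compact support $K_\nu$ and work inside $\povm_\hil(K_\nu)$, so without loss of generality the support is all of $X$. The induced probability measure $\mu$ of \eqref{induced} is mutually absolutely continuous with $\nu$, so $K_\nu$ coincides with the (topological) support of the scalar measure $\mu$. The strategy is to exploit the freedom to perturb $\nu$ by an operator-valued density $g$ via the non-principal Radon--Nikod\'ym machinery of \eqref{rn prop}: any bounded Borel $g:X\rightarrow\bh$ with $\int_E g\,d\nu$ producing a legitimate POVM gives a decomposition of $\nu$, and if two distinct such perturbations average back to $\nu$ then extremality is contradicted.

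Concretely, I would first argue that $\mu$ can have at most finitely many atoms. Because $\nu(X)=1$ and each atomic mass $h_j=\nu(\{x_j\})$ is a nonzero positive operator, the operators $h_j$ sum (in the weak operator topology) to a positive operator dominated by $1$; taking traces, $\sum_j \tr(h_j)\leq d$, and since each $\tr(h_j)>0$ this forces finitely many atoms. The real work is to show that $\nu$ has \emph{no} continuous part, i.e.\ that $\mu$ is purely atomic. Suppose the contrary, so that the non-atomic part of $\mu$ is supported on some Borel set $B$ with $\mu(B)>0$ and no atoms inside $B$. The aim is to split $B$ into two disjoint Borel pieces $B_1,B_2$ of equal $\nu$-``mass'' in a suitable operator sense and build two distinct POVMs $\nu_1,\nu_2$ with $\nu=\tfrac12(\nu_1+\nu_2)$, contradicting extremality.

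To build the splitting, I would use a bounded Borel function of the form $g=1+s$, where $s:X\rightarrow\bh_{\rm sa}$ is supported on $B$, is small in norm (so that $g$ stays in the effect range and $\nu_1(E)=\int_E g\,d\nu$ is a genuine POVM), and satisfies $\int_X s\,d\nu=0$ so that $\nu_1(X)=1$; then $\nu_2$ is defined by $g'=1-s$, giving $\nu=\tfrac12(\nu_1+\nu_2)$. The crux is producing a \emph{nonzero} such $s$ with $\int_X s\,d\nu=0$: because $\mu$ restricted to $B$ is non-atomic, one can partition $B$ into two Borel sets $B_1,B_2$ on which the operator-integrals $\int_{B_i}\left(\tfrac{d\nu}{d\mu}\right)d\nu$ agree, by an intermediate-value (Lyapunov/Sierpi\'nski) argument on the atomless measure $E\mapsto \int_E \left(\tfrac{d\nu}{d\mu}\right)\,d\nu\in\bh_{\rm sa}$; setting $s=\varepsilon(\chi_{B_1}-\chi_{B_2})\cdot 1$ then yields $\int_X s\,d\nu = \varepsilon\bigl(\nu(B_1)-\nu(B_2)\bigr)$, which vanishes by the balanced choice, while $s\neq0$ gives $\nu_1\neq\nu_2$.

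I expect the main obstacle to be precisely this balancing step: guaranteeing a nonzero admissible perturbation when the support may be a complicated continuum. The scalar trick of halving a non-atomic measure must be upgraded to the operator-valued integral, and one must verify that the resulting $\nu_i$ are honestly POVMs (positivity and the effect bound $0\leq\nu_i(E)\leq 1$), which is where smallness of $\varepsilon$ and the Radon--Nikod\'ym representation \eqref{rn deriv} are used. Once a continuous part is excluded, the atomic finiteness bound above finishes the proof: $\nu=\sum_{j=1}^m \delta_{x_j}h_j$ for finitely many $x_j$, so $K_\nu$ is finite.
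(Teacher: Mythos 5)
Your argument for excluding a non-atomic part is essentially sound and is a genuinely different route from the paper's: a Lyapunov convexity argument applied to the atomless $\bh_{\rm sa}$-valued measure $\nu\vert_B$ (note you should balance $\nu(B_1)=\nu(B_2)$ directly, rather than the measure $E\mapsto\int_E(d\nu/d\mu)\,d\nu$ you name, since it is $\nu(B_1)-\nu(B_2)$ that appears in $\int_X s\,d\nu$). However, there is a genuine gap in your treatment of the atomic part. You claim that $\sum_j\tr(h_j)\leq d$ with each $\tr(h_j)>0$ ``forces finitely many atoms.'' This is false: a series of strictly positive terms can converge with infinitely many terms, e.g.\ $\nu=\sum_{j=1}^{\infty}\delta_{x_j}(2^{-j}\cdot 1)$ is a perfectly good POVM with infinitely many atoms. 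Your step never invokes extremality, and no argument that ignores extremality can bound the number of atoms. Since the lemma's whole content in the purely atomic case is exactly this finiteness, the proof as written does not establish the statement.

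The gap is repairable, but it needs the same kind of perturbation you build for the continuous part. An elementary fix: if there are $N>d^2$ atoms with masses $h_1,\dots,h_N$, then these are linearly dependent in the real vector space $\bh_{\rm sa}$ of dimension $d^2$, so $\sum_j c_jh_j=0$ for real $c_j$ not all zero; setting $g_j=\varepsilon c_jh_j$ gives $\sum_j g_j=0$ and $h_j\pm g_j=(1\pm\varepsilon c_j)h_j\geq 0$ for small $\varepsilon$, producing $\nu_1\neq\nu_2$ with $\nu=\frac12(\nu_1+\nu_2)$. The paper avoids the atomic/continuous case split altogether by a Douglas-style functional-analytic argument: the functions $x\mapsto\tr\bigl(\rho\,\frac{d\nu}{d\mu}(x)\bigr)$ span a subspace $Q\subset L^1(X,\mu)$ of dimension at most $d^2$, while infinite support forces $L^1(X,\mu)$ to be infinite-dimensional, so some $\varphi\in L^\infty(X,\mu)$ with $\|\varphi\|_\infty=1$ annihilates $Q$; then $1\pm\varphi$ are the densities of two distinct POVMs averaging to $\nu$. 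That single annihilator argument subsumes both your Lyapunov step and the missing atomic bound, which is what your proposal would need to match.
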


\begin{proof} Assume, contrary to what we aim to prove, that the support of $\nu$ is an infinite set.
By Lemma \ref{res-support}, we may replace $X$ with the support of
$\nu$, and so we assume without loss of generality that $X=K_\nu$.
The argument below is inspired by the proof of the main result of \cite{douglas1964}.

Let $\mu=\frac{1}{d}\tr\circ\nu$ and consider $L^1(X,\mu)$.  
Because $\mu$ and $\nu$ are mutually absolutely continuous, they have the same support $X$.  
Thus $L^1(X,\mu)$ is an infinite-dimensional Banach space.

Let $\kappa_{ij}(x) = \left\langle \left( \displaystyle\frac{d\nu}{d\mu}\right)\, e_j, \,e_i \right\rangle$, for $1\leq i,j,\leq d$, and let
\[
\begin{array}{rcl}
Q_0 &=& \left\{ \tr\left[\left( \displaystyle\frac{d\nu}{d\mu}\right)^{1/2}\rho\left(\displaystyle\frac{d\nu}{d\mu}\right)^{1/2}\right]\,:\,\rho\in\state(\hil)\right\} 
\\ && \\
&=& \left\{ \tr\left(  \rho \, \displaystyle\frac{d\nu}{d\mu}\right)\,:\,\rho\in\state(\hil)\right\} 
\\ && \\
& \subset & \mbox{Span}\,\{\kappa_{ij}\,:\,1\leq i,j\leq d\}\,.
\end{array}
\]
Let $Q=\mbox{Span}\,Q_0$; thus, $Q$ is a subspace of $L^1(X,\mu)$ of dimension at most $d^2$. Because $L^1(X,\mu)$ has infinite dimension,
the annihilator of $Q$ in the dual space $L^1(X,\mu)^*=L^\infty(X,\mu)$ has infinite dimension.  
Hence, there exists $\varphi\in L^\infty(X,\mu)$ such that $\|\varphi\|=1$ and 
$\displaystyle\int_X \varphi\,\psi\, d\mu = 0$ for all $\psi\in Q$. Let $\Phi=\sum_{j=1}^d \varphi\otimes e_{jj}$; thus,  
for any state $\rho \in \state(\hil)$,
\[
\int_X \tr \left[ \left(\frac{d\nu}{d\mu}\right)^{1/2} \rho  \left(\frac{d\nu}{d\mu}\right)^{1/2} \Phi \right] \, d\mu 
\,=\, \int_X \tr\left(\varphi \, \rho \, \displaystyle\frac{d\nu}{d\mu}\right)\,d\mu
\,=\, 0\,.
\]
Hence,
\begin{equation}
\label{zeroint} \int_X \Phi \, d\nu = 0.
\end{equation}

Define $\tilde\nu: \bor(X)\rightarrow \bh$ by $\tilde\nu(E) = \int_E\Phi\, d\nu$. Let $\nu_1=\nu+\tilde\nu$ and $\nu_2=\nu-\tilde\nu$.  Note that
\[
\nu_1(E)=\int_E d(\nu+\tilde\nu) = \int_E d\nu +    \int_E\Phi\, d\nu =  \int_E \left(1 + \Phi \right)\, d\nu.
\]
Since $1 + \Phi$ is positive for all $x\in X$, this final integral above is a positive operator. Likewise $\nu_2(E)$ is positive. Further, by equation \eqref{zeroint},
\[
\nu_1(X)=\int_X d(\nu+\tilde\nu) = \nu(X) +   \int_X\Phi\, d\nu =  \nu(X) + 0 = 1.
\]
Hence, $\nu_1,\nu_2\in\povm_\hil(X)$ and $\nu=\frac{1}{2}\nu_1+\frac{1}{2}\nu_2$.  Since $\nu_1\neq\nu$ (because $\varphi\neq 0$), we have $\nu\notin \ext(\povm_\hil(X))$.
\end{proof}

We are now prepared for the proof of Theorem \ref{ext pt}, using an adaptation of the
arguments of the D'ariano, Lo Presti, and Perinotti  \cite{dariano--etal2005}. Recall that we aim to prove that 
the following statements are equivalent for $\nu\in\povm_\hil(X)$:
\begin{enumerate}
\item\label{ext-1} $\nu$ is an extreme point of $\povm_\hil(X)$;
\item\label{ext-2} there exist distinct $x_1,\dots, x_n\in X$ and $h_1,\dots,h_n\in\bh_+$ such that
    \begin{enumerate}
        \item the subspaces $\mbox{\rm ran}\,h_1,\dots, \mbox{\rm ran}\,h_n$  are weakly independent and
        \item $\nu\,=\,\displaystyle\sum_{j=1}^n\delta_{x_j}h_j$.
    \end{enumerate}
\end{enumerate}

\begin{proof} Assume \eqref{ext-1}. Thus, $\nu\in\ext\left(\povm_\hil(X)\right)$. 
By Lemma \ref{res-support}, we may replace $X$ with the support $K_\nu$ of
$\nu$, which by Lemma \ref{fin-support} is a finite set if $\nu\in\ext\left(\povm_\hil(X)\right)$. Thus,
without loss generality we may assume that $X$ is a finite set, say $X=\{x_1,\dots,x_n\}$, and that the support of $\nu$ is $X$. Hence,
$\nu=\displaystyle\sum_{j=1}^n\delta_{x_j}h_j$. 
Suppose that $t_1,...,t_n\in\bh$ satisfy $\displaystyle\sum_{j=1}^m t_j=0$
and ${\rm ran}\,t_j+{\rm ran}\,t_j^*  \subset{\rm ran}\,h_j $ for all $1\leq j \leq n$. Also assume, contrary to what we aim to prove, that not every $t_j$ is
zero. If every $t_j$ is hermitian, then let $g_j=t_j$ for all $j$ and $g_k\neq0$ for some $k$;  if not all $t_1,...,t_n$ are hermitian, then there exists 
$k$ with $\Im(t_k)\neq 0$ and in this situation we take $g_j=\Im(t_j)$ for all $1\leq j \leq n$.  
(The imaginary part  of $s\in \bh$ is the hermitian operator $\Im (s)=\frac{1}{2i}(s-s^*)$.)
With this choice of operators $g_j$, we have ${\rm ran} g_j\subset{\rm ran}\,h_j$ for all $1\leq j \leq n$, $g_k\neq 0$ for some $k$,
and $\displaystyle\sum_{j=1}^n g_j =0$.
For each $j$ write $\hil=\ker h_j\oplus {\rm ran}\,h_j$ so that 
\[
h_j\,=\,\left[ \begin{array}{cc} 0 & 0 \\ 0 & \tilde h_j  \end{array}\right] \quad\mbox{ and }\quad
g_j\,=\, \left[ \begin{array}{cc} 0& 0 \\ 0 & \tilde g_j  \end{array}\right] \,,
\]
where $\tilde h_j \in \bof({\rm ran}\,h_j)$ is positive and invertible.  (If $h_j$ itself is invertible, then we do not use a $2\times 2$ operator
matrix and simply take $\tilde h_j=h_j$.) Let
\[
\mathcal Z_\varepsilon\,=\,\left\{ 1\pm\epsilon\lambda \,\vert\, \lambda\in \displaystyle\bigcup_{j=1}^n \sigma\left( \tilde h_j^{-1/2}\tilde g_j \tilde h_j^{-1/2}\right) \right\}\,,
\]
where $\sigma(z)$ denotes the spectrum of an operator $z$. Because $\mathcal Z_\varepsilon$ is a finite set,
there exists $\varepsilon>0$ so that $\mathcal Z_\varepsilon \subset (0,\infty)$.  Hence, 
\[
h_j\pm \varepsilon g_j\,=\,
\left[\begin{array}{ccc} 0&& 0 \\ 0 &&
\tilde h_j^{1/2}(1\pm\varepsilon \tilde h_j^{-1/2}\tilde g_j \tilde h_j^{-1/2})\tilde h_j^{1/2} \end{array}\right]\,,
\]
which is positive by the choice of $\varepsilon>0$.
Now let $\gamma = \displaystyle \sum_{j=1}^n \delta_{x_j} g_j$ to obtain
$\nu\pm\varepsilon\gamma\in\povm_\hil(X)$. If $\nu_1=\nu+\varepsilon\gamma$ and $\nu_2=\nu-\varepsilon\gamma$, then
$\nu_1,\nu_2$ are distinct elements of $\povm_\hil(X)$ 
and $\nu$ is their midpoint, in contradiction to the hypothesis $\nu\in\ext\left(\povm_\hil(X)\right)$. 
Hence, it must be that all of the operators $t_j$ are zero, which is to say the ranges of $h_1,\dots,h_n$ are weakly independent.

Conversely, assume \eqref{ext-2}. 
Thus, $\nu=\displaystyle\sum_{j=1}^n\delta_{x_j}h_j$ for distinct $x_1,\dots, x_n\in X$
and positive operators $h_1,\dots, h_n\in\bh$ with weakly independent ranges. The support of $\nu$ is
given by $K_\nu=\{x_1,\dots,x_n\}$.
By Lemma \ref{res-support}, $\nu$ is an extreme point of $\povm_\hil(X)$ if and only if $\nu$ is
an extreme point of $\povm_\hil(K_\nu)$. Hence, we assume without loss of generality that
$X=K_\nu$.
If, contrary to what we wish to prove, $\nu\not\in\ext\left(\povm_\hil(X)\right)$, then
in the real vector space $W$ of 
countably additive functions $\upsilon:\bor(X)\rightarrow\bh_{\rm sa}$ 
there exists
 $\varepsilon>0$ and $\gamma\in W$ such that $\omega^\pm = \nu \pm \varepsilon\gamma \in \povm_\hil(X)$. Therefore, 
there exist $g_1,...,g_n\in\bh_{\rm sa}$ such that $\gamma=\displaystyle\sum_{j=1}^n\delta_{x_j}g_j$ and $g_k\neq 0$ for at least one $k$.  Because 
\[
1=\omega^\pm(X)=\nu(X)\pm \varepsilon\gamma(X)=1\pm\varepsilon \displaystyle\sum_{j=1}^n g_j,
\]
we obtain $\displaystyle \sum_{j=1}^n g_j =0$.  And because $\omega^\pm(E)\in\eff(\hil)$ we obtain through evaluation at $x_j$
that $ h_j\pm \epsilon g_j\in\bh_+$.
Now write $\hil=\ker h_j\oplus {\rm ran}\,h_j$ so that 
\[
h_j\,=\,\left[ \begin{array}{cc} 0 & 0 \\ 0 & \tilde h_j  \end{array}\right] \quad\mbox{ and }\quad
g_j\,=\, \left[ \begin{array}{cc} a_j & y_j \\ y_j^* & \tilde g_j  \end{array}\right] \,,
\]
where $a_j\in \bof\left(\ker h_j\right)$ is hermitian and $\tilde h_j \in \bof({\rm ran}\,h_j)$ is positive and invertible.  
Because $h_j+\varepsilon g_j$ is positive, $a_j$ is necessarily positive. But $h_j-\varepsilon g_j$ positive implies that $-a_j$ is positive.
Hence $a_j=0$. The positivity of $h_j+\varepsilon g_j=\left[\begin{array}{cc} 0 & \varepsilon y_j \\ \varepsilon y_j^* & \varepsilon\tilde g_j\end{array}\right]$
yields $y_j=0$, and so $g_j\xi=0$ for every vector $\xi\in\hil$ for which $h_j\xi=0$.
That is,
$\ker h_j\subset\ker g_j$ and so $ {\rm ran}\, g_j\subset{\rm ran}\, h_j$.  
We conclude that ${\rm ran}\,h_1,\dots, {\rm ran}\, h_n$ are not weakly independent, contrary to hypothesis. Hence, it must be that no such
function $\gamma$ exists, which is to say that $\nu$ is an extreme point of $\povm_\hil(X)$.
\end{proof}

\begin{corollary} If $\nu$ is an extreme point of $\povm_\hil(X)$, then so is $\alpha\circ\nu$ for every automorphism $\alpha$ of $\bh$.
\end{corollary}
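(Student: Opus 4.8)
The plan is to recognise that post-composition with $\alpha$ is an affine bijection of $\povm_\hil(X)$ onto itself, and that any affine bijection of a convex set carries extreme points to extreme points; this route sidesteps the structural characterisation of Theorem \ref{ext pt} entirely.

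First I would define $\Phi_\alpha:\povm_\hil(X)\rightarrow\povm_\hil(X)$ by $\Phi_\alpha(\nu)=\alpha\circ\nu$ and check that it is well defined. Every $\alpha\in\aut(\bh)$ is a unital quantum channel---it has the form $\alpha(x)=u^*xu$ for a unitary $u$, hence is unital, completely positive, and trace preserving---so the remark preceding Proposition \ref{channels} already yields $\alpha\circ\nu\in\povm_\hil(X)$. Directly: $\alpha$ preserves positivity and the order interval, so it sends effects to effects; $\alpha(1)=1$, so $(\alpha\circ\nu)(X)=1$; and $\alpha$, being a bounded linear map, commutes with the weak-operator-topology sums defining countable additivity.

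Next I would verify affinity and bijectivity. For $\nu_1,\nu_2\in\povm_\hil(X)$ and $\lambda\in[0,1]$, linearity of $\alpha$ gives $\Phi_\alpha(\lambda\nu_1+(1-\lambda)\nu_2)=\lambda\Phi_\alpha(\nu_1)+(1-\lambda)\Phi_\alpha(\nu_2)$ on each Borel set. Since $\alpha^{-1}\in\aut(\bh)$, the map $\Phi_{\alpha^{-1}}$ is a two-sided inverse of $\Phi_\alpha$, so $\Phi_\alpha$ is an affine bijection. To conclude, suppose $\nu\in\ext(\povm_\hil(X))$ and $\Phi_\alpha(\nu)=\frac{1}{2}(\omega_1+\omega_2)$ with $\omega_1,\omega_2\in\povm_\hil(X)$. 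Applying the affine inverse gives $\nu=\frac{1}{2}(\Phi_{\alpha^{-1}}(\omega_1)+\Phi_{\alpha^{-1}}(\omega_2))$, so extremality forces $\Phi_{\alpha^{-1}}(\omega_1)=\Phi_{\alpha^{-1}}(\omega_2)=\nu$, and applying $\Phi_\alpha$ once more yields $\omega_1=\omega_2=\Phi_\alpha(\nu)$. Hence $\alpha\circ\nu$ is extreme.

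The argument is almost entirely formal, and the only point demanding a moment's care is well-definedness---that $\alpha\circ\nu$ really is a POVM---but this is immediate from the unitary form of $\alpha$. As an alternative I could invoke Theorem \ref{ext pt} directly: writing $\nu=\sum_j\delta_{x_j}h_j$ with weakly independent ranges, one has $\alpha\circ\nu=\sum_j\delta_{x_j}(u^*h_ju)$ with ${\rm ran}(u^*h_ju)=u^*\,{\rm ran}(h_j)$, and conjugating any dependence relation by $u$ (replacing each $s_j$ with $t_j=us_ju^*$, so that $\sum_j t_j=0$ and ${\rm ran}\,t_j+{\rm ran}\,t_j^*\subset{\rm ran}\,h_j$) transports weak independence of the conjugated ranges back to that of the originals. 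I expect the affine-bijection route to be the cleaner one to record.
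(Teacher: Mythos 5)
Your argument is correct, and it takes a genuinely different route from the paper. The paper proves this corollary as a one-line consequence of Theorem \ref{ext pt}: an extreme point is precisely a finitely supported measure $\sum_j\delta_{x_j}h_j$ with weakly independent ranges, and the paper observes that $h_1,\dots,h_n$ have weakly independent ranges if and only if $u^*h_1u,\dots,u^*h_nu$ do---which is exactly your ``alternative'' sketch at the end. Your primary route instead treats $\nu\mapsto\alpha\circ\nu$ as an affine bijection of the convex set $\povm_\hil(X)$ (with affine inverse $\nu\mapsto\alpha^{-1}\circ\nu$) and invokes the purely formal fact that affine bijections carry extreme points to extreme points. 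This is more elementary and more general: it needs none of the structure theory of Theorem \ref{ext pt}, and the same argument would apply verbatim to any unital, trace-preserving, completely positive $\mathcal E$ with completely positive inverse, or indeed to any affine automorphism of $\povm_\hil(X)$. What the paper's approach buys, in context, is brevity---Theorem \ref{ext pt} has just been proved, so the unitary-invariance of weak independence is the shortest available remark---and it keeps the corollary visibly tied to the structural description of extreme points. Your well-definedness check ($\alpha\circ\nu$ is again a POVM) is also handled correctly and is indeed already noted in the paper just before Proposition \ref{channels}.
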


\begin{proof} One need only note that if $u\in\bh$ is unitary and $h_1,\dots,h_n\in\bh$ are positive, then $h_1,\dots, h_n$ have
weakly independent ranges if and only if $u^*h_1u$, \dots, $u^*h_nu$ have weakly independent ranges.
\end{proof}

Although the function $\Gamma$ does not exhibit affine properties, it does map extremal elements to extremal elements.

\begin{corollary}\label{ext cor} If $\nu$ is an extreme point of $\povm_\hil(X)$, then $\Gamma(\nu)$ is an extreme point of $\ucp_\hil(X)$.
\end{corollary}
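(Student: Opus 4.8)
The plan is to combine the explicit description of extreme POVMs from Theorem \ref{ext pt} with Arveson's characterisation of the extreme points of $\ucp_\hil(X)$ through minimal Stinespring dilations, which is exactly the characterisation \cite[Theorem 1.4.6]{arveson1969} already invoked in the proof of Theorem \ref{top} (there the ``only if'' direction was used; here I would use the ``if'' direction). By Theorem \ref{ext pt} I may write $\nu=\sum_{j=1}^n\delta_{x_j}h_j$ with the $x_j\in X$ distinct, each $h_j\in\bh_+$ nonzero (any zero summand can be discarded without affecting $\nu$ or weak independence), and ${\rm ran}\,h_1,\dots,{\rm ran}\,h_n$ weakly independent. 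The Example computing the integral against a finitely supported measure gives
\[
\Gamma(\nu)(f)=\sum_{j=1}^n h_j^{1/2}f(x_j)h_j^{1/2}=\sum_{j=1}^n h_j^{1/2}\varrho_{x_j}(f)h_j^{1/2},\quad f\in C(X)\otimes\bh.
\]

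First I would build a Stinespring dilation of $\Gamma(\nu)$. Set $\hil_\pi=\bigoplus_{j=1}^n\hil$, let $\pi=\bigoplus_{j=1}^n\varrho_{x_j}$, so that $\pi(f)={\rm diag}(f(x_1),\dots,f(x_n))$, and let $V:\hil\rightarrow\hil_\pi$ be $V\xi=\bigoplus_j h_j^{1/2}\xi$. Then $V^*V=\sum_j h_j=1$, so $V$ is an isometry, and $V^*\pi(f)V=\Gamma(\nu)(f)$. I would then verify that this dilation is minimal: since the $x_j$ are distinct, the evaluations $(g(x_1),\dots,g(x_n))$ exhaust $\mathbb C^n$ as $g$ runs over $C(X)$, and since each $\varrho_{x_j}$ is an irreducible representation of $C(X)\otimes\bh$ (its image is all of $\bof(\hil)$, which acts transitively on nonzero vectors), the set $\pi(C(X)\otimes\bh)V\hil$ fills each summand and hence is dense in $\hil_\pi$.

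Next I would identify the commutant $\mathcal N=\pi(C(X)\otimes\bh)'$. Writing an element of $\bof(\hil_\pi)$ as a block matrix $[T_{k\ell}]$ and imposing commutation with $\pi(g\otimes a)$ for all $g\in C(X)$ and $a\in\bh$, the distinctness of the $x_j$ kills the off-diagonal blocks (choose $g$ separating $x_k$ from $x_\ell$) and the irreducibility of each $\varrho_{x_j}$ forces $T_{jj}=\lambda_j 1$; thus $\mathcal N=\{{\rm diag}(\lambda_1 1,\dots,\lambda_n 1):\lambda_j\in\mathbb C\}\cong\mathbb C^n$. Arveson's criterion then says $\Gamma(\nu)$ is an extreme point of $\ucp_\hil(X)$ precisely when the compression $\mathcal N\ni T\mapsto V^*TV$ is injective. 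For $T={\rm diag}(\lambda_1 1,\dots,\lambda_n 1)$ a direct computation gives $V^*TV=\sum_j\lambda_j h_j$, so injectivity amounts to the implication $\sum_j\lambda_j h_j=0\Rightarrow\lambda_1=\dots=\lambda_n=0$.

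The final step is where weak independence enters: given $\sum_j\lambda_j h_j=0$, set $t_j=\lambda_j h_j$; then ${\rm ran}\,t_j+{\rm ran}\,t_j^*\subset{\rm ran}\,h_j$ and $\sum_j t_j=0$, so weak independence of the ranges forces each $t_j=\lambda_j h_j=0$, whence $\lambda_j=0$ because $h_j\neq 0$. This yields the required injectivity and completes the proof. I expect the only delicate points to be the verification of minimality of the dilation and the computation of $\mathcal N$; both hinge on the distinctness of the support points $x_j$ together with the irreducibility of the spectral maps $\varrho_{x_j}$, and neither is a genuine obstacle once those two facts are in place, while the passage from injectivity to weak independence is essentially immediate.
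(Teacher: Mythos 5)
Your proof is correct, and it follows the same overall strategy as the paper's (reduce to the finite-support form $\nu=\sum_j\delta_{x_j}h_j$ from Theorem \ref{ext pt}, then invoke Arveson's structure theory for extreme ucp maps), but it runs the final step through a different piece of Arveson's machinery. The paper argues by citation: each $h_j^{1/2}\varrho_{x_j}h_j^{1/2}$ generates an extreme ray of the cone of completely positive maps because $\varrho_{x_j}$ is irreducible \cite[Corollary 1.4.3]{arveson1969}, the representations attached to distinct points are mutually inequivalent, and then \cite[Lemma 1.4.9]{arveson1969} is invoked to conclude that the sum is an extremal ucp map. You instead verify Arveson's extreme-point criterion \cite[Theorem 1.4.6]{arveson1969} directly: you exhibit the minimal Stinespring dilation $(\pi,V)$ with $\pi=\bigoplus_j\varrho_{x_j}$ and $V\xi=\bigoplus_j h_j^{1/2}\xi$, compute the commutant $\pi(C(X)\otimes\bh)'\cong\mathbb C^n$, and reduce extremality to the injectivity of $(\lambda_1,\dots,\lambda_n)\mapsto\sum_j\lambda_jh_j$, i.e.\ to linear independence of $h_1,\dots,h_n$, which you correctly extract from weak independence via $t_j=\lambda_jh_j$. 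This buys you two things: the argument is self-contained (it essentially unpacks the proof of Arveson's Lemma 1.4.9 in this special case), and it makes visible exactly how much of the weak-independence hypothesis is used --- namely linear independence of the $h_j$, not merely $h_j\neq0$, which is the only consequence the paper's proof states explicitly before handing off to Lemma 1.4.9. The cost is a longer verification (minimality of the dilation and the commutant computation), both of which you handle correctly using the distinctness of the $x_j$ and the irreducibility of the $\varrho_{x_j}$.
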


\begin{proof} Assume that $\nu\in\ext\left(\povm_\hil(X)\right)$ and that $\phi_\nu=\Gamma(\nu)$. Hence, by Theorem \ref{ext pt},
$\phi_\nu= \sum_{j=1}^n h_j^{1/2}\varrho_{x_j} h_j^{1/2}$
for some distinct $x_1,\dots,x_n\in X$ and positive operators $h_1,\dots,h_n$ with weakly independent
ranges.
Each $\varrho_{x_j}$ is an irreducible representation of $C(X)\otimes\bh$ on $\hil$ and,
because the points $x_1,\dots, x_n\in X$ are distinct, no two irreducible representations $\varrho_{x_i}$
and $\varrho_{x_j}$ corresponding to distinct points $x_i$ and $x_j$ are unitarily equivalent. By weak independence, every
$h_j\neq0$; and by irreducibility each completely positive map $\phi_j:C(X)\otimes\bh\rightarrow\bh$ of the form
\[
\phi_j\,=\,h_j^{1/2}\varrho_{x_j}h_j^{1/2}
\]
generates an extremal ray of the cone of all completely positive linear maps $C(X)\otimes\bh\rightarrow\bh$ \cite[Corollary 1.4.3]{arveson1969}.
Therefore, by \cite[Lemma 1.4.9]{arveson1969}, $\phi_\nu=\sum_{j=1}^n\phi_j$ is an extremal ucp map.  
 \end{proof}

\section{Nonclassical Randomness}

 A convex combination $\sum_{j=1}^m\lambda_j\nu_j$ of $\nu_1,\dots,\nu_m\in\povm_\hil(X)$ is a
 random POVM if one views the set $\{\lambda_1,\dots,\lambda_m\}$ of convex coefficients as a probability distribution.
 The notion of classical randomness enters quantum measurement through probabilistic (that is, convex combinations)
 mixtures of other quantum measurements.

 In nonclassical convexity, there are corresponding notions of randomness
 afforded by C$^*$-convex coefficients. We mention below two sources of randomness.
 
 First, assume that $a=(a_1,\dots,a_m)$ is a tuple of C$^*$-convex coefficients $a_j\in\bh$.
 Because $\hil$ has finite dimension, the subspaces $\ker a_j$ and $\ker a_j^*$ have equal dimension for each $j$.
 Hence, the isometry $v:\hil\rightarrow\bigoplus_{j=1}^m \hil$ defined by
 \[
 v\xi\,=\,\bigoplus a_j\xi\,,\quad\xi\in \hil\,,
 \]
 extends to a unitary $u\in\bof\left(\bigoplus_{j=1}^m\hil\right)$ \cite[Corollary 2.2]{araki--hansen2000}.
 Conversely, expressing any unitary $u$ acting on $\bigoplus_{j=1}^m \hil$ as an $m\times m$
matrix of operators and by selecting the operators that appear in any single column of $u$, one obtains
a tuple of C$^*$-convex coefficients. Hence,
by endowing the unitary group $\mathcal U\left(\bigoplus_{j=1}^m \hil\right)$ with Haar measure,
every $m$-tuple of C$^*$-convex coefficients is determined by a random unitary.
  
 Second, if $a=(a_1,\dots,a_m)$ is a tuple of C$^*$-convex coefficients,
 then $a$ induces a quantum channel $\mathcal E_a:\state(\hil)\rightarrow\state(\hil)$ via
 \[
 \mathcal E_a(\rho)\,=\,\sum_{j=1}^m a_j\rho a_j^*\,,\;\rho\in\state(\hil)\,.
 \]
 Conversely, every quantum channel induces a tuple of C$^*$-convex coefficients. Hence,
any source of randomness for channels (and there are many to choose from---see, for example,
\cite[\S14.7]{Bengtsson--Zyczkowski-book}), is a source of
randomness for nonclassical convex combinations.

We earlier introduced the notion of a proper C$^*$-convex hull $\cstarconvp\,R$ of a set $R$
by using C$^*$-convex combinations of elements of $R$ using only invertible C$^*$-convex coefficients.

\begin{definition} A \emph{coarsening} of quantum measurements $\nu_1,\dots,\nu_n\in\povm_\hil(X)$
is any measurement $\nu\in\povm_\hil(X)$ that satisfies
\[
\nu\,\in\,\cstarconvp\left(\{\nu_1,\dots,\nu_m\}\right)\,.
\]
\end{definition}

The use of invertible C$^*$-convex combinations in the definition of coarsening of measurements ensures 
that a certain level of information is conserved through the process of coarsening. 
 
We use the notation 
\[
\nu\,\ll_{\rm crse} \,(\nu_1,\dots,\nu_m)
\]
to indicate that $\nu$ is a coarsening of $\nu_1,\dots,\nu_m$.

\begin{definition} A quantum measurement $\nu$ is \emph{fine} if 
$\nu\,\ll_{\rm crse} \,(\nu_1,\dots,\nu_m)$ occurs only if each measurement $\nu_j$
is unitarily equivalent to $\nu$.
\end{definition}

Although the definition of fine measurement above appears to differ from the notion of sharp measurement defined earlier (Definition \ref{povm}),
the following theorem shows that the two concepts in fact are the same.

\begin{theorem}\label{cstar ext pt}
The following statements are equivalent for $\nu\in\povm_\hil(X)$:
\begin{enumerate}
\item\label{cstar ext-1} $\nu$ is a C$^*$-extreme point of $\povm_\hil(X)$;
\item\label{cstar ext-2} $\nu$ is fine;
\item\label{cstar ext-3} $\nu$ is sharp---that is,
there exist distinct $x_1,\dots,x_n\in X$ and pairwise-orthogonal projections $q_1,\dots,q_n\in\bh$
such that
\[
\nu\,=\,\sum_{j=1}^n \delta_{x_j}q_j\,.
\]
\end{enumerate}
\end{theorem}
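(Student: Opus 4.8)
The plan is to establish the cycle by first dispatching the trivial equivalence and then treating the two substantive implications separately. The equivalence (\ref{cstar ext-1})$\Leftrightarrow$(\ref{cstar ext-2}) is purely a matter of unwinding the definitions: to say that $\nu$ is fine is exactly to say that whenever $\nu\in\cstarconvp(\{\nu_1,\dots,\nu_m\})$, i.e. whenever $\nu=\sum_k a_k^*\nu_k a_k$ is a proper C$^*$-convex combination, each $\nu_k$ is unitarily equivalent to $\nu$; and this is verbatim the definition of $\nu$ being a C$^*$-extreme point of $\povm_\hil(X)$. So I would record this in one line and concentrate on (\ref{cstar ext-3})$\Rightarrow$(\ref{cstar ext-1}) and (\ref{cstar ext-1})$\Rightarrow$(\ref{cstar ext-3}).

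For (\ref{cstar ext-3})$\Rightarrow$(\ref{cstar ext-1}) I would argue directly. Suppose $\nu=\sum_{j=1}^n\delta_{x_j}q_j$ with the $q_j$ pairwise orthogonal projections summing to $1$, and let $\nu=\sum_{k=1}^m a_k^*\nu_k a_k$ be a proper C$^*$-convex combination, so $\sum_k a_k^*a_k=1$ with each $a_k$ invertible. Since $\nu$ is supported on $\{x_1,\dots,x_n\}$ and the $a_k$ are invertible, evaluating on $X\setminus\{x_1,\dots,x_n\}$ forces each $\nu_k$ to be supported there as well, so $\nu_k=\sum_j\delta_{x_j}h_{k,j}$ with $\sum_j h_{k,j}=1$ and $q_j=\sum_k a_k^*h_{k,j}a_k$. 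The key computation is to test this identity on ${\rm ran}\,q_j$ and on $\ker q_j$: because $0\le h_{k,j}\le 1$ and $\sum_k a_k^*a_k=1$, a vector in $\ker q_j$ is sent by each $a_k$ into $\ker h_{k,j}$, while a vector in ${\rm ran}\,q_j$ is sent into $\ker(1-h_{k,j})$. Invertibility of $a_k$ together with $\dim{\rm ran}\,q_j+\dim\ker q_j=d$ then forces each $h_{k,j}$ to have spectrum in $\{0,1\}$, hence to be a projection, with $\dim{\rm ran}\,h_{k,j}=\dim{\rm ran}\,q_j$. Since projections summing to $1$ are automatically pairwise orthogonal, each $\nu_k$ is sharp with the same outcome-rank data as $\nu$, and any unitary $u_k$ carrying each ${\rm ran}\,h_{k,j}$ onto ${\rm ran}\,q_j$ yields $\nu_k=u_k^*\nu u_k$. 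Thus $\nu$ is C$^*$-extreme.

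For (\ref{cstar ext-1})$\Rightarrow$(\ref{cstar ext-3}) I would first reduce the shape of $\nu$: by Proposition \ref{cstar ext is ext} a C$^*$-extreme point is an extreme point, so Theorem \ref{ext pt} gives $\nu=\sum_{j=1}^n\delta_{x_j}h_j$ with $\sum_j h_j=1$ and weakly independent ranges, and it remains only to show each $h_j$ is a projection. I would prove the contrapositive: if some $h_{j_0}$ is not a projection, then $\nu$ is not C$^*$-extreme, by exhibiting a proper C$^*$-convex combination $\nu=\sum_k a_k^*\nu_k a_k$ with invertible $a_k$ and with \emph{sharp} pieces $\nu_k$. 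The natural device is a block-diagonalizable Naimark-type dilation: realize $\nu$ as a compression $\nu=V^*\left(\bigoplus_{k=1}^m\nu_k\right)V$ of a projection-valued measure on $\bigoplus_1^m\hil$ whose diagonal pieces $\nu_k$ are themselves projection-valued measures on $\hil$, where $V=\bigoplus_k a_k$ is an isometry with invertible columns. Because each piece $\nu_k$ is sharp its weights are projections, so once $h_{j_0}$ fails to be a projection no $\nu_k$ can be unitarily equivalent to $\nu$, and the decomposition witnesses the failure of C$^*$-extremality.

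The main obstacle is precisely this construction. The tempting shortcut is a two-term splitting that keeps one copy of $\nu$ and perturbs the weight at $x_{j_0}$; but positivity of the perturbed effects forces the perturbing coefficient to preserve every $\ker h_j$ simultaneously, and weak independence then collapses it to a scalar. The trine POVM on $\mathbb C^2$ is the cautionary example: it is extreme and non-sharp, yet every sharp measurement on $\mathbb C^2$ has at most two nonzero outcomes, so the required decomposition cannot use a single sharp piece and no nontrivial two-term splitting exists. The genuine work is therefore to produce a block-diagonal sharp dilation with invertibly weighted columns; here I would lean on the nonclassical convexity machinery of \cite{farenick--morenz1997,farenick--zhou1998}, together with the finite-support reduction already in hand, and verify invertibility of the columns against the rank constraints that weak independence imposes.
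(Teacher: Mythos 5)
Your treatment of \eqref{cstar ext-1}$\Leftrightarrow$\eqref{cstar ext-2} and of \eqref{cstar ext-3}$\Rightarrow$\eqref{cstar ext-1} is correct, and the latter is a genuinely different and more elementary route than the paper's. The paper proves \eqref{cstar ext-3}$\Rightarrow$\eqref{cstar ext-1} by passing to the restricted map $\phi_\nu^{\rm c}:C(X)\rightarrow\bh$, observing it is a homomorphism and hence C$^*$-extreme among ucp maps on the abelian algebra \cite[Proposition 1.2]{farenick--morenz1997}, and then pulling the resulting unitary equivalences back through $\Omega\circ\mathfrak e\circ\Gamma^{\rm c}$. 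Your argument instead works directly with the identity $q_j=\sum_k a_k^*h_{k,j}a_k$, using $0\leq h_{k,j}\leq 1$ and $\sum_k a_k^*a_k=1$ to show $a_k(\ker q_j)\subset\ker h_{k,j}$ and $a_k({\rm ran}\,q_j)\subset\ker(1-h_{k,j})$, and then the dimension count forced by invertibility of $a_k$ pins each $h_{k,j}$ to be a projection of the same rank as $q_j$. This is complete and self-contained, and it buys independence from the $\Gamma^{\rm c}$/natural-extension machinery in this direction.

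The direction \eqref{cstar ext-1}$\Rightarrow$\eqref{cstar ext-3}, however, has a genuine gap. After the (correct) reduction via Proposition \ref{cstar ext is ext} and Theorem \ref{ext pt} to $\nu=\sum_j\delta_{x_j}h_j$ with weakly independent ranges, your plan is to refute C$^*$-extremality of a non-sharp $\nu$ by exhibiting a proper C$^*$-convex decomposition $\nu=\sum_k a_k^*\nu_k a_k$ into sharp pieces with \emph{invertible} $a_k$, obtained from a ``block-diagonalizable Naimark-type dilation with invertibly weighted columns.'' You correctly diagnose why the naive two-term perturbation fails and why the trine POVM is the obstruction, but you never construct the claimed dilation; the sentence ``here I would lean on the nonclassical convexity machinery\dots and verify invertibility of the columns'' is precisely the step that carries all the difficulty, and its existence is essentially equivalent to what is being proved. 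Note that the obvious decomposition $\nu=\sum_j h_j^{1/2}(\delta_{x_j}1)h_j^{1/2}$ has non-invertible coefficients whenever some $h_j$ is singular, so it does not witness failure of C$^*$-extremality. The paper closes this gap by a different idea entirely: it shows that any proper C$^*$-convex decomposition of the compressed map $\phi_\nu^{\rm c}:C(X)\rightarrow\bh$ arises, via Arveson's Radon--Nikod\'ym theorem applied to the representation $\pi=\bigoplus_j\varrho_{x_j}^{\rm c}$, from a proper C$^*$-convex decomposition of $\nu$ itself (lifted back through $\mathfrak e$ and $\Omega$); hence C$^*$-extremality of $\nu$ forces C$^*$-extremality of $\phi_\nu^{\rm c}$ among all ucp maps $C(X)\rightarrow\bh$, and the classification of such C$^*$-extreme points as multiplicative maps \cite[Proposition 2.2]{farenick--morenz1997}, \cite[Corollary 2.2]{farenick--zhou1998} then forces each $h_j$ to be a projection with $h_jh_{j'}=0$ for $j\neq j'$. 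That transfer-to-the-abelian-algebra step is the missing idea you would need to supply, or else you must actually produce the invertibly weighted sharp dilation you posit.
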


The equivalence of \eqref{cstar ext-1} and \eqref{cstar ext-2} in Theorem \ref{cstar ext pt} is trivial, as the definition of fineness
herein is precisely
the definition of C$^*$-extreme point. Some preparatory results are required to show the third equivalence.

Via the identification of $C(X)$ as a unital
C$^*$-subalgebra of $C(X)\otimes\bh$, each $\phi\in\ucp_\hil(X)$ induces a ucp $\phi^{\rm c}:C(X)\rightarrow\bh$
by way of restriction:
\[
\phi^{\rm c}\,=\,\phi_{\vert C(X)}\,.
\]
In particular, if $\varrho_{x_0}\in\ucp_\hil(X)$ is spectral, then $\varrho_{x_0}^{\rm c}$ is in the character space of $C(X)$.

\begin{definition} {\rm (The $\Gamma^{\rm c}$-transform)} 
For each $\nu\in\povm_\hil(X)$ let $\Gamma^{\rm c}\nu$ denote the ucp map 
$C(X)\rightarrow\bh$ defined by
\[
\Gamma^{\rm c}\nu \,=\, \phi_\nu^{\rm c}\,.
\]
\end{definition}

\begin{proposition} $\Gamma^{\rm c}$ is properly C$^*$-affine. That is, 
\[
\Gamma^{\rm c}\left(\sum_{j=1}^mt_j^*\nu_jt_j\right)
\,=\,
\sum_{j=1}^mt_j^*\left(\Gamma^{\rm c}\nu_j\right)t_j
\]
for all $\nu_1,\dots,\nu_m\in\povm_\hil(X)$ all invertible C$^*$-convex coefficients $t_1,\dots,t_m\in\bh$.
\end{proposition}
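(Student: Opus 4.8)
The plan is to reduce the claim to the classical linearity of scalar integration against complex measures, by first observing that for scalar-valued arguments the operator-valued integral defining $\phi_\nu$ collapses to an ordinary weak integral in which no square root of the principal Radon-Nikod\'ym derivative survives. This is the whole reason $\Gamma^{\rm c}$ is affine in a way that $\Gamma$ is not.

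First I would compute $\Gamma^{\rm c}\nu$ explicitly on a generator $g\in C(X)$, identified with $g\otimes 1\in C(X)\otimes\bh$. Since $g(x)1$ is a scalar multiple of the identity, it commutes with $\left(\frac{d\nu}{d\mu}(x)\right)^{1/2}$, so in the defining identity for the integral the two square roots recombine into $g(x)\frac{d\nu}{d\mu}(x)$. Testing against $\rho=e_{ji}$ (legitimate because states span $\bh$ and the trace pairing is linear in $\rho$) and using $\langle \frac{d\nu}{d\mu}(x)e_j,e_i\rangle=\frac{d\nu_{ij}}{d\mu}(x)$ together with $\nu_{ij}(E)=\int_E\frac{d\nu_{ij}}{d\mu}\,d\mu$, I would obtain that the $(i,j)$ matrix entry of $\Gamma^{\rm c}\nu(g)$ is exactly $\int_X g\,d\nu_{ij}$, where $\nu_{ij}(E)=\langle\nu(E)e_j,e_i\rangle$. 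Thus $\Gamma^{\rm c}\nu(g)$ is nothing but the honest weak integral of the scalar function $g$ against the operator measure $\nu$, an expression manifestly free of the nonlinear square-root operation that obstructs affineness of $\Gamma$ itself.

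Second, given invertible C$^*$-convex coefficients $t_1,\dots,t_m$ and $\nu=\sum_k t_k^*\nu_k t_k$, I would record that the coordinate measures transform by congruence: $\nu_{ij}(E)=\sum_k\langle\nu_k(E)t_ke_j,t_ke_i\rangle$, which upon expanding $t_ke_j$ in the fixed basis $\{e_1,\dots,e_d\}$ becomes a fixed-coefficient linear combination of the coordinate measures $(\nu_k)_{qp}$. Applying linearity of the classical integral of $g$ against complex measures, and then reversing the expansion, recovers precisely the $(i,j)$ entry of $\sum_k t_k^*\left(\Gamma^{\rm c}\nu_k(g)\right)t_k$. As this holds for every pair $(i,j)$ and every $g\in C(X)$, the two ucp maps $C(X)\rightarrow\bh$ coincide, which is the asserted identity $\Gamma^{\rm c}\left(\sum_k t_k^*\nu_k t_k\right)=\sum_k t_k^*\left(\Gamma^{\rm c}\nu_k\right)t_k$.

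I expect the only genuine content to lie in the first step: verifying that the square roots cancel for scalar arguments, so that $\Gamma^{\rm c}\nu$ reduces to the basis-entrywise weak integral against $\nu$. Everything afterward is classical linearity of integration combined with the fact that a C$^*$-convex combination of POVMs acts on each operator value $\nu(E)$ by the congruence $A\mapsto\sum_k t_k^*At_k$, which trivially commutes with scalar integration. I would remark that invertibility of the coefficients is not used anywhere in the computation; the statement is phrased for proper combinations only because that is the form required in the sequel for the analysis of C$^*$-extreme points.
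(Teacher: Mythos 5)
Your proof is correct, but it takes a genuinely different route from the paper's. The paper proceeds in two separate steps, both carried out inside the Radon--Nikod\'ym formalism: first it proves $\Gamma^{\rm c}(t^*\nu t)=t^*(\Gamma^{\rm c}\nu)t$ for a single invertible $t$, using the identity $\frac{d(t^*\nu t)}{d\mu}=t^*\frac{d\nu}{d\mu}t$ together with the chain rule $\frac{d\mu}{d\tilde\mu}$ relating the two induced scalar measures (this is exactly where invertibility of $t$ is invoked, to get mutual absolute continuity of $\mu$ and $\tilde\mu$); second it proves ordinary affineness $\Gamma^{\rm c}(\lambda_1\nu_1+\lambda_2\nu_2)=\lambda_1\Gamma^{\rm c}\nu_1+\lambda_2\Gamma^{\rm c}\nu_2$ by another chain-rule computation, and then combines the two. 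You instead identify $\bigl(\Gamma^{\rm c}\nu(g)\bigr)_{ij}$ once and for all with the classical integral $\int_X g\,d\nu_{ij}$ against the coordinate measures $\nu_{ij}(E)=\langle\nu(E)e_j,e_i\rangle$, after which the whole claim is the linearity of scalar integration under the congruence $A\mapsto\sum_k t_k^*At_k$ applied to the values $\nu(E)$. Your route buys three things: it is more elementary (no chain rule, no square roots of derivatives beyond the one cancellation you isolate); it never needs to apply the integral to the non-normalized measures $t^*\nu t$, which the paper does implicitly; and, as you correctly observe, it shows the invertibility hypothesis is not needed for this proposition --- the identity holds for arbitrary C$^*$-convex coefficients. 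What the paper's route buys is consistency with the machinery it has already set up (Proposition about channels and principal derivatives), and the single-congruence identity \eqref{e:1} in a form that is reused later. The only point worth making fully explicit in your write-up is the extension of the defining identity $\tr\bigl(\rho\int_Xf\,d\nu\bigr)=\int_Xf_\rho\,d\mu$ from states $\rho$ to $\rho=e_{ji}$; you flag this correctly (both sides are linear in $\rho$ and the states span $\bh$), so there is no gap.
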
 

\begin{proof} Suppose first that $\nu\in\povm_\hil(X)$ and that $t\in\bh$ is invertible. We shall show
that
\begin{equation}\label{e:1}
\Gamma^{\rm c}(t^*\nu t)\,=\,t^*\left(\Gamma^{\rm c}\nu\right) t\,.
\end{equation}
To this end, let $\mu=\frac{1}{d}\tr\circ\nu$, $\tilde\nu=\frac{1}{d}\tr\circ(t^*\nu t)$, and $\tilde\mu=\frac{1}{d}\tr\circ\tilde\nu$. Clearly
$\tilde\mu \ll_{\rm ac} \mu$ because $\tilde\mu \ll_{\rm ac} \nu \ll_{\rm ac} \mu$. Conversely, if $\tilde \mu(E)=0$, 
then $t^*\nu(E)t=0$ and so $\nu(E)=0$ since $t$ is invertible. Thus, $\mu\ll_{\rm ac}\tilde\mu$.

For any $\psi\in C(X)$ and $\rho\in\state(\hil)$, we have
\[
\begin{array}{rcl}
\tr\left( \rho \displaystyle\int_X \psi\,d(t^*\nu t)\right)
&=& \displaystyle\int_X \psi\, \tr\left(\rho \displaystyle\frac{d(t^* \nu t)}{d\tilde \mu}\right) \,d\tilde\mu \\ && \\
&=& \displaystyle\int_X \psi\, \tr\left(\rho t^*\displaystyle\frac{d\nu}{d\tilde \mu}t\right) \,d\tilde\mu \\ && \\
&=& \displaystyle\int_X \psi\, \tr\left(\rho t^*\displaystyle\frac{d\nu}{d\mu}t\right) \displaystyle\frac{d\mu}{d\tilde\mu}\,d\tilde\mu \\ && \\
&=& \displaystyle\int_X \psi\, \tr\left(\rho t^*\displaystyle\frac{d\nu}{d\mu}t\right) \,d\mu \\ && \\
&=&\tr\left( \rho \left[t^*(\displaystyle\int_X \psi\,d\nu)t\right]\right)\,.
\end{array}
\]
Hence, \eqref{e:1} holds.

Suppose next that $\nu_1,\nu_2\in\povm_\hil(X)$ and that $\lambda_1,\lambda_2\in\mathbb R_+$ satisfy $\lambda_1+\lambda_2=1$. 
Let $\nu=\lambda_1\nu_1+\lambda_2\nu_2$, and let $\mu,\mu_1,\mu_2\in P(X)$ denote the induced probability measures. Hence, 
$\mu_j \ll_{\rm ac} \mu$ for $j=1,2$. Furthermore, 
for every $\psi\in C(X)$ and $\rho\in\state(\hil)$ we have that
\[
\begin{array}{rcl}
\displaystyle\int_X \psi\, \tr\left(\rho \displaystyle\frac{d\nu}{d\mu}\right)\,d\mu
&=&\displaystyle\int_X \psi\, \tr\left(\rho \left[ \lambda_1  \displaystyle\frac{d\mu_1}{d\mu} \displaystyle\frac{d\nu_1}{d\mu_1}\,+\,
\lambda_2\displaystyle\frac{d\mu_2}{d\mu}\displaystyle\frac{d\nu_2}{d\mu_2}\right]\right)\,d\mu 
\\ && \\
&=& \lambda_1 \displaystyle\int_X \psi\, 
\left[\tr\left(\rho  \displaystyle\frac{d\nu_1}{d\mu_1}\right)\right] \displaystyle\frac{d\mu_1}{d\mu}\,d\mu \,+\,
\lambda_2 \displaystyle\int_X \psi\, \left[\tr\left(\rho  \displaystyle\frac{d\nu_2}{d\mu_2}\right)\right] \displaystyle\frac{d\mu_2}{d\mu}\,d\mu
\\ && \\
&=&  \lambda_1 \displaystyle\int_X \psi\,\tr\left(\rho \displaystyle\frac{d\nu_1}{d\mu_1}\right)\,d\mu_1\,+\,
\lambda_2 \displaystyle\int_X \psi\,\tr\left(\rho \displaystyle\frac{d\nu_2}{d\mu_2}\right)\,d\mu_2\,.
\end{array}
\]
Thus, $\Gamma^{\rm c}(\lambda_1\nu_1+\lambda_2\nu_2)=\lambda_1\Gamma^{\rm c}\nu_1\,+\,\lambda_2\Gamma^{\rm c}\nu_2$. This fact
together with \eqref{e:1} implies that $\Gamma^{\rm c}$ is properly C$^*$-affine.
\end{proof}

A ucp map $C(X)\rightarrow\bh$ has, in principle, many different ucp extensions to $C(X)\otimes\bh$. However, if
a positive linear map $\varphi:C(X)\rightarrow\bh$ has the form 
\begin{equation}\label{nice form}
\varphi(g)=\sum_{j=1}^n g(x_j)b_j^*b_j
\end{equation}
for some distinct
$x_1,\dots,x_n\in X$ and C$^*$-convex coefficients $b_1,\dots, b_n\in\bh$, then the \emph{natural extension} 
$\mathfrak e(\varphi)$ of $\varphi$ from 
$C(X)$ to $C(X)\otimes \bh$ is defined by
\[
\mathfrak e(\varphi)\,=\,\sum_{j=1}^n b_j^{*}\varrho_{x_j}b_j\,.
\]
Observe that of $\varphi_1,\dots,\varphi_m:C(X)\rightarrow\bh$ are ucp maps of the form \eqref{nice form}, then
\begin{equation}\label{e is cstar affine}
\mathfrak e\left(\sum_{j=1}^mt_j^*\varphi_jt_j\right)
\,=\,
\sum_{j=1}^mt_j^*\mathfrak e(\varphi_j)t_j
\end{equation}
for all invertible C$^*$-convex coefficients $t_1,\dots,t_m\in\bh$.

A further direct consequence of the definition: if
$\nu\in\povm_\hil(X)$ has finite support, then 
\begin{equation}\label{fixed pts}
\Omega\circ\mathfrak e\circ\Gamma^{\rm c}\,\nu\,=\,\nu\,.
\end{equation}

We now turn to the proof of Theorem \ref{cstar ext pt}.

\begin{proof} Assume $\nu$ is a C$^*$-extreme point of $\povm_\hil(X)$. 
By Proposition \ref{cstar ext is ext}, $\nu$ is necessarily an extremal POVM. Thus,
by Theorem \ref{ext pt}, there are distinct points $x_1,\dots,x_n\in X$ and operators $h_1,\dots,h_n\in\bh_+$  such that
\[
\nu\,=\,\sum_{j=1}^n \delta_{x_j}h_j ,\quad
\phi_\nu\,=\,\sum_{j=1}^{n}h_j^{1/2}\varrho_{x_j}h_j^{1/2},
\quad\mbox{and}\quad 
\phi_\nu^{\rm c}\,=\,\sum_{j=1}^n \varrho_{x_j}^ch_j\,.
\]
Let $\hil_\pi=\bigoplus_1^n\hil$ and define $\pi:C(X)\rightarrow\bof(\hil_\pi)$ by $\pi=\bigoplus_{j=1}^n\varrho_{x_j}^{\rm c}$.
Thus, $\varphi_\nu^{\rm c}=w^*\pi w$, where $w:\hil\rightarrow\hil_\pi$ is the isometry $w\xi=\bigoplus h_j^{1/2}\xi$.
Let $\mathcal M$ be the commutant of $\pi\left(C(X)\right)$. Because $x_1,\dots, x_n$ are distinct, $\hil_\pi$ is generated by vectors of the
form $\pi(\psi)w\xi$, for $\psi\in C(X)$ and $\xi\in\hil$, and the commutant $\mathcal M$ is given by
$\mathcal M=\bigoplus_1^n \bh$. 

Suppose that $\phi_\nu^{\rm c}=\sum_{i=1}^m t_i^*\varphi_i t_i$ is a proper C$^*$-convex combination of ucp maps
$\varphi_i:C(X)\rightarrow\bh$. By the Radon-Nikod\'ym theorem for completely positive maps \cite[Theorem 1.4.2]{arveson1969},
for each $i$ there is a positive contraction $a_i\in \mathcal M$ such that $t_i^*\varphi_it_i=w^*a_i\pi w$. Because $a_i=\bigoplus_{j=1}^m a_j^i$
for  some $a_j^i\in\bh_+$, we obtain
\[
\varphi_i\,=\,\sum_{j=1}^n \varrho_{x_j}^{\rm c}(b_j^i)^*b_j^i,\quad\mbox{where } b_j^i=(a_j^i)^{1/2}h_j^{1/2}t_i^{-1}\,.
\]
Pass to the natural extension and apply $\Omega$; that is, let
\[
\nu_i\,=\,\Omega\left(\mathfrak e(\varphi_i)\right)\,=\,\sum_{j=1}^n \delta_{x_j}(b_j^i)^*b_j^i\,.
\]
Observe that $\Gamma^{\rm c}\nu_i=\varphi_i$ and
\[
\begin{array}{rcl}
\nu &=&\Omega\circ\mathfrak e(\phi_\nu^{\rm c})\,=\, \Omega\circ\mathfrak e \left( \displaystyle\sum_{i=1}^m t_i^*\Gamma^{\rm c}\nu_i t_i^*\right )
\\ && \\
&=& \Omega\circ\mathfrak e\circ\Gamma^{\rm c} \left( \displaystyle\sum_{i=1}^m t_i^* \nu_i t_i^*\right ) \\ && \\
&=& \displaystyle\sum_{i=1}^m t_i^*\nu_i t_i^*\,.
\end{array}
\]
Because $\nu$ is a C$^*$-extreme point, we obtain $\nu_i=u_i^*\nu u_i$ for some unitaries $u_1$,\dots,$u_m$ in $\bh$. Now apply $\Gamma^{\rm c}$
to obtain $\varphi_i=u_i^*\phi_\nu u_i$ for each $i$. This proves that $\phi_\nu$ is a C$^*$-extreme point in the space of all ucp maps $C(X)\rightarrow\bh$.
Because $C(X)$ is abelian and $\hil$ has finite dimension, all such extreme points are homomorphisms \cite[Proposition 2.2]{farenick--morenz1997},
\cite[Corollary 2.2]{farenick--zhou1998}. It is readily verified that $\phi_\nu^{\rm c}$ is multiplicative if and only if each $h_j$ is a projection and
$h_{j'}h_j=h_jh_{j'}=0$ for $j'\neq j$.

Conversely, suppose that $\nu$ is a projection-valued measure of the form $\nu\,=\,\displaystyle\sum_{j=1}^n \delta_{x_j}q_j$ for some distinct
 $x_1,\dots,x_n\in X$ and pairwise-orthogonal projections $q_1,\dots,q_n\in\bh$.
Suppose that $\nu=\sum_{i=1}^mt_i^*\nu_it_i$ is a proper C$^*$-convex combination. Because $t_i$ is invertible, $\nu_i \ll_{\rm ac} \nu$
(as we showed in the proof of Lemma \ref{res-support}), which implies that $\nu_i$ has finite support. Since $\Gamma^{\rm c}$ is properly
affine, $\phi_\nu^{\rm c}=\sum_{i=1}^m t_i^*\phi_{\nu_i}^{\rm c}t_i$. But because $\nu$ is projection-valued, the ucp map 
$\phi_\nu^{\rm c}:C(X)\rightarrow\bh$ is a homomorphism and, hence, C$^*$-extremal amongst all such ucp maps
\cite[Proposition 1.2]{farenick--morenz1997}. Thus, there are unitaries $u_1,\dots,u_m\in \bh$ such that $\phi_{\nu_i}^{\rm c}=u_i^*\phi_\nu u_i$
for all $i$. Hence, 
\[
\nu_i\,=\,\Omega\circ\mathfrak e\circ\Gamma^{\rm c}\nu_i\,=\,\Omega\left( \sum_{j=1}^n u_i^*q_j\varrho_{x_j}q_ju_i\right) 
\,=\,
u_i^*\nu u_i\,.
\]
That is, $\nu$ is a C$^*$-extreme point of $\povm_\hil(X)$.
\end{proof}

\subsection{Application: sharp measurements generate all quantum measurements through coarsening}

\begin{theorem}\label{km-thm} The C$^*$-convex hull of the C$^*$-extreme points of $\povm_\hil(X)$ is dense in $\povm_\hil(X)$.
That is, every quantum measurement is approximated by coarsenings of sharp measurements.
\end{theorem}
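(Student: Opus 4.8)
The plan is to deduce the theorem from two results already in hand: the density of the finite-support measurements in $\povm_\hil(X)$ (Corollary \ref{approx}) and the identification of the C$^*$-extreme points of $\povm_\hil(X)$ with the sharp measurements (Theorem \ref{cstar ext pt}). The link between the two is the elementary observation that every Dirac measure $\delta_x1$ is sharp: since $(\delta_x1)(E)$ equals the projection $1$ when $x\in E$ and the projection $0$ otherwise, $\delta_x1$ is projection-valued, and so by Theorem \ref{cstar ext pt} it belongs to $\cstarext\,\povm_\hil(X)$.

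First I would verify that every finite-support measurement already lies in $\cstarconv\bigl(\cstarext\,\povm_\hil(X)\bigr)$. Given $\nu=\sum_{j=1}^m\delta_{x_j}h_j$ with nonzero $h_j\in\bh_+$ and $\sum_{j=1}^m h_j=\nu(X)=1$, the operators $a_j=h_j^{1/2}$ are C$^*$-convex coefficients because $\sum_{j}a_j^*a_j=\sum_j h_j=1$, and evaluating at each $E\in\bor(X)$ gives
\[
\sum_{j=1}^m a_j^*(\delta_{x_j}1)a_j\,=\,\sum_{j=1}^m\delta_{x_j}h_j\,=\,\nu\,.
\]
Thus $\nu$ is a C$^*$-convex combination of the sharp Dirac measures $\delta_{x_1}1,\dots,\delta_{x_m}1$, and hence lies in the C$^*$-convex hull of $\cstarext\,\povm_\hil(X)$. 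Combining this inclusion with Corollary \ref{approx} finishes the proof of the displayed statement: the finite-support measurements are dense and are all contained in $\cstarconv\bigl(\cstarext\,\povm_\hil(X)\bigr)$, so the latter set is dense in $\povm_\hil(X)$.

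For the interpretive assertion that every measurement is approximated by \emph{coarsenings} of sharp measurements, the coefficients must be invertible, and this is the only point where a little care is needed, since $h_j^{1/2}$ is invertible only when $h_j$ is. The main obstacle is therefore to upgrade the C$^*$-convex combination above to a \emph{proper} one. I would handle this by a perturbation: replace $\nu$ by the nearby finite-support measurement $\nu_\varepsilon=\sum_{j=1}^m\delta_{x_j}\bigl((1-\varepsilon)h_j+\tfrac{\varepsilon}{m}1\bigr)$, whose coefficients are strictly positive, hence invertible, and which still satisfies $\nu_\varepsilon(X)=1$. The same computation presents $\nu_\varepsilon$ as a proper C$^*$-convex combination---a coarsening---of the sharp measurements $\delta_{x_j}1$; and since $(1-\varepsilon)h_j+\tfrac{\varepsilon}{m}1\to h_j$ forces $\int_X f\,d\nu_\varepsilon\to\int_X f\,d\nu$ for every $f\in C(X)\otimes\bh$, letting $\varepsilon\to0^+$ (together again with Corollary \ref{approx}) exhibits an arbitrary measurement as a limit of coarsenings of sharp measurements.
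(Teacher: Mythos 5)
Your proof is correct and follows essentially the same route as the paper's: both reduce to finite-support measurements (you via Corollary \ref{approx}, the paper via the Kre\v{\i}n--Milman theorem together with Lemma \ref{fin-support}) and then write $\sum_j\delta_{x_j}h_j=\sum_j h_j^{1/2}(\delta_{x_j}1)h_j^{1/2}$ as a C$^*$-convex combination of the sharp Dirac measures $\delta_{x_j}1$. Your $\varepsilon$-perturbation to make the coefficients invertible is a worthwhile addition: the paper's definition of coarsening requires \emph{proper} (invertible) C$^*$-convex coefficients, and its own proof passes over the possibility that some $h_j^{1/2}$ is singular.
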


\begin{proof} The C$^*$-convex hull of the 
C$^*$-extreme points of $\povm_\hil(X)$ is,
by Theorem \ref{cstar ext pt}, the set of all C$^*$-convex combinations of sharp measurements.

Select $\nu\in\povm_\hil(X)$. By the Kre\v{\i}n-Milman Theorem, there is a net $\{\nu_\alpha\}_\alpha\subset\povm_\hil(X)$
such that each $\nu_\alpha$ is a convex combination of extreme points of $\povm_\hil(X)$ and $\nu_\alpha\rightarrow\nu$. Therefore, it
is sufficient to show that every extreme point of $\povm_\hil(X)$ is a coarsening of sharp measurements. Recall that Lemma \ref{fin-support}
asserts that an extreme point $\nu'$ of $\povm_\hil(X)$ must have finite support; that, is
$\nu'$ has the form $\nu'=\displaystyle\sum_{j=1}^m\delta_{x_j}h_j$
for some $x_1,\dots,x_m\in X$ and $h_1,\dots,h_m\in\eff(\hil)$.  By way of the (sharp) scalar-valued quantum probability
measures $E\mapsto \delta_{x_j}(E)1$ and the C$^*$-convex coefficients $a_j=h_j^{1/2}$ we obtain $\nu'=\displaystyle\sum_{j=1}^ma_j^*\delta_{x_j}a_j$,
which is a C$^*$-convex combination of C$^*$-extreme points of $\povm_\hil(X)$.
\end{proof}

\section{Discussion}

Our use of the terms ``classical'' and ``nonclassical'' in this paper inherently refer to ``scalar valued'' and "operator valued.'' As noted by
the referee, there is a highly nonclassical feature to what we are calling classical convexity. For example, a mixed state generally admits many distinct
decompositions as a (classical) convex combination of pure states, a fact which is studied in great detail in \cite{cassinelli--etal1997} and lies
at the heart of many of the difficulties in the interpretation of quantum mechanics \cite{Busch--Lahti--Mittelstaedt-book}.

Likewise, we use the term ``quantum measurement'' interchangeably with positive operator-valued probability measure. In this regard we
are following a common (as in \cite{Bengtsson--Zyczkowski-book}, for example) although not universal practice. A more refined terminology would use the term observable where
we have have used measurement, the term instrument for the next level in which the accompanying state changes are taken into account, and reserve the term
measurement for the highest level in which the entire description of the measurement model is given.

The integral representations afforded by the transforms $\Gamma$ and $\Omega$ are related to 
Fujimoto's cp-convexity \cite{fujimoto1994}. However, Fujimoto's cp-convexity is possibly too abstract to yield results as specific as those of the present paper. A more 
concrete yet still nonclassical notion of convexity is that of ``matrix convexity'' \cite{effros2009,webster--winkler1999}, which is slightly more general
than C$^*$-convexity. If for each $d\in\mathbb N$ one selects a $d$-dimensional Hilbert space $\hil_d$, then one defines 
\[
\povm(X)\,=\,\left(\povm_{\hil_d}(X)\right)_{d\in\mathbb N}\,,
\]
which is a matrix convex set. One may adapt the transforms $\Gamma$ and $\Omega$ to study $\povm(X)$ by way of 
unital completely positive linear maps $\psi:C(X)\otimes\bof(\hil_{d_1})\rightarrow\bof(\hil_{d_2})$
for arbitrary $d_1,d_2\in\mathbb N$. However, the Kre\v{\i}n-Milman Theorem in matrix
convexity \cite{webster--winkler1999} does not extend to C$^*$-convexity, making it necessary to
establish Theorem \ref{km-thm} herein by direct methods.

The transform $\Gamma^{\rm c}$ is well known. In the setting of Hilbert space, a good discussion is in Davies's book
\cite{Davies-book}---indeed, Theorem 4.1.2 of \cite{Davies-book} is especially relevant. A very general theory of $\Gamma^{\rm c}$ is achieved by
Ylinen's work on regular transformation measures \cite{ylinen2009}. Ylinen has not restricted his study to (finite-dimensional)
Hilbert space as we have done; he considers, more generally, arbitrary Banach and dual spaces in his framework. 

To endow $\povm_\hil(X)$ with a natural topology, we have opted to make use of the transforms $\Gamma$ and $\Omega$
rather than, as is done in \cite{Holevo-book2}, using the $\Gamma^{\rm c}$ transform. Our main reason for this preference is because, in our view,
$\Gamma^{\rm c}$ is a hybrid of classical and nonclassical notions, whereas $\Gamma$ is purely nonclassical.

The concept of coarseness is an order relation on $\povm_\hil(X)$ determined by quantum noise. In this regard,
a fine measurement is maximal with respect to the order. There are other orders of interest, such as those 
 related to cleanings \cite{clean2005} and smearings \cite{jencova--pulmannova2008} of measurements and
observables.

We have focused upon the case of compact $X$, but if $X$ is locally compact but not compact, then one may consider the abelian C$^*$-algebra
$C_0(X)$ of all continuous functions $X\rightarrow\mathbb C$ that vanish
at infinity. Let $\tilde X$ shall denote the one-point compactification of $X$. Thus, $C_0(\tilde X)=C(\tilde X)$, the unital C$^*$-algebra
of all continuous functions $\tilde X\rightarrow\mathbb C$. The C$^*$-algebra $C(\tilde X)\otimes \bh$ is isomorphic to the unitisation
$\left( C_0(X)\otimes\bh\right)^{\sim}$ of the non-unital algebra $C_0(X)\otimes\bh$.
If a linear map $\phi_0:C_0(X)\otimes\bh\rightarrow\bh$ is contractive and completely positive, then there is a unital completely positive (ucp) linear
map $\phi:C_0(\tilde X)\otimes\bh\rightarrow\bh$ extending $\phi_0$. Conversely, every ucp map $\phi:C_0(\tilde X)\otimes\bh\rightarrow\bh$
restricts to a contractive completely positive linear map $\phi_0:C_0(X)\otimes\bh\rightarrow\bh$. The set
\[
\left\{ \phi_{\vert C_0(X)\otimes\bh}\,\vert\, \phi \;\mbox{ is a ucp map }
C_0(\tilde X)\otimes\bh\rightarrow\bh\right\}
\]
plays the role of $\ucp_\hil\left(C(Y)\otimes\bh\right)$ for compact Hausdorff $Y$.
Because the passage from $X$ to $\tilde X$ amounts to nothing more than adjoining a unit to a nonunital C$^*$-algebra, one can make slight
reformulations of the results of the paper to cover the case of non-compact $X$.

Finally, because optimisation of the outcome statistics of apparatuses often amounts to minimising a real-valued concave function defined on the 
space $\eff(\hil)$ of quntum effects,
it seems appropriate to mention here that there is a nonclassical analogue, using the integral under study in the present paper, of the classical inequality of Jensen for convex functions.
For every selfadjoint $a\in\bh_{\rm sa}$ with spectrum in an open inteval $J\subset\mathbb R$, 
one may define a normal operator $\vartheta(a)\in\bh$, for a function $\vartheta:J\rightarrow\mathbb C$,
by functional calculus. Coupled with the L\"owner ordering of selfadjoint operators, one has the notion of operator convex function.
If $J\subset\mathbb R$ is an open interval, then a function 
$\vartheta:J\rightarrow\mathbb R$ is \emph{operator convex} if 
\[
\vartheta\left( \alpha a + (1-\alpha)b\right) \,\leq\,
\alpha\vartheta(a)+(1-\alpha)\vartheta(b)\,,
\]
for all $\alpha\in[0,1]$, all selfadjoint operators $a,b\in\bh_{\rm sa}$ with spectrum in $J$, and
all finite-dimensional Hilbert spaces $\hil$.

\begin{theorem}\label{jensen} {\rm (Jensen's Inequality in POVMs \cite{farenick--zhou2007})}
If $J\subset\mathbb R$ is an open interval containing a closed interval
$[\alpha,\beta]$, and if
$\kappa:X\rightarrow \bof(\hil)$ is a Borel-measurable function for which 
$\kappa(x)$ is hermitian and has spectrum contained in
$[\alpha,\beta]$
for every $x\in X$, then for any $\nu\in\povm_\hil(X)$ we have
\[
\vartheta\left(\int_X \kappa\,d\nu\right)\,\leq\,\int_X\vartheta\circ \kappa\,d\nu\,,
\]
for every operator convex function $\vartheta:J\rightarrow\mathbb R$.
\end{theorem}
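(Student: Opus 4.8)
The plan is to recognise both sides of the asserted inequality as the two sides of the operator Jensen inequality applied to a single unital completely positive map, namely the integration map $\Phi_\nu\colon f\mapsto\int_X f\,d\nu$. The essential preliminary is that, although Theorem~\ref{pairing} yields a ucp map only on the continuous functions $C(X)\otimes\bh$, the identical argument shows that $\Phi_\nu$ is unital and completely positive on the larger C$^*$-algebra $\csta$ of \emph{bounded Borel} functions $X\rightarrow\bh$. Indeed, the proof of Theorem~\ref{pairing} uses continuity nowhere: for a positive operator matrix $F=[f_{k\ell}]$ it invokes only the $\mu$-almost-everywhere pointwise positivity of $\sum_{k,\ell}\langle(\tfrac{d\nu}{d\mu})^{1/2}f_{k\ell}(\tfrac{d\nu}{d\mu})^{1/2}e_\ell,e_k\rangle$ followed by integration, and each entry $f_{k\ell}$ may just as well be a bounded Borel function. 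Boundedness guarantees $\nu$-integrability, so $\Phi_\nu$ is a genuine ucp map $\csta\rightarrow\bh$.

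First I would observe that the hypotheses make $\kappa$ a self-adjoint element of $\csta$ whose spectrum is $\overline{\bigcup_{x\in X}\sigma(\kappa(x))}\subset[\alpha,\beta]\subset J$. Functional calculus in $\csta$ is computed fibrewise: for polynomials this is immediate from the pointwise product, and for continuous $\vartheta$ it follows by uniform approximation on $[\alpha,\beta]$, since the norm of $\csta$ is the supremum over $x$. Hence $\vartheta(\kappa)$ is exactly the bounded Borel function $x\mapsto\vartheta(\kappa(x))=(\vartheta\circ\kappa)(x)$, and therefore $\Phi_\nu(\vartheta(\kappa))=\int_X\vartheta\circ\kappa\,d\nu$, while $\vartheta(\Phi_\nu(\kappa))=\vartheta(\int_X\kappa\,d\nu)$ is ordinary functional calculus in $\bh$. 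In these terms the theorem is precisely the statement
\[
\vartheta\bigl(\Phi_\nu(\kappa)\bigr)\,\le\,\Phi_\nu\bigl(\vartheta(\kappa)\bigr).
\]
Next I would invoke the operator Jensen inequality of Hansen and Pedersen: for an operator convex $\vartheta$ on $J$, any unital completely positive linear map $\Phi$ of C$^*$-algebras, and any self-adjoint $a$ with spectrum in $J$, one has $\vartheta(\Phi(a))\le\Phi(\vartheta(a))$. (The two-point definition of operator convexity used here, quantified over all finite dimensions, is equivalent to the hypothesis of that inequality.) Taking $\Phi=\Phi_\nu$ and $a=\kappa$ delivers the displayed inequality, completing the proof.

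The step I expect to be the true obstacle is exactly the continuous-to-Borel passage just finessed. For finite-support $\nu=\sum_{j=1}^m\delta_{x_j}h_j$ the result is elementary: the integral formula gives $\int_X\kappa\,d\nu=\sum_j a_j^*\kappa(x_j)a_j$ with $a_j=h_j^{1/2}$ satisfying $\sum_j a_j^*a_j=1$, and the inequality is the multivariate operator Jensen inequality $\vartheta(\sum_j a_j^*b_j a_j)\le\sum_j a_j^*\vartheta(b_j)a_j$, obtained by the standard dilation trick of extending the isometry $\xi\mapsto\bigoplus_j a_j\xi$ to a unitary (as recalled earlier in the paper) and compressing $\vartheta(\mathrm{diag}(b_1,\dots,b_m))$. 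One cannot, however, reach an arbitrary $\nu$ from this case by approximation, because the topology on $\povm_\hil(X)$ controls only integrals of \emph{continuous} integrands, whereas $\kappa$ is merely Borel; this is precisely why promoting the complete positivity of $\Phi_\nu$ to all of $\csta$ and appealing to operator Jensen at that level is the cleaner route, since it never approximates $\kappa$ itself.
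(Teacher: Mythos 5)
Your argument is correct, but there is nothing in the paper to compare it against: Theorem \ref{jensen} is stated without proof and attributed to the earlier work \cite{farenick--zhou2007}, so the paper itself supplies no argument for this statement. Taken on its own terms, your proof is sound. The key move --- observing that the proof of Theorem \ref{pairing} uses only $\mu$-a.e.\ pointwise positivity of $K^{1/2}FK^{1/2}$ and integration, hence applies verbatim to the unital C$^*$-algebra of bounded Borel functions $X\rightarrow\bof(\hil)$ --- is legitimate (positivity in $\md$ of that algebra is pointwise positivity, bounded Borel functions are $\nu$-integrable since $|f_\rho(x)|\le\|f\|_\infty\,\tr\bigl(\rho\,\tfrac{d\nu}{d\mu}(x)\bigr)\in L^1(\mu)$, and Paulsen's positivity criterion for maps into $\md$ holds for any unital C$^*$-algebra). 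The fibrewise identification of the continuous functional calculus, the containment $\sigma(\Phi_\nu(\kappa))\subset[\alpha,\beta]$ from unitality and positivity, and the appeal to the Choi--Davis--Jensen inequality for unital (completely) positive maps and operator convex functions are all standard and correctly deployed. You are also right that one cannot instead approximate via the finite-support density of Corollary \ref{approx}, since the topology on $\povm_\hil(X)$ only controls integrals of continuous integrands; your route through the ucp property of integration is the natural repair, and it is very much in the spirit of the machinery the paper builds (it reduces the POVM Jensen inequality to the Hansen--Pedersen inequality quoted at the end of the section, via Stinespring dilation of $\Phi_\nu$). The only cosmetic caveat is that you should state explicitly that an operator convex function on an open interval is continuous there, so that $\vartheta\circ\kappa$ is indeed a bounded Borel function.
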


In the case where $X$ is a finite sample space, Therorem \ref{jensen}
is the Hansen--Pedersen--Jensen Inequality \cite{hansen--pedersen2003}: for
any C$^*$-convex combination $\sum_{j=1}^m a_j^*y_ja_j$ of selfadjoint operators $y_1,\dots,y_m\in\bh$ with spectrum in an open interval $J$,
and for any operator convex function $\vartheta:J\rightarrow\mathbb R$, the following operator inequality holds:
\[
\vartheta\left( \sum_{j=1}^m a_j^*y_ja_j\right) \,\leq\, a_j^*\vartheta(y_j)a_j\,.
\]

\section{Conclusion}

In this paper we have studied the structure of the set $\povm_\hil(X)$ of quantum measurements of a quantum system (represented by a $d$-dimensional
Hilbert space $\hil)$ whose possible measurement events is the $\sigma$-algebra $\bor(X)$ of Borel sets of a compact Hausdorff space $X$.
The classical case occurs with $d=1$ and reduces to the study of probability measures. In classical analysis, one may integrate scalar-valued
Borel functions with respect to arbitrary probability measures; so doing produces a positive linear functional on the abelian C$^*$-algebra $C(X)$. 
Herein we have defined an integral so that one may integrate any quantum random variable against an arbitrary positive operator-valued measure, and
this has been achieved in a manner by which one produces a unital completely positive linear map of the homogeneous C$^*$-algebra
$C(X)\otimes\bh$ into the I${}_d$-factor $\bh$. Conversely, we have shown that there is a subclass of ucp maps
$\phi:C(X)\otimes\bh\rightarrow\bh$ such that each induces a positive operator-valued measure $\nu\in\povm_\hil(X)$. 

The transforms $\Gamma$ and $\Omega$ allow one to move between $\povm_\hil(X)$ and $\ucp_\hil(X)$. The transform $\Omega$ is C$^*$-affine,
which is sufficient structure to topologise $\povm_\hil(X)$ using the BW-topology of $\ucp_\hil(X)$ and to show that $\povm_\hil(X)$ is
a compact C$^*$-convex space. We have described precisely the structure of the extremal and C$^*$-extremal quantum measurements. The latter
are precisely the sharp observables, while the former are certain positive operator-valued measures with finite support and which were determined 
for finite and arbitrary $X$ by different methods in some earlier works \cite{chiribella--etal2007,chiribella--etal2010,dariano--etal2005,parthasarathy1999}.
As a consequence of the structure of extreme points and the Kre\v{\i}n-Milman Theorem, 
every quantum measurement that one can perform in principle 
can be approximated by quantum measurements
that one can perform in practice. That is, for every arbitrary quantum measurement $\nu$
(with perhaps infinitely many measurement outcomes) there is a quantum measurement $\nu'$ on a finite subsample space $X'\subset X$
in which the measurement statistics of the subsample approximate those 
of the general measurement $\nu$. 
By the nonclassical Kre\v{\i}n-Milman Theorem (Theorem \ref{km-thm}), 
the approximate $\nu'$ is a coarsening of a finite number of sharp measurements, 
each with measurement events $\bor(X')$.

\section*{Acknowledgement}

We acknowledge the support of the NSERC Discovery, PGS, and USRA programs and 
Nipissing University (North Bay, Canada), where this work was undertaken during an extended
scientific visit of the first author. We are especially indebted to Giulio Chiribella for drawing our attention 
to the works \cite{chiribella--etal2007,chiribella--etal2010}, and to Michael Kozdron and the referee for useful commentary 
on the results herein.


\end{document}